\newcolumntype{x}[1]{>{\centering\arraybackslash\hspace{0pt}}p{#1}}
\tikzset{
  hatch size/.store in=\hatchsize,
  hatch angle/.store in=\hatchangle,
  hatch line width/.store in=\hatchlinewidth,
  hatch size=5pt,
  hatch angle=0pt,
  hatch line width=.5pt,
}
\newcolumntype{L}[1]{>{\raggedright\let\newline\\\arraybackslash\hspace{0pt}}m{#1}}
\newcolumntype{C}[1]{>{\centering\let\newline\\\arraybackslash\hspace{0pt}}m{#1}}
\newcolumntype{R}[1]{>{\raggedleft\let\newline\\\arraybackslash\hspace{0pt}}m{#1}}
\makeatletter \renewenvironment{proof}[1][\proofname]
{\par\pushQED{\qed}\normalfont\topsep6\p@\@plus6\p@\relax\trivlist\item[\hskip\labelsep\bfseries#1\@addpunct{.}]\ignorespaces}{\popQED\endtrivlist\@endpefalse} \makeatother
\theoremstyle{plain}
\newtheorem{lem}{Lemma}
\newtheorem{defn}{Definition}
\newtheorem{asmp}{Assumption}
\newtheorem{prop}{Proposition}
\definecolor{uncblue}{RGB}{75, 165, 211}
\definecolor{nberblue}{RGB}{0, 90, 155}
\definecolor{resgreen}{RGB}{34, 84, 67}
\definecolor{pennblue}{RGB}{0, 44, 119}
\definecolor{pennred}{RGB}{152, 30, 50}
\definecolor{beige}{RGB}{210,194,149}
\newcounter{parentnumber}
\begin{document}
\newcommand{\reals}{\ensuremath{\mathbb{R}}}
\newcommand{\R}{\ensuremath{\mathbb{R}}}
\newcommand{\Z}{\ensuremath{\mathbb{Z}}}
\newcommand{\E}{\ensuremath{\mathbb{E}}}
\newcommand{\expect}{\ensuremath{\mathbb{E}}}
\newcommand{\dint}{\ensuremath{\,\mathrm{d}}}
\newcommand{\demand}{\ensuremath{q}}
\newcommand{\revenue}{\ensuremath{r}}
\newcommand{\control}{\ensuremath{z}}
\newcommand{\qmax}{\ensuremath{q_{max}}}

\newcommand{\fcost}{\ensuremath{k}}

\newcommand{\fcostapp}{\ensuremath{\fcost \cdot \mathbbm{1}_{q(c)>0}}}

\author{Jiaming Wei \and Dihan Zou
}
\date{\today}
\title{Regulating a Monopolist without Subsidy\thanks{
{Wei: Kenan-Flagler Business School, University of North Carolina at Chapel Hill. Email: \href{mailto:Jiaming_Wei@kenan-flagler.unc.edu}{\texttt{Jiaming\_Wei@kenan-flagler.unc.edu.}} Zou: Economics Department, University of North Carolina at Chapel Hill. Email: \href{mailto:dihan@email.unc.edu}{\texttt{dihan@email.unc.edu.}}
We are indebted to Fei Li and Yingni Guo for their invaluable guidance at different stages of this paper.
We thank 
Gary Biglaiser,
Thomas Bollinger,
Joseph Harrington,
David Kim,
Peter Norman,
Mehdi Shadmehr, 
Eran Shmaya,
Francesco Slataper,
Yangbo Song, 
Can Urgun,
Udayan Vaidya,
Peiran Xiao,
Andrew Yates,
Huseyin Yildirim,
Chuan Yu,
Luke Zhao, 
and 
audiences at UNC, Duke, NC State, Stony Brook International Game Theory Conference, Midwest Economic Theory Conference (Penn State), and Theory@Chapel Hill
for their comments. 
All errors are our own.}
}
}

\maketitle

    \vspace{-.3in}

    \begin{abstract}
        We study monopoly regulation under asymmetric information about costs when subsidies are infeasible. 
        A monopolist with privately known marginal cost serves a single product market and sets a price.
        The regulator maximizes a weighted welfare function using unit taxes as sole policy instrument. 
        We identify a sufficient and necessary condition for when laissez-faire is optimal. 
        When intervention is desired, we provide simple sufficient conditions under which the optimal policy is a \emph{progressive price cap}: 
        prices below a benchmark face no tax, while higher prices are taxed at increasing and potentially prohibitive rates.
        This policy combines \textit{delegation} at low prices with \textit{taxation} at high prices, balancing access, affordability, and profitability. Our results clarify when taxes act as complements to subsidies and when they serve only as imperfect substitutes, illuminating how feasible policy instruments shape optimal regulatory design.
\end{abstract}

\begin{description}
{\small
\item[Keywords:] monopoly regulation, no subsidy, delegation, unit tax

\item[JEL Classification Codes:] L43, L5, D82
}

\end{description}

\newpage

\thispagestyle{empty}
\tableofcontents
\addtocounter{page}{-1}

\newpage

\section{Introduction}
Classical models of monopoly regulation assume that regulators can freely deploy the full range of policy instruments including taxes, subsidies, and price caps. In practice, however, these instruments are often restricted. Such restrictions complicate the design of optimal regulation, especially when the regulator lacks full information about production costs. In many environments, subsidies are infeasible due to institutional rules, political externalities, or fiscal constraints. In the European Union, for instance, state aid to firms is tightly restricted, with violations triggering substantial penalties, as in the case \cite{FrancevEC}.\footnote{
France enticed the dominant operator of ferry travel to and from Corsica to offer a richer schedule with a per capita subsidy, but the European Commission considered it inadmissible. The operator, \textit{Soci\'et\'e Nationale Corse-M\'editerran\'ee}, had to repay \euro{220} million of illegal state aid (\cite{legalNotes}). 
}
Subsidizing goods that enter export markets can provoke trade disputes,\footnote{
See \cite{EUTradeWar}. In principle, \cite{WTOsubsidy} prohibits subsidizing exports.
} 
while in developing countries limited public budgets severely constrain governments' scope of subsidy.\footnote{
In many developing countries, governments lack sufficient public funds to subsidize utilities like electricity (\cite{UNReportSubsidy}). 
} These restrictions raise a natural question: How should a monopolist with private cost information be regulated when subsidies are prohibited?

This paper studies monopoly regulation under asymmetric cost information when subsidies are infeasible. We adopt the seminal framework of \cite{BaronMyerson82}: a monopolist serves a single-product market and posts a price, facing downward-sloping demand. 
The firm’s marginal cost is privately observed. The regulator maximizes a weighted welfare function combining consumer surplus and firm profit. Crucially, the regulator cannot use subsidies. Without loss of generality, we focus on unit taxes as the sole policy instrument, which are equivalent to lump-sum excise taxes but provide a natural economic interpretation. A unit tax increases the consumer price relative to the firm’s price, reducing demand. By designing a tax schedule that maps prices into tax rates, the regulator can influence pricing incentives. The policy faces a central tradeoff: absent intervention, the monopolist sets its unregulated monopoly price, generating inefficiently low output; yet while taxes can deter excessive firm prices, they also raise consumer prices and thus depress demand and gains from trade.

We ask two questions: (i) When should the regulator intervene? (ii) When intervention is warranted, what is the optimal policy? 
Our first result provides a necessary and sufficient condition under which the laissez-faire outcome is optimal without regulatory intervention (Proposition \ref{prop_when}). The condition depends on demand curvature, cost distribution, and the regulator’s redistributive preferences. Intuitively, laissez-faire is more likely to be optimal when demand is relatively inelastic, high costs are common, or the regulator places less weight on redistribution. In these environments, taxing high prices would generate limited downward pressure on prices but large distortions in demand, making intervention undesirable.

When intervention is necessary, we find simple sufficient conditions for the optimal policy to be a \textit{progressive price cap}. 
Under this policy, no tax applies to prices below a benchmark level, while prices above this threshold incur a positive tax that is increasing in price, which can be prohibitive for sufficiently high prices.
The benchmark price partitions the price space into two regions: a \textit{delegation region}, where the monopolist retains pricing power, and a \textit{taxation region}, where high prices are deterred. The firm self-selects across these regions based on its cost:
a low-cost firm either chooses laissez-faire prices or the benchmark price, a medium-cost firm pools at the benchmark, and a high-cost firm charges above it, facing taxes and, in extreme cases, exclusion. 
This structure echoes the carrot-and-stick logic of \cite{BaronMyerson82}, 
but without subsidies the regulator cannot reward low prices with transfers;
instead, she offers the weaker reward of price-setting freedom. 
The design of the progressive price cap reflects a three-way tradeoff among \textit{access}, \textit{affordability}, and \textit{profitability}, which are distinct forces in the regulator’s objective. Access concerns how many consumers are ultimately served, capturing the extensive margin of consumer surplus. Affordability refers to the prices paid by those who do purchase, i.e., the intensive margin. Profitability reflects the firm’s ability to cover its costs and remain willing to produce. Improving affordability requires taxing high prices, but doing so makes high-cost firms less profitable and reduces access, since the resulting consumer price moves farther from efficiency. The benchmark price and the accompanying tax schedule in the progressive price cap are chosen to balance these competing objectives.

Our results shed light on the relationship between regulatory instruments. In environments with flexible transfers, the regulator can finely tune incentives, using subsidies to encourage low prices and taxes to deter high ones. By contrast, when subsidies are prohibited, taxes become the sole incentive lever, but their unilateral use generates unbalanced distortions. The absence of subsidies therefore limits the scope for taxation: intervention is warranted only when the welfare gains from discouraging high prices outweigh the distortions created by relying exclusively on taxes.

Relative to the no-transfer setting (where neither taxes nor subsidies are allowed), however, allowing taxes expands the regulator’s ability to shape market outcomes. Taxes provide a more targeted way to discipline prices than coarse tools such as hard price caps, thereby broadening the circumstances in which regulation improves welfare. Taken together, these comparisons illustrate how the availability of subsidies changes the role of taxation: when both instruments are permitted, taxes and subsidies act as complements, jointly shaping incentives across the price spectrum; when subsidies are forbidden, taxes act as an imperfect substitute, shouldering all incentive provision on their own. This perspective clarifies how the structure of feasible policies governs both the scope and the form of optimal intervention.

\paragraph{Related Literature} This paper contributes to the vast literature on monopoly regulation with asymmetric information. We refer to \cite{LaffontTiroleBook93} and \cite{ArmstrongSappington07HIO} for an overview of the field. 
Classical literature begins by modeling different modes of regulator's information disadvantage, including unknown cost (\cite{BaronMyerson82}), unknown demand (\cite{LewisSappington88AER}), and hidden cost-reduction effort (\cite{LaffontTirole86JPE}).
Subsequent literature has been mainly focusing on more severe asymmetric information (\cite{LewisSappington88unknown-demand-and-cost}; \cite{Armstrong99JET})  toward larger uncertainty  
and culminates in robustness
(\cite{GuoShmaya25RobustRegu}; 
\cite{Garrett14GEB};
\cite{MishraPatil2025JET}; \cite{MishraPatilPavan}) and flexibility (\cite{Krahmer25Flexible}).  The mode of asymmetric information in our setting is closest to \cite{BaronMyerson82}. The regulator is uncertain about the firm's marginal cost, but has a prior distribution for policy design.
We extend the literature by introducing policy restrictions alongside informational asymmetry, an aspect that has received comparatively less attention.\footnote{
There is a literature evaluating different policy instruments under specific applications such as pollution abatement (\cite{RobertsSpence1976}, \cite{Baron85RAND}, \cite{Baron85JPubE}).
A recent contribution by \cite{BFV17RANDPriceQuantity} compares price control and quantity control under different modes of information asymmetry.
} Pioneered by \cite{Schmalensee89Good}, the literature on policy restriction focuses on the extreme case of no transfer and highlights the performance of simple instruments (e.g., \cite{BhaskarMcClellanAER23}; \cite{Xia25}).

Closest to our work is \cite{AmadorBagwell22Regulation}, who establish optimality of price cap in general market environments under asymmetric information about production cost without transfers, extending the seminal Lagrangian technique in \cite{AB13ECMA}. 
We bridge the extremes of flexible transfers and no transfer by analyzing an intermediate case: subsidies are prohibited, but taxes remain feasible.  
To our best knowledge, this is the first paper to study optimal monopoly regulation 
under asymmetric information with one-sided transfers.\footnote{
In Section 2.7 of their textbook (pp.\ 151 - 153), \cite{LaffontTiroleBook93} discuss a setting of hidden cost-reduction effort with a budget balance constraint for the firm, and suggest the optimum may be implemented using a ``sliding scale plan.''
}

\section{Environment}\label{section:model}

In a single-product market, a monopolistic firm  produces quantity $q \in [0, \qmax]$ at a constant marginal cost $c$. For $q > 0$, it incurs a fixed cost $\fcost \geq 0$, so its cost function is $C(q) = c\cdot q + \fcost$; otherwise $C(0) = 0$. The marginal cost $c$ is drawn from a twice continuously differentiable distribution $F$ on $[0,1]$, and the realization of $c$, henceforth the firm's type, is privately observed by the firm. The fixed cost $\fcost$ is commonly known, ensuring that asymmetric information pertains solely to the marginal cost relevant for the firm's output decision.\footnote{
This specification also maintains comparability with the canonical models in \cite{BaronMyerson82} and \cite{AmadorBagwell22Regulation}.
} The firm posts a unit price $p$.  

The market demand follows a strictly decreasing and twice continuously differentiable inverse demand function $P\colon [0,\qmax] \rightarrow [0,\overline{v}]$ with the highest willingness-to-pay in the market $\overline{v} > 0$. Given a unit price $p$, the market demand is $q = P^{-1}(p)$, and the resulting consumer surplus is $\int_0^q P(x)\,dx - qP(q)$. 

A regulator maximizes some weighted surplus in the market, but cannot use subsidy. In particular, the only policy instrument that the regulator 
can use is a (nonnegative) unit tax $\tau\colon \R_+ \rightarrow \R_+$, mapping from the firm price to a tax rate.\footnote{
In Lemma \ref{lem:opt_ctrl_reform} of Appendix \ref{apx:main}, we show that it is equivalent to consider lump-sum transfers instead of unit transfer. We use unit tax for the main model to keep the economic interpretation consistent.
}
The regulator has full commitment power on $\tau$.
Given a choice of $\tau$, if the firm price is $p$, then the consumer price is $p + \tau(p)$. 
As a result, the policy generates a \textit{regulated demand function} $q_\tau \colon \R_+ \rightarrow [0,\qmax]$, defined as $q_\tau(p):= P^{-1}(p+\tau(p))$. \
Note that a unit tax $\tau(p) \geq 0$ cannot generate more demand than the market demand without tax, i.e., $q_\tau(p) \leq P^{-1}(p)$. 

We adopt the accounting convention that the firm receives its unit proceeds after the regulator collects the prescribed tax. Faced with the regulated demand $q_\tau(\cdot)$ and privately observing $c$, the firm chooses a price\footnote{
For notational convenience, we assume for now that the firm chooses a (lowest) price in favor of the consumer if there were multiple optimal prices, so that $p(c)$ is a well-defined function. This simplification is not essential since the firm has a unique choice of price under the optimal regulation, as we show later.
} to maximize its profit \begin{equation}
    p(c) = \arg\max_p(p-c)\cdot q_\tau(p) - \fcost\cdot\mathbbm{1}_{q_\tau(p)>0}, \label{primitiveIC}
\end{equation}

One type of regulation commonly seen in the literature (e.g., \cite{BaronMyerson82})  manipulates the demand curve so that it is more sensitive (elastic) to price.
The following example illustrates how different choices of unit tax shape the demand curve.

\paragraph{Example} Suppose the market inverse demand is $P(q) = 1-q$ with $q \in [0,1]$. Consider two policies: \[
\tau_1(p) = \begin{cases}
    0 &\text{ if } p \leq 0.5\\
    0.5 &\text{ if } p > 0.5
\end{cases}\; , \qquad \tau_2(p) = \begin{cases}
    0 &\text{ if } p \leq 0.5\\
    0.5(p-0.5) &\text{ if } p > 0.5
\end{cases}.
\]

\begin{figure}
    \centering
    \begin{tikzpicture}[scale=5.5]
        \footnotesize
            %  %% %% %% Left Panel %% %% %% %%
            \draw [->, thick,gray] (0,0) -- (0,1.2);
            \draw [->, thick,gray] (0,0) -- (1.2,0);
           
            \node at (-0.07, 1.25) {$p$};
            \node at (1, -0.05) {$1$};
            \node at (-0.05,1) {$1$};
            \node at (1.22, -0.05) {$q$};
            \node at (-0.05, 0.5) {$0.5$};
            \node at (0.5, -0.05) {$0.5$};
            \node at (0.15, 0.95) {$P(q)$};
            \node at (0.1, 0.55) {${\color{pennred} P_1(q)}$};
            
            \draw[-, ultra thick, pennred, opacity = 0.5] (0,1)--(0, 0.5) -- (0.5, 0.5) -- (1,0);

            \draw[-, very thick, dotted] (0,1) -- (1,0);

              %  %% %% %% Right Panel %% %% %%
            \draw [->, thick,gray] (1.5,0) -- (1.5,1.2);
            \draw [->, thick,gray] (1.5,0) -- (2.7,0);
           
            \node at (-0.07+1.5, 1.25) {$p$};
            \node at (1+1.5, -0.05) {$1$};
            \node at (-0.05+1.5,1) {$1$};
            \node at (1.22+1.5, -0.05) {$q$};
            \node at (-0.05+1.5, 0.5) {$0.5$};
            \node at (0.5+1.5, -0.05) {$0.5$};
            \node at (0.15+1.5, 0.95) {$P(q)$};
            
            \draw[-, ultra thick, nberblue, opacity = 0.6] (1.5, 1) -- (1.5, 5/6) -- (2, 0.5) -- (1+1.5,0);
            
            \node at (1.6, 0.65) {${\color{nberblue} P_2(q)}$};
            \draw[-, dashed, gray] (1.5, 0.5) -- (2, 0.5);
            \draw[-, very thick, dotted] (1.5,1) -- (2.5,0);
            \end{tikzpicture}
    \caption{Examples of regulated inverse demand curves. {\footnotesize The left panel compares the regulated inverse demand under $\tau_1$ (red solid line) against the unregulated one (black dotted line); the right panel compares the regulated inverse demand under $\tau_2$ (blue solid line) against the unregulated one (black dotted line)}}
    \label{fig:tax_example}
\end{figure}
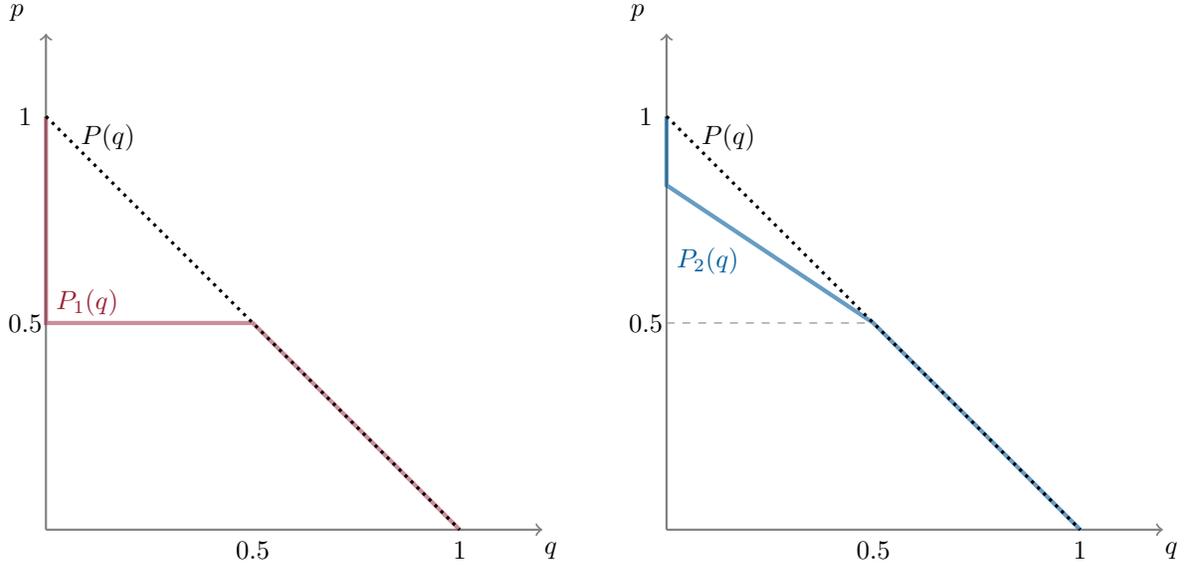
Figure \ref{fig:tax_example} compares the inverse demand curves generated by the two policies. Policy $\tau_1$ is an example of a (hard) price cap. To see this, the regulated demand is \[
q_1(p) 
= \begin{cases}
    1 - p & p \leq 0.5\\
    0 & p > 0.5
\end{cases}.
\]
Posting a price $p > 0.5$ leads to zero demand, while posting a price $p\leq 0.5$ leads to the unregulated market demand. As a result, any type of firm that makes positive profit does not charge a price above $0.5$. In the left panel of Figure \ref{fig:tax_example}, we denote the firm's inverse demand curve corresponding to $q_1$ as $P_1$. This is an extreme case of manipulation the demand curve: the demand is extremely sensitive to the price at one point.

Policy $\tau_2$ is a hybrid of a price cap and progressive tax, leading to a regulated demand of \[
q_2(p)
= \begin{cases}
    1 - p & p \leq 0.5\\
    1.25 - 1.5p & p \in (0.5, \frac{5}{6}]\\
    0 & p > \frac{5}{6}
\end{cases}.
\]
In the right panel of Figure \ref{fig:tax_example}, we denote the firm's inverse demand curve corresponding to $q_2$ as $P_2$.
If the firm charges a price $p \leq 0.5$, it will enjoy the unregulated market demand according to $P(q)$; if the firm charges $p > 0.5$, leading to $q < 0.5$, it will face a segment that is more sensitive to price change than the original market demand.

\medskip

\paragraph{Timing} The timing of the model is summarized as follows:
\begin{itemize}
    \item[1.] The regulator commits to a unit tax policy $\tau$.
    \item[2.] Nature draws the firm's cost $c$ according to $F$.
    \item[3.] The firm observing its cost $c$ and the policy $\tau$, posts a price $p$.
    \item[4.] Consumers decide whether to purchase, after which transfers and payoffs are realized.
\end{itemize}

\paragraph{Regulator's Objective} The regulator maximizes the weighted surplus. Let the welfare weight on firm profit ($\Pi$) to be $\alpha \in [0,1]$, and the welfare weight on consumer surplus (CS) be $1$, then 
the regulator maximizes $\text{CS} + \alpha \Pi$.
Since the welfare weight on consumer is weakly larger, it is without loss to assume that the regulator rebates the tax revenue to the consumers. Given each price $p$ chosen by the firm with type $c$, and denote the firm's value function to be  $\Pi_\tau(c) := \max_x (x-c)q_\tau(x)  - \fcost\cdot\mathbbm{1}_{q_\tau(x)>0}$, the weighted surplus is 
\begin{align*}
    &\underbrace{\int_{0}^{q_\tau(p)} P(x)\, dx}_{\text{total surplus}} - \underbrace{[p+\tau(p)]q_\tau(p)}_{\text{transfers from consumers}} + \underbrace{\tau(p)q_\tau(p)}_{\text{rebate to consumers}}  +\quad \alpha \cdot \Pi_\tau(c)  \\
    &= \underbrace{\int_{0}^{q_\tau(p)} P(x)\, dx - cq_\tau(p)  - \fcost \cdot \mathbbm{1}_{q_\tau(p) >0} }_{\text{trading surplus}} \quad -\quad  (1-\alpha)\cdot \Pi_\tau(c).
\end{align*}
 There are two observations on the above transformation. To begin with, the weighted surplus is the trading surplus less a fraction of firm's profit. When $\alpha = 0$, the regulator has the maximum redistributive motive and only cares about consumer surplus, penalizing any profit one-for-one in the objective. When $\alpha = 1$, the penalizing term vanishes, resulting in a utilitarian objective that is the same as the trading surplus. The cases with $\alpha \in (0,1)$ captures regulator's relative preference of consumer surplus protection versus efficient trade. 

The other observation is that $\tau(p) = P(q_\tau(p)) - p \geq 0$, according to the definition of the regulated demand $q_\tau$. Because $\tau$ and $q_\tau$ are one-to-one, it is equivalent for the regulator to choose $q_\tau$ instead of $\tau$.
Denote the consumer value as $V(q):=  \int_{0}^q P(x) dx$. Recall that the firm's pricing strategy $p(c)$ is defined by \eqref{primitiveIC}. The regulator solves the following problem:

\begin{equation}
      \max_{q_\tau \colon \mathbb{R_{+}} \rightarrow [0,\qmax]}\int_0^1 \left[V(q_\tau(p(c))) -cq_\tau(p(c)) -  \fcost \cdot \mathbbm{1}_{q_\tau(p) >0}  - (1-\alpha)\cdot \Pi_\tau(c)\right]dF(c), \label{primitiveOBJ}
  \end{equation}
subject to firm's optimal pricing \eqref{primitiveIC}, firm making nonnegative profit\begin{equation}
    \Pi_\tau(c) \geq 0, \forall c \in [0,1], \label{primitiveIR}
\end{equation}
and feasibility for regulated demand \begin{equation}
    q_\tau(p) \leq P^{-1}(p), \forall c \in [0,1] .\label{primitiveFS}
\end{equation}
 Note that condition \eqref{primitiveFS} is equivalent to $\tau(p(c)) \geq 0$ for all $c\in [0,1]$, and is satisfied by previous policy examples. If subsidy were allowed ($\tau < 0$), the regulated demand would exceed the unregulated one, i.e., firm's inverse demand curve would be above $P(q)$ at least for some $q$.\footnote{
 For the examples in Figure \ref{fig:tax_example}, the firm's inverse demand curves are always \textit{below} the market demand curve $P(q)$.
 }

 \subsection{Preliminary analysis}

 Following the revelation principle, we can focus on the direct mechanism in which the firm reports its type truthfully to the regulator, and then the regulator recommends the price and chooses the demand. 
 Slightly abusing the notation, define the demand policy as $\demand\colon [0,1] \rightarrow [0,\qmax]$ and the pricing policy as $p \colon [0,1] \rightarrow \R_+$, both mapping from the type space. Denote the value function of the firm as  $\Pi\colon [0,1] \rightarrow \R_+$, given by \[
 \Pi(c) := \max_x (p(x)-c)q(x)  - \fcost\cdot\mathbbm{1}_{q(x)>0}
 \]
 Truth-reporting requires the firm to optimally report $x = c$ in the above problem, which leads to $\Pi(c) = (p(c)-c)q(c) + \fcost\cdot\mathbbm{1}_{q(c)>0}, \forall c \in [0,1]$.
 Therefore, the regulator's problem is \begin{equation}
      \max_{p,\demand}\int_0^1 \left(V(\demand(c)) - cq(c) -  \fcost \cdot \mathbbm{1}_{q(c) >0}  - (1-\alpha)\Pi(c) \right)dF(c)\label{directOBJ}
  \end{equation}
  subject to, for all $c\in [0,1]$,
  \begin{align}
      c &\in \arg \max_x (p(x) - c) \demand(x) - \fcost\cdot\mathbbm{1}_{q(x)>0} 
      \tag{IC} \label{directIC}\\
      0 &\leq \Pi(c)\tag{IR}\label{directIR}\\
      \demand(c) &\leq P^{-1}(p(c)). \tag{NS} \label{directFS}
  \end{align}
Problem \eqref{directOBJ} is standard except for the unconventional no-subsidy constraint \eqref{directFS}, which is a type-by-type policy constraint. This constraint fundamentally alters the optimal solution relative to the conventional regulation literature.
To gain intuition about its role, consider a multiplier function $\lambda(c)\ge 0$ associated with \eqref{directFS} for each $c\in[0,1]$, interpreted as the shadow cost of tightening the constraint.
A simplified version of the optimal condition for Problem \eqref{directOBJ} can be written as
\begin{equation}
    P(q(c)) = c \quad + \quad \underbrace{(1-\alpha)\cdot\frac{F(c)}{f(c)}}_{\text{monetary distortion}} + \underbrace{\frac{\int_0^c\lambda(x)\, dx}{f(c)}}_{\text{allocative distortion}}. \tag{MB-MC}\label{pseudoFOC}
\end{equation}
The condition describes the marginal tradeoff for a generic type $c$:
 increasing production by one unit yields the marginal benefit (the value of the marginal consumer served) on the left-hand side 
 and
 incurs the marginal cost on the right-hand side. The marginal cost can be decomposed into three components.
 
 \vspace{.1in}\noindent{\it Production cost.} The first term, $c$, is the firm's marginal {production cost} and is present even under first-best setting with complete information. 

 \vspace{.1in}\noindent{\it Monetary distortion (``information rent'').}
 With private information, 
the regulator must compensate stronger types (those with cost below $c$) to maintain truthful reporting. 
Requiring type $c$ to produce one more unit forces the regulator to give each stronger type an additional dollar.
Because the firm's profit function is quasi-linear whenever output is positive, these transfers do not distort the allocation through the firm's decision. Instead, they enter one-for-one as a marginal cost to the regulator. The factor $(1-\alpha)$ reflects how much the regulator dislikes transfers to the firm. When $\alpha = 1$ (utilitarian), the regulator does not internalize these transfers into marginal cost.

 \vspace{.1in}\noindent{\it Allocative distortion.} The no-subsidy constraint \eqref{directFS}, however, introduces an additional allocative distortion captured by the third term. Unlike monetary distortion, this distortion arises even when $\alpha=1$. Although the regulator would like to compensate stronger types monetarily, doing so tightens their no-subsidy constraint. When \eqref{directFS} binds, the only remaining way to deliver compensation is to distort the allocation to generate more producer surplus, while reducing total trading surplus. The shadow cost $\lambda(x)$ measures, for each type $x<c$, the cost of such distortions, which the regulator must internalize in the marginal tradeoff.

 This new dichotomy between monetary and allocative distortions makes the optimal regulation markedly different from the classical results. Traditional monopoly regulation with flexible transfers (e.g., \cite{BaronMyerson82}) incentivizes low prices by granting additional demand (with subsidies) to low-cost reports and reducing demand (with taxes) for high-cost reports. The resulting regulation is distortionary in demand relative to ex post efficiency (marginal cost pricing: $P(q(c)) = c$).

In contrast, subsidies are prohibited by \eqref{directFS}, and the allocative distortion becomes an inherent part of the marginal cost. We show that in some market environments no intervention is optimal (Proposition \ref{prop_when}), particularly when monetary distortion is small and allocative distortion is large. We also identify simple sufficient conditions under which the conventional logic still applies (Proposition \ref{prop_main}), with one modification. For high-cost reports, the regulator continues to impose a tax, reducing demand relative to the unregulated market; in this region the no-subsidy constraint \eqref{directFS} is slack. For low-cost reports, however, the regulator cannot provide subsidies or expand demand beyond market demand, so constraint \eqref{directFS} must bind. Under the sufficient conditions, \eqref{directFS} becomes single-crossing: there exists a cost cutoff determining where the constraint binds.

  \bigskip

   Another important observation from Problem \eqref{directOBJ} is that the policy instrument of unit tax is equivalent to a lump-sum excise tax. To see this, note that the regulator can design $(q, \Pi)$ instead of $(q, p)$. For example, whenever $q(c) = 0$, constraint \eqref{directFS} trivially holds. Whenever $q(c) > 0$, constraint \eqref{directFS} is equivalent to $P(q(c)) \geq p(c) = (\Pi(c) + cq(c) + \fcost)/q(c)$, or \[
   q(c)\cdot [P(q(c)) - c] - \fcost \geq \Pi(c).
   \]
   Any difference between the left-hand side (unregulated producer surplus) and the right-hand side (firm's profit) is the lump-sum tax collected by the government. We show the equivalence of regulation formally in Appendix \ref{apx:main}.

\medskip

Before diving into the main results, we state our assumption on the demand curve.
\begin{asmp}\label{assumption_concave}
    (i) The revenue function $qP(q)$ is strictly concave in $q$. (ii) $\lim_{q\rightarrow 0}P(q) = \overline{v}$ and $\lim_{q\rightarrow q_{max}}P(q) = 0$. (iii) There exists $q \in [0, q_{max}]\colon qP(q) > k$. 
\end{asmp}

To illustrate the rationale behind this assumption, 
we study a benchmark case with no intervention, i.e., $\tau(p) \equiv 0$; or equivalently, $\demand(c) = P^{-1}(p(c))$ for all $c$. We call this policy \textit{laissez-faire}. Given Assumption \ref{assumption_concave}, the firm's strategy is well-behaved under laissez-faire policy: conditions (i) and (ii) guarantee that the optimal production is characterized by the first-order condition, is unique, and is monotone in $c$. Condition (iii) bounds the fixed cost from above so that it is profitable for the lowest-cost firm ($c=0$) to produce. 
\begin{lem}[Laissez-Faire Baseline]
    \label{lem:lf} There is a unique laissez-faire production strategy\[
    q_{LF}(c) = \begin{cases}
        \hat{q}(c), &c\leq \overline{c}_{LF}\\
        0, &c> \overline{c}_{LF}
    \end{cases}
    \] for some cutoff $\overline{c}_{LF} \in (0, 1]$, where $\hat{q}(c)$ solves $P(q) = c - qP'(q)$ for each $c$.
  Furthermore, $q_{LF}(c)$ strictly decreases in $c$ for $[0, \overline{c}_{LF})$.
\end{lem}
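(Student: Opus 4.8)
The plan is to reformulate the firm's laissez-faire problem as a choice of quantity rather than price, and then treat it as a strictly concave maximization with a discrete participation (shut-down) margin. Writing $R(q) := qP(q)$ for the revenue function, under laissez-faire the firm with cost $c$ solves $\max_{q \in [0, \qmax]} R(q) - cq - \fcost \cdot \mathbbm{1}_{q>0}$, since choosing a price is equivalent to choosing a quantity along the unregulated demand $P^{-1}$. By Assumption \ref{assumption_concave}(i), $R$ is strictly concave, so $R(q) - cq$ is strictly concave in $q$ and admits a unique maximizer on $[0, \qmax]$. Its marginal revenue $R'(q) = P(q) + qP'(q)$ is therefore strictly decreasing; by Assumption \ref{assumption_concave}(ii) and continuity of $P'$ it satisfies $R'(0) = \overline{v} > 0$ and $R'(\qmax) = \qmax P'(\qmax) < 0$. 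Hence for every $c < \overline{v}$ the objective is increasing at $q=0$ and decreasing at $q=\qmax$, so the conditional maximizer $\hat{q}(c)$ is interior and uniquely pinned down by the first-order condition $R'(q) = c$, i.e.\ $P(q) = c - qP'(q)$; for $c \geq \overline{v}$ the conditional maximizer is $q=0$.

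Next I would handle the participation margin created by the fixed cost $\fcost$. Define the profit at the conditional optimum, $\pi(c) := R(\hat{q}(c)) - c\hat{q}(c) - \fcost$, so that the firm produces iff $\pi(c) \geq 0$. By the envelope theorem $\pi'(c) = -\hat{q}(c) < 0$ wherever $\hat{q}(c) > 0$, so $\pi$ is continuous and strictly decreasing on $[0, \overline{v})$, while $\pi(c) = -\fcost \leq 0$ for $c \geq \overline{v}$. Assumption \ref{assumption_concave}(iii) gives $\pi(0) = \max_q R(q) - \fcost > 0$, so production is strictly profitable at $c=0$. Two cases then arise: if $\pi(1) \geq 0$ (which requires $\overline{v} > 1$ and $\fcost$ small), production is profitable for all types and I set $\overline{c}_{LF} = 1$; otherwise, since $\pi$ is continuous and strictly decreasing with $\pi(0)>0$, the intermediate value theorem delivers a unique $\overline{c}_{LF} \in (0,1)$ with $\pi(\overline{c}_{LF}) = 0$. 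In either case $\pi(c) > 0$ for $c < \overline{c}_{LF}$ and $\pi(c) < 0$ for $c > \overline{c}_{LF}$, so the firm produces $\hat{q}(c)$ below the cutoff and shuts down above it; at $c = \overline{c}_{LF}$ the firm is indifferent, and the tie-breaking convention favoring consumers selects the positive quantity, which is why the cutoff belongs to the producing region. This yields exactly the stated form of $q_{LF}$, and uniqueness of the strategy follows from uniqueness of the strictly concave maximizer.

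Finally, monotonicity is immediate from the characterization. On $[0, \overline{c}_{LF})$ we have $\hat{q}(c) = (R')^{-1}(c)$, and since $R'$ is strictly decreasing its inverse is strictly decreasing; equivalently, differentiating $R'(\hat{q}(c)) = c$ gives $\hat{q}'(c) = 1/R''(\hat{q}(c)) < 0$ by strict concavity. Hence $q_{LF}$ strictly decreases in $c$ on $[0, \overline{c}_{LF})$.

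I expect the main obstacle to be the participation margin rather than the interior optimization: the fixed cost makes the firm's decision discontinuous (a jump from $\hat{q}(\overline{c}_{LF}) > 0$ down to $0$), so the care goes into showing that $\pi$ is well-defined, continuous, and strictly monotone across the region where $\hat{q}$ is interior, so that a single cutoff cleanly separates production from shut-down. The interior analysis and the monotonicity claim, by contrast, are routine consequences of strict concavity of $R$.
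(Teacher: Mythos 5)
Your proof is correct and follows essentially the same route as the paper: recast the price choice as a quantity choice, use strict concavity of $qP(q)$ (Assumption \ref{assumption_concave}(i)--(ii)) to get a unique interior solution of the first-order condition that is strictly decreasing in $c$, and then use the envelope theorem plus Assumption \ref{assumption_concave}(iii) to obtain a single shut-down cutoff $\overline{c}_{LF}\in(0,1]$. The only cosmetic difference is that you establish interiority ($\hat{q}(c)<\qmax$) up front via $R'(\qmax)=\qmax P'(\qmax)<0$, whereas the paper verifies feasibility in a separate step at the end.
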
 
All proofs are in the \hyperref[appendix]{Appendix}. 
Lemma \ref{lem:lf} implies that, under the laissez-faire policy, the firm with higher marginal cost $(c > \overline{c}_{LF})$ does not produce because it cannot recover the fixed cost. The firm with lower marginal cost $(c < \overline{c}_{LF})$ produces and chooses a unique pricing strategy $P(q_{LF}(\cdot))$ equal to its marginal cost plus some markup. $P(q_{LF}(\cdot))$ will frequently appear as a baseline for comparison with the optimal regulation.

\subsection{Discussion}

We restrict the regulator to using a unit tax, ruling out piece-rate or lump-sum subsidies. As shown in Lemma \ref{lem:opt_ctrl_reform}, an excise (lump-sum) tax is equivalent to a unit tax in our environment. Without the transfer constraint, the regulator could employ a fully flexible policy combining both taxes and subsidies to manipulate the price sensitivity of the demand curve and intervene at virtually all price levels. Our formulation is therefore equivalent to \cite{BaronMyerson82} with an additional no-subsidy constraint, and also equivalent to \cite{AmadorBagwell22Regulation} once their ``no-tax’’ restriction is removed (see Lemma \ref{lem:opt_ctrl_reform}).

An alternative interpretation of the model is as a procurement problem. The regulator purchases output from the monopolist and either consumes it directly (e.g., national defense) or \emph{assigns} it to consumers (e.g., healthcare services). In both interpretations, transfers outside of the transaction are not allowed. Under this view, the unit tax corresponds to demand rationing, and the no-subsidy constraint \eqref{directFS} requires that procurement not exceed market demand.

Finally, we assume that the revenue function induced by market demand is strictly concave, that the demand curve reaches both its upper and lower bounds, and that the distribution of cost types is twice continuously differentiable. These assumptions are stronger than those in \cite{BaronMyerson82} and weaker than those in \cite{AmadorBagwell22Regulation}. We regard them as mild: they still accommodate a wide range of market primitives while ensuring a well-behaved laissez-faire pricing schedule. They also deliver continuity, existence, and uniqueness of the optimal regulatory policy. In Lemma \ref{lem:existence}, we show that it is without loss of generality to restrict attention to deterministic policies. 

\section{Optimal Regulation}\label{sec:opt_R}
In this section, we first report a sufficient and necessary condition for the laissez-faire economy without intervention to be optimal. When the condition fails, we characterize the optimal policy combining delegation and taxation.

Roughly speaking, the condition that warrants no intervention is more likely to hold if the policy environment has one or more of the following features: (a) the demand curve is relatively insensitive to price, so that the monopoly markup increases with cost; (b) the density of cost distribution concentrates more on high costs;  or (c) the regulator cares less about redistribution.

When the condition fails, we find simple sufficient conditions for the optimal intervention to be a \textit{progressive price cap}: The regulator sets a \textit{benchmark price} $\hat{p} \geq 0$. If the firm prices below $\hat{p}$, the regulator imposes no tax ($\tau(p) = 0$), and the regulated demand coincides with the market demand, i.e., $q_\tau(p) = P^{-1}(p)$. If the firm prices above $\hat{p}$, the regulator imposes a positive tax $\tau(p) > 0$ that is increasing in $p$, and the regulated demand is strictly less than the laissez-faire demand. Different types of the firm self-select into the two policy regions, respectively.

\subsection{Laissez-faire as the optimal regulation}\label{sec:lf}

The following proposition reports the sufficient and necessary condition for the laissez-faire policy to be optimal. 

\begin{prop}\label{prop_when}
    It is optimal to use the laissez-faire policy $(q_{LF}(c), P(q_{LF}(c)))$ if and only if \begin{equation}
        [P(q_{LF}(c)) - c]f(c) - (1-\alpha)F(c) \text{ is nondecreasing in } c, \forall c \in [0,\overline{c}_{LF}]. \label{lf_optimal} \tag{LF}
    \end{equation}
\end{prop}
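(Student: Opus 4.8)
The plan is to reduce the regulator's problem to the maximization of a concave functional of the demand schedule alone, and then to read off condition \eqref{lf_optimal} as the sign requirement on the multiplier of the no-subsidy constraint \eqref{directFS}, which under laissez-faire binds at every producing type. First I would use the envelope characterization of \eqref{directIC}: along any incentive-compatible schedule $\Pi'(c)=-q(c)$, and since raising rents only lowers the objective while \emph{tightening} \eqref{directFS}, it is without loss to evaluate each schedule at the minimal rent $\Pi(c)=\int_c^{\overline{c}_{LF}}q(s)\,ds$ (with $q\equiv0$ above $\overline{c}_{LF}$, since for $c>\overline{c}_{LF}$ one has $\max_q\pi(q,c)<0$, so \eqref{directFS} and \eqref{directIR} cannot both hold with positive output). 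Substituting and integrating by parts turns the objective into
\[
W[q]\;=\;\int_0^{\overline{c}_{LF}}\!\Big[V(q(c))-c\,q(c)-(1-\alpha)\tfrac{F(c)}{f(c)}q(c)\Big]f(c)\,dc,
\]
which is concave in $q$ because $V$ is concave and the remaining terms are linear. The decisive observation is that, since $V'=P$, its gradient at the laissez-faire schedule is exactly the left-hand side of \eqref{lf_optimal}:
\[
\nabla W[q_{LF}](c)\;=\;\big[P(q_{LF}(c))-c\big]f(c)-(1-\alpha)F(c)\;=:\;H(c).
\]

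Next I would recast \eqref{directFS} in a form adapted to perturbations around laissez-faire. Writing $\pi(q,c):=qP(q)-cq-k$ for the unregulated producer surplus, \eqref{directFS} together with \eqref{directIR} gives $\int_c^{\overline{c}_{LF}}q\le\pi(q(c),c)$. Because $\pi(\cdot,c)$ is concave and is maximized at $q_{LF}(c)$ (the first-order condition $P(q)=c-qP'(q)$ of Lemma \ref{lem:lf} is precisely $\pi_q(q_{LF}(c),c)=0$), subtracting the laissez-faire identity $\int_c^{\overline{c}_{LF}}q_{LF}=\pi(q_{LF}(c),c)$ yields, for every feasible $q$,
\[
D(c)\;:=\;\int_c^{\overline{c}_{LF}}\big(q(s)-q_{LF}(s)\big)\,ds\;\le\;0\qquad\text{for all }c\in[0,\overline{c}_{LF}],
\]
with $D(\overline{c}_{LF})=0$.

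For the sufficiency direction, concavity gives $W[q]-W[q_{LF}]\le\int_0^{\overline{c}_{LF}}H(c)\,(q-q_{LF})(c)\,dc$, and integrating by parts (using $q-q_{LF}=-D'$) this equals $H(0)D(0)+\int_0^{\overline{c}_{LF}}H'(c)D(c)\,dc$. Here $H(0)=P(q_{LF}(0))f(0)\ge0$ and $D(0)\le0$, so the boundary term is nonpositive; and under \eqref{lf_optimal}, $H'\ge0$ together with $D\le0$ makes the integral nonpositive. Hence $W[q]\le W[q_{LF}]$ for every feasible schedule, so laissez-faire is optimal.

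For necessity I would show that when \eqref{lf_optimal} fails---say $H$ is strictly decreasing on some $[a,b]\subset(0,\overline{c}_{LF})$---a suitable perturbation strictly improves the objective. Take $q_t=q_{LF}+t\psi$ with small $t>0$, where $\psi$ is smooth, supported in $[a,b]$, \emph{balanced} ($\int\psi=0$), and ordered so that $\Psi(c):=\int_c^{\overline{c}_{LF}}\psi\le0$; balancing forces $\Psi\equiv0$ outside $[a,b]$, hence $D(0)=0$, and the first-order change of the objective (evaluated at minimal rent, so that it equals $W$) is $t\int_a^bH'(c)\Psi(c)\,dc>0$ because $H'<0$ and $\Psi\le0$ there. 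The one genuinely delicate point---and the step I expect to be the main obstacle---is verifying that $q_t$ satisfies \eqref{directFS}: since $q_{LF}$ maximizes $\pi(\cdot,c)$, any change in $q$ lowers $\pi$ only at second order, so \eqref{directFS} is tight to second order and a naive bump would violate it near the edge of its support. This is exactly why $\psi$ must be balanced (to keep $D(c)=O(t)$ strictly negative on the interior while $D(0)=0$) and must vanish smoothly with small slope at $a,b$ (so the required second-order slack $|D(c)|\gtrsim(q_t-q_{LF})(c)^2$ is met there); monotonicity of $q_t$ and \eqref{directIR} then hold automatically for small $t$. Establishing this feasibility rigorously, rather than the sign computations, is where the work lies.
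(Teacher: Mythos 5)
Your proof is correct and reaches the result by a genuinely different, more elementary route than the paper. The paper reformulates the problem as an optimal control problem (Lemma \ref{lem:opt_ctrl_reform}), writes down Pontryagin/Lagrangian conditions with a multiplier $\gamma(c)$ on the no-subsidy constraint, observes that at the laissez-faire allocation the term $P(q)+qP'(q)-c$ multiplying $\gamma$ vanishes, and reads off from $\lambda_q'\equiv 0$ the candidate cumulative multiplier $\Gamma(c)+\beta=[P(q_{LF}(c))-c]f(c)-(1-\alpha)F(c)$; condition \eqref{lf_optimal} is then exactly the requirement that this candidate be realizable by a nonnegative $\gamma$, with both sufficiency and necessity imported from cited control-theory theorems. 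You instead work primally: after the same rent-minimization and integration by parts, the quantity $H(c)=[P(q_{LF}(c))-c]f(c)-(1-\alpha)F(c)$ appears as the gradient of the reduced concave objective at $q_{LF}$ rather than as a cumulative multiplier (dual descriptions of the same object), the relaxed constraint $D(c)=\int_c^{\overline{c}_{LF}}(q-q_{LF})\,ds\le 0$ falls out of the fact that $q_{LF}(c)$ maximizes unregulated producer surplus, and sufficiency follows from one concavity inequality plus integration by parts together with $H(0)\ge 0$ and $D(\overline{c}_{LF})=0$. Your necessity argument is constructive where the paper's is not, and you correctly isolate the one delicate point: \eqref{directFS} binds at every producing type under laissez-faire and is only second-order slack to perturbations of $q$, so an improving deviation must be \emph{balanced}, with cumulative effect $\Psi$ strictly negative on the interior of its support and vanishing quadratically (matching the quadratic loss in producer surplus) at the endpoints. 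Your prescription (an equal-area positive-then-negative bump vanishing linearly at $a$ and $b$) does satisfy the required bound $\Psi(c)\lesssim -t\,\psi(c)^2$ for small $t$, preserves monotonicity of $q_t$ because $q_{LF}'$ is continuous and strictly negative on the compact support, and yields a first-order gain $t\int_a^b H'\Psi\,dc>0$ against a second-order loss; since $H$ is $C^1$, failure of \eqref{lf_optimal} indeed supplies an interior interval with $H'<0$. What your approach buys is self-containedness (no appeal to necessity theorems for state-constrained control) and a transparent reading of \eqref{lf_optimal} as a first-order condition; what the paper's approach buys is that the same multiplier machinery carries over directly to characterize the optimum when \eqref{lf_optimal} fails (Proposition \ref{prop_main}), which your variational argument would not deliver without further work.
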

To build intuition for condition \eqref{lf_optimal}, note that \eqref{pseudoFOC} boils down to whether one can construct a set of positive multipliers $\lambda(c)$ for all $c$ in the optimization.
Substitute the laissez-faire allocation into \eqref{pseudoFOC} and rearrange to obtain \[
\int_0^c \lambda(x)\, dx = [P(q_{LF}(c)) - c]f(c) - (1-\alpha)F(c).\] 
This expression can be interpreted as positing a nondecreasing cumulative multiplier consistent with laissez-faire. If this proposed multiplier is valid (i.e., if the right-hand side is nondecreasing in $c$), then the laissez-faire allocation satisfies the optimality conditions.

\paragraph{Economic Interpretation} Consider the extreme case of $\alpha = 1$, where the regulator is fully utilitarian and maximizes total trading surplus. Condition \eqref{lf_optimal} requires $[P(q_{LF}(c)) - c]f(c)$ to be nondecreasing in $c$, implying that the monopoly markup, $P(q_{LF}(c)) - c$, and the density of cost types, $f(c)$, cannot decline too rapidly. Hence, \eqref{lf_optimal} is satisfied when (a) demand is relatively inelastic to price (e.g., when consumers are more homogeneous), and (b) high-cost types occur with sufficiently high probability. 

The roles of (a) and (b) can be understood by comparing the laissez-faire inefficiencies versus those under regulation. Under laissez-faire, inefficiency arises because the monopolist sets a high price, excluding consumers whose value exceeds the production cost. 
A regulator could mitigate this by deterring excessive price-setting through taxation. 
Such intervention, however, lowers inefficiency for intermediate-cost firms while raising it for high-cost firms.
Under (a), laissez-faire inefficiency is already small (in the limit of homogeneous consumers, the monopolist cannot profitably exclude any buyers).  Regulation would thus create larger distortions than it removes. Under (b), if high-cost types are sufficiently common, excluding them leaves many consumers underserved, again making regulatory intervention more harmful than laissez-faire.

When $\alpha \in [0,1)$, the regulator places extra weight on consumer surplus. Condition \eqref{lf_optimal} can then be interpreted as a weighted monopoly markup adjusted by a redistributive term, $-(1-\alpha)F(c)$. 
As the regulator becomes more consumer-oriented (i.e., as $\alpha$ decreases), the desire to reduce prices strengthens, and it becomes harder to justify laissez-faire.\footnote{%
In footnote 32, \cite{AmadorBagwell22Regulation} (p.\ 1743) note that their exclusion threshold weakly increases in $\alpha$.}

\paragraph{Relation to Alternative Regulatory Environments} Condition \eqref{lf_optimal} is complementary to the sufficient conditions for nonintervention in \cite{AmadorBagwell22Regulation}, Proposition 3.\footnote{%
Their sufficient conditions are: (a') $-qP'(q)$ is decreasing in $q$, and (b) $f(c)$ is nondecreasing. Condition (a') implies the monopoly markup $-q_{LF}(c)P'(q_{LF}(c))$ decreases in $q_{LF}(c)$, equivalent to part (a) in \eqref{lf_optimal} increasing in $c$. When $\alpha = 1$, their conditions imply ours. For general $\alpha \in [0,1)$, the two sets of conditions are complementary.}
The regulatory environment they study—monopoly regulation without transfers—is equivalent to ours with an additional no-tax constraint. There, the regulator can use only a price cap to exclude high-cost types. This instrument can be replicated through taxation in our framework (cf.\ policy example $\tau_1$).  Naturally, if it is optimal not to intervene via taxes (i.e., without subsidies) in a given market environment, then it is also optimal not to intervene via a price cap (i.e., without transfers). Thus, \eqref{lf_optimal} serves as a sufficient condition for laissez-faire to be optimal when transfers are prohibited. Importantly, allowing taxes expands the regulator’s intervention set relative to the no-transfer case.
The underlying tradeoff is similar, though: reducing prices versus distorting high-cost types.

At the opposite extreme, when the regulator can use fully flexible transfers, \cite{BaronMyerson82} show that intervention is typically desirable. From a feasibility perspective, laissez-faire is always feasible regardless of transfer restrictions. With flexible transfers, however, laissez-faire is sub-optimal except for knife-edge cases. Introducing the no-subsidy constraint reduces the feasible set in a way that can make nonintervention optimal, as shown in Proposition \ref{prop_when}. From the perspective of policy instruments, Proposition \ref{prop_when} also shows that prohibiting subsidies limits the regulator’s ability to use taxes effectively.

\paragraph{Taxes and Subsidies: Substitute and Complementary Instruments} The preceding discussion suggests that taxes and subsidies function both as substitute and as complementary instruments. They are substitute in that either can be used to induce lower prices, albeit at the cost of distortion. But they are complementary because the distortions they induce move in different directions: taxes reduce output, while subsidies entail fiscal costs (when $\alpha < 1$). When both tools are available, the regulator balances these distortions—using taxes and subsidies jointly to maintain low prices without excessively depressing output or incurring excessive expenditure. This complementarity reduces overall welfare loss. In contrast, when transfers are infeasible, the absence of the complementary instruments often renders the welfare loss too large to justify intervention (\cite{AmadorBagwell22Regulation}, Proposition 3).

On the other hand, when only taxes are available, the regulator must lower prices solely by inducing underproduction, which may as well outweigh the gains from intervention (in environments satisfying \eqref{lf_optimal}). When the intervention is nevertheless beneficial, it remains optimal to use taxes alone, reflecting the substitutability of the instruments. This is the case analyzed in Section \ref{sec:intervention}.

\bigskip

Next, we present examples of market environments where laissez-faire is or is not an optimal regulation policy. In particular, when the inverse demand curve is linear, consumer valuations are more diverse. The demand elasticity increases sharply as $q$ decreases, and the regulator often finds it optimal to intervene. When the inverse demand curve is constant-elastic, on the other hand, consumer valuations are more homogeneous, and there are many cases for nonintervention to be optimal. 

\paragraph{Linear Demand} We show that the linear demand $P(q) = A - Bq$ (with $A \leq 2B$) violates condition \eqref{lf_optimal} for many choices of density function $f$ and welfare weight $\alpha$. 
By Lemma \ref{lem:lf}, $q_{LF}(c) = \max\{\frac{A-c}{2B}, 0\}$, and  the expression in condition \eqref{lf_optimal} becomes\footnote{
We use the shorthand notation $[x]^+ := \max\{x,0\}$. 
} \[
[P(q_{LF}(c)) - c]f(c) - (1-\alpha)F(c)= \frac{[A-c]^+}{2}f(c) - (1-\alpha)F(c).
\]
Note that $[A - c]^+$ is a nonnegative decreasing function in $c$. If $f(c)$ is nonincreasing (e.g., uniform and truncated exponential), the whole expression is nonincreasing. It is also easy to check that the expression strictly decreases in some intervals for truncated normal densities. We conclude that there is room of intervention under linear demand curve and many common cost distributions, regardless the welfare weight $\alpha$.
In contrast, without any transfer, laissez-faire will be optimal under the environment of linear demand, uniform cost distribution, and $\alpha =1$.\footnote{
See \cite{AmadorBagwell22Regulation}'s result without fixed cost (their footnote 33, p.\ 1744).
}
For a concrete example on the substitution relationship between taxes and subsidies, we compare the optimal policy in Section \ref{sec:intervention} 
for this particular environment across three policy sets: 
only taxation, no transfer, and flexible transfer.\footnote{
We henceforth refer to \cite{BaronMyerson82} whenever discussing on the optimal policies with flexible transfer, and refer to \cite{AmadorBagwell22Regulation} whenever discussing on the optimal policies with no transfer.
}

\paragraph{Constant-Elastic Demand} Consider the inverse demand function $P(q) = \theta q^{-\frac{1}{\eta}}$, where $\theta > 0$ is a small constant, and the constant $\eta > 1$ is (absolute value of) the demand elasticity. In contrast to linear demand, many environments with constant-elastic demand justify laissez-faire policy.\footnote{
Because we require the set of demand to be compact and that the maximum willingness-to-pay is $\overline{v}$, we need to truncate the demand curve slightly to $P^\epsilon(q) = \max\{\theta q^{-\frac{1}{\eta}} - \epsilon, \overline{v}\}$, for some small $\epsilon>0$. 
Fix a small $\theta>0$, we can find a tiny $\epsilon > 0$ so that $P^\epsilon(q)$ has approximately constant elasticity. With this modification, the fixed cost $\fcost$ needs to be reasonably large to guarantee the markup is monotone.
Then, the discussion in the text goes through. For exposition clarity, however, we suppress the exact treatment here. 
}
By Lemma \ref{lem:lf}, \[
P(q_{LF}(c)) - c = \frac{c}{1 - \frac{1}{\eta}}-c= \frac{c}{\eta - 1},
\]
which increases in $c$. Therefore, condition \eqref{lf_optimal} requires \[
\frac{cf(c)}{\eta - 1} - (1-\alpha)F(c)
\]
to be nondecreasing. A sufficient condition is that $f(c)$ is nondecreasing and $\alpha$ is large. For example, when $f$ is uniform,\footnote{
Many cases with truncated normal or truncated exponential distribution also satisfy condition \eqref{lf_optimal}. 
} we need \[
\frac{1}{\eta - 1} \geq  1-\alpha,
\]
which holds if $\alpha$ is not too small (regulator cares about firm profit) and $\eta$ is not too large (the profit margin is large enough).

\subsection{Intervention as the optimal regulation} \label{sec:intervention}
We now focus on market primitives under which condition \eqref{lf_optimal} fails, implying that taxation yields a strict welfare improvement relative to laissez-faire. To analyze the structure of optimal intervention, we introduce a class of tax policies with a simple carrot-and-stick form: a \emph{progressive price cap}.

\begin{defn}[Progressive Price Cap]
    A unit tax policy $\tau_{\hat{p}}(p)$ with a benchmark price $\hat{p} > 0$ is a \emph{progressive price cap} if\begin{itemize}
        \item[(i)] $\tau_{\hat{p}}(p) = 0$ for all $p \leq \hat{p}$;
        \item[(ii)] $\tau_{\hat{p}}(p) > 0$ for all $p > \hat{p}$, with $\tau_{\hat{p}}(p)$ strictly increasing in $p$; and 
        \item[(iii)] there exists some $p > \hat{p}$ such that $P^{-1}(p+\tau_{\hat{p}}(p)) > 0$.
    \end{itemize} 
\end{defn}
Condition (i) defines a \emph{delegation region}: the firm may freely set any price up to $\hat{p}$.
Condition (ii) defines a \emph{taxation region}, where prices above $\hat{p}$ face progressively higher \emph{unit} taxes.\footnote{%
In contrast, lump-sum taxes generally do not have the progressive feature.
}
Together, (i)–(ii) describe a progressive tax implementation of the benchmark price $\hat{p}$.
Condition (iii) requires the price cap to be \emph{soft}: taxes are not prohibitive, and for some $p>\hat{p}$ positive demand remains.
In the policy examples in Section \ref{section:model}, $\tau_2$ is a progressive price cap, while $\tau_1$ is a ``hard’’ price cap.

We next provide simple sufficient conditions under which the optimal intervention takes the form of a progressive price cap and discuss its economic interpretation.

\begin{prop} \label{prop_main}
Suppose condition \eqref{lf_optimal} is violated, $f$ is nonincreasing and log-concave, and the demand function $P^{-1}$ is strictly log-concave.\footnote{%
These conditions complement those in \cite{AmadorBagwell22Regulation} (Proposition 1, p.\ 1735) and hold for a broad class of demand and cost specifications.
For example, if demand is microfounded by a continuum of consumers with i.i.d.\ valuations drawn from $G$, strict log-concavity of the density $g$ implies $P^{-1}(p) = 1 - G(p)$ is strictly log-concave. Also, log-concavity of $P^{-1}$ is stronger than the strict concavity of the revenue function $qP(q)$, so our Assumption \ref{assumption_concave} remains valid (see \cite{ZouLogConcave}).
}
Then there exists $\overline{k}>0$ such that for all $k \in [0,\overline{k})$, the optimal regulation is a progressive price cap $\tau_{\hat{p}}(p)$. Moreover, there exists a cutoff $\hat{c}\in(0,\overline{c}_{LF})$ such that firms with marginal cost above (resp. below) $\hat{c}$ choose prices above (resp. below) $\hat{p}$.
\end{prop}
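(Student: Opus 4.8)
The plan is to attack Problem \eqref{directOBJ} by the Lagrangian/optimal-control method of \cite{AB13ECMA} and \cite{AmadorBagwell22Regulation}. First I would replace the global constraint \eqref{directIC} by its envelope representation $\Pi(c)=\int_c^{\overline{c}}q(s)\,ds$ together with monotonicity of $q$, and attach a nonnegative multiplier measure $\mu$ (with cumulative function $\Lambda(c)=\mu([0,c])$) to the no-subsidy constraint \eqref{directFS}, written as $q(c)[P(q(c))-c]-\fcost\ge\Pi(c)$. Substituting the envelope into both the objective and the constraint and exchanging the order of integration turns the rent terms into $-\int_0^{\overline{c}}q(c)\,[(1-\alpha)F(c)+\Lambda(c)]\,dc$, so the Lagrangian becomes an integral whose integrand depends on $c$ only through $q(c)$ and $\Lambda(c)$. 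By Assumption \ref{assumption_concave}(i) this integrand is concave in $q$ (the $V(q)$ and $qP(q)$ pieces are concave, the rest linear), so its pointwise maximizer is characterized by \eqref{pseudoFOC}, with $\Lambda$ playing the role of the cumulative allocative multiplier.

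Next I would guess the candidate and its multiplier. Let $\Lambda_{LF}(c):=[P(q_{LF}(c))-c]f(c)-(1-\alpha)F(c)$ be the cumulative multiplier consistent with laissez-faire. Because \eqref{lf_optimal} fails, $\Lambda_{LF}$ is not globally nondecreasing; the hypotheses that $f$ is nonincreasing and log-concave and that $P^{-1}$ is strictly log-concave are used to show that \eqref{directFS} is \emph{single-crossing}, i.e.\ there is a single cutoff $\hat c$ such that the constraint binds on $[0,\hat c]$ and is slack on $(\hat c,\overline{c}]$. On the delegation region $[0,\hat c]$ the firm faces no tax and plays its monopoly quantity, $q^*(c)=q_{LF}(c)$; on the taxation region $(\hat c,\overline{c}]$ the multiplier is switched off ($\Lambda$ held flat at $\Lambda(\hat c)$) and $q^*(c)=P^{-1}(\psi(c))$ solves \eqref{pseudoFOC} with $\psi(c):=c+(1-\alpha)F(c)/f(c)+\Lambda(\hat c)/f(c)$; above $\overline{c}$ the type is excluded, $q^*(c)=0$. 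Using the monopoly first-order condition $P(q_{LF}(c))=c-q_{LF}(c)P'(q_{LF}(c))$ one checks that the two branches meet continuously at $\hat c$, which fixes the benchmark price $\hat p:=P(q_{LF}(\hat c))>0$.

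The core of the argument is then to verify that this candidate solves the relaxed problem and is feasible. I would (i) establish nonnegativity of $\mu$ and complementary slackness; (ii) check that $q^*$ is nonincreasing, so that monotonicity and hence all of \eqref{directIC} hold, while \eqref{directIR} follows from $\Pi(c)=\int_c^{\overline{c}}q^*\ge0$; and (iii) confirm that the pointwise maximizer of the concave Lagrangian integrand is indeed $q^*$ on each region. Given these, Lagrangian sufficiency (a saddle-point argument as in \cite{AmadorBagwell22Regulation}) yields global optimality of $q^*$ in Problem \eqref{directOBJ}. I expect step (i) to be the main obstacle: constructing a valid nonnegative multiplier on the binding (delegation) region is exactly what fails when the shape of $\Lambda_{LF}$ is wrong, and pinning down the cutoff $\hat c$ while ruling out further binding/slack switches—equivalently, controlling any ironing or pooling at the benchmark so that the self-selection is a clean dichotomy—requires combining the curvature hypotheses carefully rather than invoking any one of them alone.

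Finally I would translate the allocation into the tax schedule and dispose of the fixed cost. Setting $\tau(p):=P(q^*(c))-p$ at the induced firm price $p=p^*(c)$ gives $\tau\equiv0$ on the delegation region (consumer and firm prices coincide below $\hat p$) and $\tau>0$ on the taxation region; that $\tau$ is strictly increasing in the firm price and that types self-select with $c\gtrless\hat c\Leftrightarrow p\gtrless\hat p$ follows from the strict monotonicity of $q^*$ together with the single-crossing of the firm's profit in $(p,c)$, while softness (part (iii) of the definition) holds because $q^*(c)>0$ for $c$ just above $\hat c$; off-path prices above the taxed range are assigned prohibitive taxes preserving monotonicity. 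For the fixed cost, I would note that $\Lambda_{LF}$ on the interior is independent of $\fcost$, so the cutoff $\hat c$ and the single-crossing structure vary continuously with $\fcost$, while the exclusion threshold $\overline{c}$ and the laissez-faire cutoff $\overline{c}_{LF}$ also move continuously; hence there is $\overline{k}>0$ such that for all $\fcost\in[0,\overline{k})$ the taxation region $(\hat c,\overline{c}]$ is nonempty and $\hat c\in(0,\overline{c}_{LF})$, so the optimum retains the progressive price cap form.
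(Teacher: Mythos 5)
Your overall strategy---envelope substitution, a cumulative multiplier $\Lambda$ on the no-subsidy constraint, a pointwise-concave Lagrangian, and a single-crossing cutoff $\hat c$---is the same route the paper takes (via a Pontryagin formulation in Lemmas \ref{lem:opt_ctrl_reform}--\ref{lem:sc}). But your candidate allocation is wrong on the delegation region, and this is not a detail you can patch afterwards. You posit $q^*(c)=q_{LF}(c)$ on all of $[0,\hat c]$ and glue the two branches via the monopoly first-order condition, so that $\hat p=P(q_{LF}(\hat c))$. The paper shows (Lemma \ref{lem:no-subsidy-behavior}) that at any point where the no-subsidy constraint switches from binding to slack one must have $q(\hat c)>q_{LF}(\hat c)$: when the constraint binds, incentive compatibility forces either $q'(c)=0$ or $q(c)=q_{LF}(c)$, and the sign of $g'(\hat c_+)>0$ together with $q'\le 0$ and strict concavity of $qP(q)$ rules out the laissez-faire branch at the junction. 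Hence the optimum necessarily contains a \emph{bunching} segment $[c_L,\hat c)$ where all types pool at the benchmark price $\hat p=P(q(\hat c))<P(q_{LF}(\hat c))$, with laissez-faire pricing only for $c<c_L$ (possibly empty). With your candidate, either the allocation jumps at $\hat c$ (contradicting the no-interior-jump result and IC) or the cumulative multiplier you need on $[0,\hat c]$ is exactly $\Lambda_{LF}$, whose failure to be nondecreasing is the very reason intervention is optimal---so step (i), which you flag as the main obstacle, actually fails for your guess.

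Two further mismatches with the paper's argument. First, progressivity of the tax is not a consequence of monotonicity of $q^*$ plus single-crossing of profits alone: the paper must show that $P(q(c))-c-\fcost/q(c)-\Pi(c)/q(c)$ is increasing on $[\hat c,\overline c)$, and this is where strict log-concavity of $P^{-1}$ and log-concavity of $f$ enter, through a composition result making $c\mapsto P^{-1}(\phi(c))$ strictly log-concave so that $\int_c^{\overline c}H/H$ is decreasing. Second, the role of the small fixed cost $\overline k$ is not a generic continuity-of-cutoffs statement; $\fcost$ enters the firm's markup through the average fixed-cost term $\fcost/q(c)$, which rises in $c$ and works \emph{against} progressivity, and $\overline k$ is chosen precisely so that $(-H'(c))\fcost+(-H'(c))\int_c^{\overline c}H-H^2(c)<0$ still holds. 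You would need to supply both of these computations, in addition to correcting the candidate, before the sufficiency/saddle-point step can be invoked.
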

The proposition characterizes the optimal regulatory instrument, the induced pricing behavior, and the resulting allocation.
When intervention is warranted, the optimal policy consists of both delegation and taxation. Higher prices are discouraged through increasing tax rates, and firms self-select into the two regions according to cost.
High-cost types ($c > \hat{c}$) set $p > \hat{p}$ and are taxed; sufficiently high-cost types may face prohibitive taxes and be excluded.
Low-cost types ($c \le \hat{c}$) set $p = \hat{p}$ and are untaxed;\footnote{
It is well-known in the mechanism design literature that constraints on transfers can lead to pooling allocation. See, e.g., \cite{KrishnaMorgan08}, \cite{PaiVohra14}.
} those with sufficiently low costs may instead optimally charge their laissez-faire price $P(q_{LF}(c)) < \hat{p}$.
Lemma \ref{lem:main} reports explicit expressions for firms’ optimal pricing and the resulting tax schedule.

To understand the sufficiency conditions, note that progressivity requires $\tau(p)$ to increase in $p$. Since the optimal firm price $p(c)$ is increasing in $c$, this is equivalent to requiring $\tau(p(c)) := P(q(c)) - p(c)$ to be increasing in $c$, from the perspective of the direct mechanism. A sufficient condition is that the consumer price markup $P(q(c)) - c$ is increasing while the firm’s price markup $p(c) - c$ is decreasing.
Nonincreasing $f$ ensures that the optimality condition \eqref{pseudoFOC} yields an increasing $P(q(c)) - c$, while log-concavity of $f$ and $P^{-1}$ ensures that $p(c) - c$ is decreasing. The small fixed cost requirement reflects that $\fcost$ must be recovered through unit price without subsidy. Because higher-cost types produce less, the average fixed cost term $\frac{\fcost}{q(c)}$ increases in $c$, pushing $p(c) - c$ upward and counteracting the desired decrease in the firm's markup. A small $\fcost$ ensures this effect remains dominated.

\paragraph{Economic Interpretation} The economics of the progressive price cap can be understood by examining the regulator’s objective. Consider first the extreme case $\alpha = 0$, where the regulator cares solely about consumer surplus. Consumer surplus increases when (a) more consumers gain \emph{access} to the product, and (b) the product becomes more \emph{affordable}.

To improve affordability without subsidies, the regulator must tax high prices, which exerts downward pressure on firm prices. If the firm attempts to set a high price, the tax inflates the consumer price further, reducing demand and making high-price strategies unprofitable.

However, taxation is a double-edged sword. For high-cost firms, a higher consumer price and lower demand move the allocation further from efficiency (where consumer price equals marginal cost), thereby harming consumer access—objective (a). Thus, the optimal tax balances these two forces.

Now consider $\alpha > 0$, so the regulator also cares about (c) firm profit. Distortionary taxation now reduces not only consumer surplus but also the firm’s gains from trade. Regulation must therefore navigate a three-way tension among access, affordability, and profitability.\footnote{%
As emphasized by \cite{GuoShmaya25RobustRegu}, the regulator must balance: mitigating underproduction (\emph{access} and \emph{profitability}), protecting consumer surplus (\emph{affordability}), and limiting overproduction (irrelevant under no-subsidy constraints).
}

The two policy regions generate inefficiencies through different channels.
The taxation region implements demand rationing and sometimes exclusion, reducing both consumer access and firm profit.
The delegation region preserves all cost-efficient trades but at elevated prices, reducing consumer surplus per transaction.

 \paragraph{Relation to Alternative Regulatory Environments} Compared with the flexible-transfer environment, the optimal policy in Proposition \ref{prop_main} also induces a more elastic demand curve, albeit more coarsely.
With flexible transfers, the regulator subsidizes low prices and taxes high prices.
Without subsidies, the regulator’s best tool for rewarding low-price behavior is to grant \emph{delegation} (i.e., the firm’s right to set the price freely) up to the benchmark $\hat{p}$. This delegation region corresponds to the binding no-subsidy constraint \eqref{directFS}. Consequently, low-cost firms either charge $\hat{p}$ or, if sufficiently efficient, revert to their laissez-faire price.

By contrast, in a pure no-transfer environment, the optimal intervention takes the form of a hard price cap. Under a hard cap, firms either charge their laissez-faire price or the cap itself, whichever is lower. Each type of the firm therefore either fully exploits or completely mutes its private information.

The progressive price cap provides more flexibility. In addition to delegation, it allows the firm to \emph{negotiate upward} from $\hat{p}$ via higher prices  at the cost of taxation. This mechanism elicits private information in a way that improves welfare. The benchmark price $\hat{p}$ therefore separates two uses of private information: one for profit maximization in the laissez-faire region and one for welfare maximization under taxation. The relative benefits of these uses, together with the cost of information elicitation, determine the optimal level of $\hat{p}$.

\medskip

In what follows, we present a numerical example to illustrate how the optimal policy operates. We focus on a linear demand curve\footnote{
Appendix \ref{apx:linear-uniform} presents a full characterization of the environment with linear demand curve $P(q) = A - Bq$ and a uniform cost distribution. The requirement that $P^{-1}$ be strictly log-concave also covers other common cases, such as logarithmic inverse demand $P(q) = \mu-\beta\log(q)$.
} $P(q) = 1 - q$ with $q \in [0,1]$, and consider a uniform cost distribution. We also compare the optimal policy across three policy sets: taxation only, flexible transfers, and no transfer.

\paragraph{Uniform Cost} Suppose marginal cost $c \in [0,1]$ is uniformly distributed and the fixed cost is $\fcost = 0$.
Fix $\alpha = 1$.

This example illustrates the pattern described in Proposition \ref{prop_main}: low-cost types pool at the benchmark price, medium-cost types select prices above the benchmark and face a progressive (but non-prohibitive) tax, and sufficiently high-cost types face a prohibitive tax and are excluded—even without fixed cost. Figure \ref{fig:uniform} plots the optimal policy and the corresponding pricing strategy. Closed-form expressions for the optimal regulation and market allocation are provided in Appendix \ref{apx:linear-uniform}.

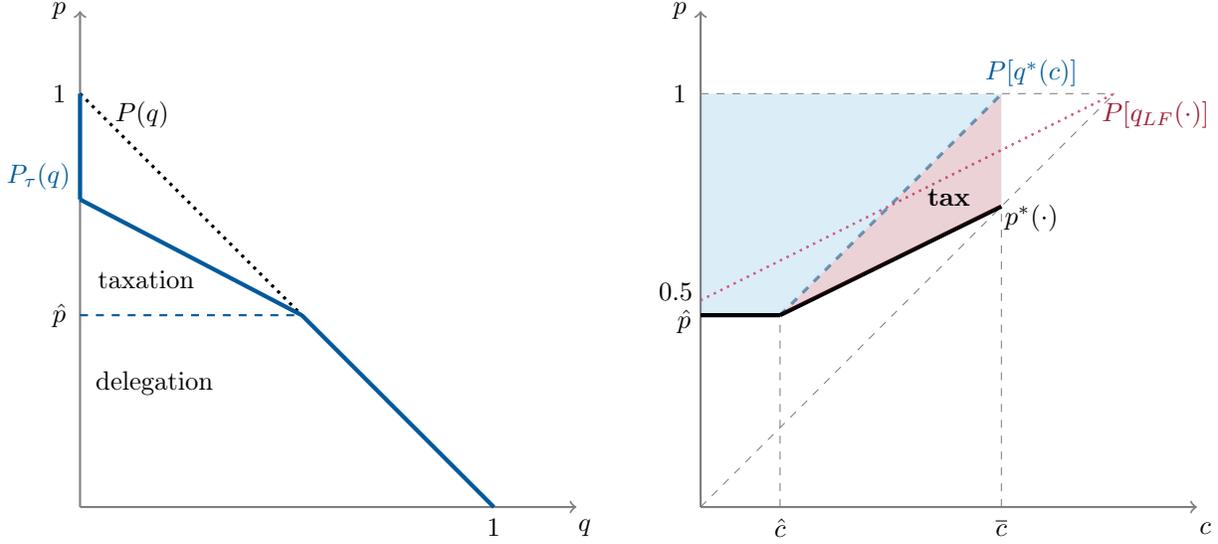
\begin{figure}[htbp]
    \begin{center}
               \begin{tikzpicture}[scale=5.5]
        \footnotesize
             %  %% %% %% Left Panel %% %% %% %% %% %
            \draw [->, thick,gray] (0,0) -- (0,1.2);
            \draw [->, thick,gray] (0,0) -- (1.2,0);
            
             \draw[-, very thick, dotted] (0,1) -- (1,0);

             \draw[-, nberblue, thick, dashed] (0, 0.464) -- (1-0.464, 0.464);
             \node at (-0.05, 0.464) {$\hat{p}$};

             \node at (0.16, 0.55) {taxation};
             \node at (0.18, 0.3) {delegation};

             \draw[-, ultra thick, nberblue] (0,0.744) -- (0, 1); 
             \draw[-, ultra thick, nberblue] (0,0.744) -- (1-0.464, 0.464);
             \draw[-, ultra thick, nberblue] (1,0) -- (1-0.464, 0.464);
             \node at (-0.1, 0.8) {{\color{nberblue}$P_\tau(q)$}};

             \node at (-0.05, 1) {$1$};
             \node at (1, -0.05) {$1$};
            \node at (-0.05, 1.2) {$p$};
            \node at (1.22, -0.05) {$q$};
            \node at (0.15, 0.95) {$P(q)$};

            %  %% %% %% Right Panel %% %% %% %% %% %
            \draw [->, thick,gray] (1.5,0) -- (1.5,1.2);
            \draw [->, thick,gray] (1.5,0) -- (2.7,0);

            \draw [-, gray, dashed] (1.5,0) -- (2.5,1);

            \node at (1.45, 1) {$1$};
            \node at (1.44, 0.52) {$0.5$};
            
            \node at (1.45, 1.2) {$p$};

             \draw [-, dashed, gray] (1.5,1) -- (2.5,1);

            \draw [-, nberblue!60, very thick, dashed] (1.5+0.192,0.464) -- (1.5+0.727,0.464+0.727-0.192);
            
            \fill[uncblue, opacity = 0.2] (1.5,0.464) -- (1.5 + 0.192, 0.464) -- (1.5+0.727,0.464+0.727-0.192) -- (1.5, 1);

            \node at (2.3, 1.05) {${\color{nberblue}P[q^*(c)]}$};
 
            \node at (1.5 + 0.192, -0.05) {$\hat{c}$};
            \node at (2.72, -0.05) {$c$};
            \node at (1.5+0.727, -0.05) {$\overline{c}$};

            \node at (2.3, 0.7) {$p^*(\cdot)$};
            \draw [-, dashed, gray] (1.5 +0.192, 0) -- (1.5 + 0.192, 0.464);
            \draw[gray, dashed] (1.5+0.727, 0) -- (1.5+0.727, 0.727);
            \draw[-, ultra thick] (1.5,0.464) -- (1.5 + 0.192, 0.464);
            \draw[-, ultra thick] (1.5+0.192,0.464) -- (1.5 + 0.727, 0.727);

            \draw [-, thick, purple, dotted] (1.5, 0.5) -- (2.5, 1);
            \node at (2.6, 0.95) {${\color{pennred}P[q_{LF}(\cdot)]}$};
           
            \draw [-, thick, purple!60, dotted] (1.5, 0.5) -- (2.5, 1);
             \node at (1.46, 0.45) {$\hat{p}$};
            
            \fill[pennred, opacity = 0.2] (1.5+0.727,0.464+0.727-0.192) --(1.5 + 0.192, 0.464) -- (1.5 + 0.727, 0.727); 
            \node at (0.6+1.5, 0.75) {\textbf{tax}}; 
        \end{tikzpicture}
    \end{center}
    \caption{\footnotesize Firm's inverse demand function $P_\tau(q)$ (left) and the induced pricing strategy $p^*(c)$ (right). $P(q) = 1 - q$, $c \sim U[0,1]$, $\alpha = 1$.}
    \label{fig:uniform}
\end{figure}

The left panel of Figure \ref{fig:uniform} shows how the optimal unit tax reshapes the firm’s inverse demand curve. The benchmark price $\hat{p}$ partitions the price space into a taxation region and a delegation region. In the taxation region, higher firm prices trigger higher unit taxes, generating lower demand than under the market demand curve. In the delegation region, any price posted by the firm yields market demand. Relative to the original inverse demand $P(q)$, the regulated inverse demand is more price sensitive just above $\hat{p}$, discouraging many medium-cost types from deviating upward.

The right panel shows the firm’s pricing strategy. The red dotted line is the laissez-faire price schedule $P[q_{LF}(c)]$. The black segments depict the policy-induced pricing function $p^*(c)$. In this example, costs are never low enough for the firm to choose its laissez-faire price $P(q_{LF}(c)) < \hat{p}$.\footnote{
Appendix \ref{apx:normal} provides an example in which some low-cost types do use laissez-faire prices.
}
The consumer price $P[q^*(c)] = p^*(c) + \tau(p^*(c))$ coincides with the firm price for $p^*(c) \le \hat{p}$ (i.e., $c \le \hat{c}$), and diverges in the taxation region (blue dashed line). The vertical gap between the consumer and firm prices (the red area bounded by $c=\overline{c}$, $p^*(c)$, and $P[q^*(c)]$) represents ex ante tax revenue.
The blue region (bounded by the $p$-axis, $p^*(c)$, and $P[q^*(c)]$) shows where trade occurs. Relative to laissez-faire pricing, the optimal policy lowers firm prices, but consumer prices fall only for medium-to-low-cost firm types, where the allocation moves closer to ex post efficiency ($P[q(c)] = c$, the 45° line). For high-cost firms, consumer prices exceed laissez-faire prices.

\vspace{.1in}\noindent{\it Comparison with Flexible Transfers and with No Transfer.} This example sharply contrasts with the outcomes under flexible transfers or under no transfers: even with $\alpha = 1$, the optimal tax-only policy induces firm prices that are strictly below the laissez-faire schedule, and the allocation is distorted relative to efficiency.

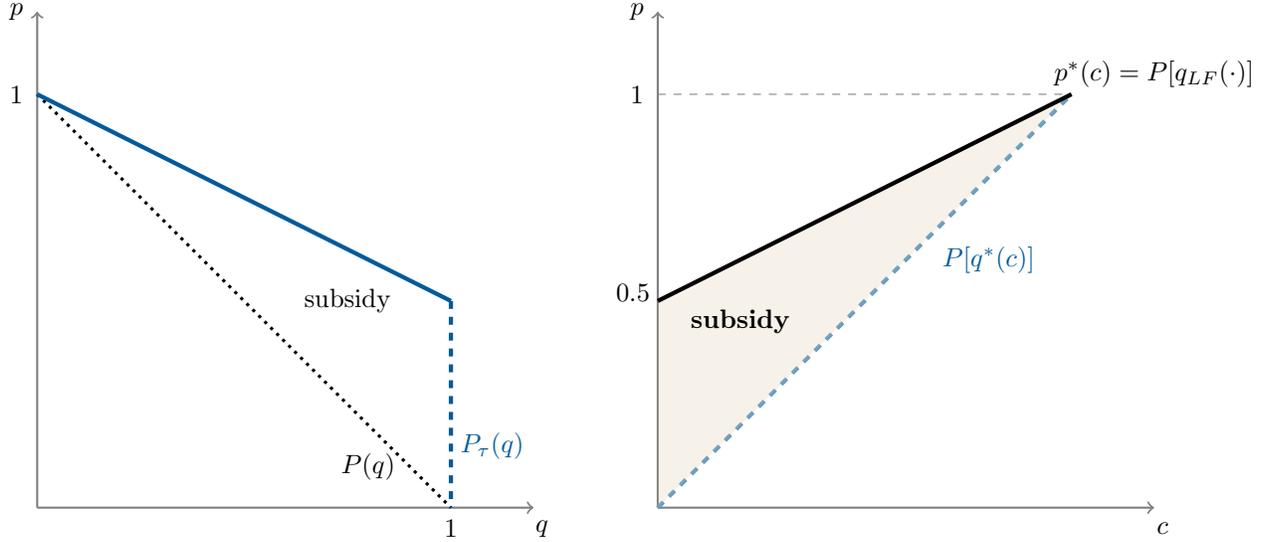
\begin{figure}[htbp]
    \begin{center}
               \begin{tikzpicture}[scale=5.5]
        \footnotesize
             %  %% %% %% Left Panel %% %% %% %% %% %
            \draw [->, thick,gray] (0,0) -- (0,1.2);
            \draw [->, thick,gray] (0,0) -- (1.2,0);
             \draw[-, very thick, dotted] (0,1) -- (1,0);
             \node at (0.75, 0.5) {subsidy};

             \draw[-, ultra thick, nberblue] (1,0.5) -- (0, 1); 
             \draw[-, ultra thick, dashed,nberblue] (1,0.5) -- (1, 0); 
             \node at (1.1, 0.15) {{\color{nberblue}$P_\tau(q)$}};
             
             \node at (-0.05, 1) {$1$};
             \node at (1, -0.05) {$1$};
            \node at (-0.05, 1.2) {$p$};
            \node at (1.22, -0.05) {$q$};
            \node at (0.8, 0.1) {$P(q)$};

            %  %% %% %% Right Panel %% %% %% %% %% %
            \draw [->, thick,gray] (1.5,0) -- (1.5,1.2);
            \draw [->, thick,gray] (1.5,0) -- (2.7,0);

            \draw [-, gray, dashed] (1.5,0) -- (2.5,1);

            \node at (1.45, 1) {$1$};
            \node at (1.44, 0.52) {$0.5$};
            
            \node at (1.45, 1.2) {$p$};
            
             \draw [-, dashed, gray] (1.5,1) -- (2.5,1);

            \draw [-, nberblue!60, ultra thick, dashed] (1.5, 0) -- (2.5, 1);
           
            \node at (2.3, 0.6) {${\color{nberblue}P[q^*(c)]}$};
        
            \node at (2.72, -0.05) {$c$};

            \fill[beige, opacity=0.2] (1.5, 0) -- (1.5, 0.5) -- (2.5, 1);

            \draw [-, thick, purple, dotted] (1.5, 0.5) -- (2.5, 1);
            \draw[-, ultra thick] (1.5, 0.5) -- (2.5, 1);
            \node at (2.7, 1.05) {$p^*(c) = P[q_{LF}(\cdot)]$};

            \node at (0.2+1.5, 0.45) {\textbf{subsidy}}; 

        \end{tikzpicture}
    \end{center}
    \caption{\footnotesize Firm's inverse demand function $P_\tau(q)$ (left) and the induced pricing strategy $p^*(c)$ (right) with flexible transfers according to \cite{BaronMyerson82}. $P(q) = 1 - q$, $c \sim U[0,1]$, $\alpha = 1$.}
    \label{fig:flexible}
\end{figure}

Under flexible transfers (Figure \ref{fig:flexible}), the regulator implements ex post efficient quantities using a menu of subsidies when $\alpha = 1$. The consumer price schedule satisfies $P[q(c)] = c$ for all $c$, maximizing gains from trade. To ensure truthful reporting, lower-cost types must receive larger subsidies; taxes are never used, as they would introduce demand distortions. The firm’s pricing strategy coincides with the laissez-faire one, $P[q_{LF}(c)]$.
The subsidy schedule (as a function of cost) equals $P[q_{LF}(c)] - c$. In other words, the regulator pays the information rent with fiscal budget.
The beige area in Figure \ref{fig:flexible} represents the welfare loss from the fiscal cost of subsidies.

Under no transfers, consumer and firm prices coincide. The regulator cannot improve upon the laissez-faire allocation in this example,\footnote{
See Amador and Bagwell (2022), footnote 33, p. 1744.
}
so $p^*(c) = P(q_{LF}(c)) = P(q^*(c))$.
Welfare loss arises solely from the monopolist’s quantity distortion (the beige area in Figure \ref{fig:flexible}).
In both the flexible-transfer and no-transfer settings, (i) taxes are never used, and (ii) firms always use laissez-faire pricing.

One might therefore expect that, without subsidies, the regulator would be powerless and that the optimal policy would revert to laissez-faire. Instead, taxation alone still delivers improvements. At the cost of excluding high-cost firms, the regulator depresses firm prices for lower-cost types and expands gains from trade. In Figure \ref{fig:uniform}, the laissez-faire price $P[q_{LF}(c)]$ crosses the optimal consumer price $P[q^*(c)]$ exactly once.
The region where the laissez-faire price lies above $P[q^*(c)]$ (part of the blue region) represents increased gains from trade; the region bounded by $P[q_{LF}(c)], P[q^*(c)]$, and $p=1$ represents the loss of gains from trade; and the red region denotes tax revenue. The optimal policy trades off these gains and losses.

Another perspective is that, when only taxes are available, the regulator behaves as though strongly redistributive.
Without subsidies, the regulator optimally transfers surplus from the firm to consumers—even when $\alpha = 1$. Such redistribution does not arise under no transfer or under flexible transfers. This reflects the substitutability between taxes and subsidies: with $\alpha = 1$, subsidies create incentives without distortion, whereas taxes create incentives only by distorting output. In this setting, taxation is the second-best instrument, and its benefits (sharper incentives for low prices) outweigh its costs (underproduction).

\bigskip

Finally, we discuss why the progressive price cap improves from a hard price cap. Both policies rely on exclusion, but a hard price cap excludes every firm with cost above the cap, and it does so merely to discipline marginally high-cost firms. Many low-cost types are unconstrained by a hard cap.
The tax-implemented progressive cap is instead a soft exclusion policy: it expands the range of feasible policies beyond the hard-cap case. Fewer high-cost types need to be excluded to achieve the same downward pressure on prices. Consequently, the optimal progressive cap uses a lower benchmark price, binds more low-cost types (improving affordability), and excludes fewer high-cost types (improving access and profitability).

\section{Conclusion}
Constraints on policy instruments have been recognized for decades as posing important practical and theoretical challenges, yet they have received relatively limited formal treatment. In their classic textbook (p.\ 155), \cite{LaffontTiroleBook93} note one such challenge:

\begin{quote}
[We have] examined pricing by a regulated natural monopoly… The analysis is a bit more complex, however, under a budget constraint because of the endogeneity of the shadow cost; much work remains to be done in this case.
\end{quote}

This paper contributes to this broader agenda by studying one such constraint within a workhorse model of monopoly regulation. In particular, imposing a no-subsidy requirement—one form of budget balance for the regulator—changes the structure of optimal regulation in systematic ways. We identify when intervention is desirable and how it should be designed. By characterizing the market environments in which regulation is beneficial or unnecessary, we show how taxes and subsidies can act as substitutes in some settings and as complements in others. Non-intervention can be optimal when demand is relatively price-insensitive, when high costs are sufficiently likely, or when the regulator assigns limited weight to redistribution. When intervention is warranted, we provide simple sufficient conditions under which the regulator implements a progressive price cap via a unit tax, balancing access, affordability, and profitability.

Future research could investigate how other policy constraints—such as bans on taxation, limits on subsidies, or caps on tax rates—affect optimal regulatory design. It could also explore these issues under alternative sources of uncertainty, such as hidden type-enhancing effort, demand uncertainty, or uncertainty in fixed costs.

\appendix

\section{Proof of Lemma \ref{lem:lf}}\label{appendix}

    Since $\tau(p) \equiv 0$, the demand function becomes $q(p) =P^{-1}(p)$. Since $q$ and $p$ has one-to-one relationship, the firm can maximize its profit over $q$ instead of $p$: \[
    \max_{q\in [0,\qmax]} (P(q) - c) q - k \cdot \mathbbm{1}_{q > 0},
    \]
    We can solve this problem in two steps. First, conditional on $q > 0$, the fixed cost $c$ is sunk. Then the problem becomes $\max_{q\in (0,\qmax]} (P(q) - c) q$. Ignore the primal constraint $q \in (0,\qmax]$ for now. By Assumption \ref{assumption_concave} (i)(ii), the objective is strictly concave defined on a compact feasible set. Hence, the sufficient and necessary first-order condition that characterizes solution $\hat{q}(c)$ to the problem without the primal constraint is \[
    P(q) = c  - q \cdot P'(q).
    \]
   By Assumption \ref{assumption_concave} (i), the solution $\hat{q}(c)$ is unique for each $c$. Standard implicit function theorem implies that $\hat{q}(c)$ strictly decreases in $c$. Furthermore, the envelope theorem implies that the optimal gross profit $\hat{\Pi}(c) = [P(\hat{q}(c)) - c] \hat{q}(c)$ is decreasing in $c$.

    Next, we consider whether the firm chooses $q > 0$ or $q = 0$. The firm with a cost $c$ will choose $q_{LF}(c) = \hat{q}(c)$ if $\hat{\Pi}(c) \geq k$, and will choose $q_{LF}(c) = 0$ otherwise.  We claim there is a unique cutoff $\overline{c}_{LF} \in (0,1]$ such that the firm chooses $q_{LF}(c) = \hat{q}(c)$ if and only if $c \leq \overline{c}_{LF}$. Since $\hat{\Pi}(c)$ is decreasing, it suffices to require $\hat{\Pi}(0) > k$, which is implied by Assumption \ref{assumption_concave} (iii). Moreover, the cutoff $\overline{c}_{LF}$ decreases as $k$ increases.

    Finally, we need to verify the solution candidate in the above analysis is feasible, i.e., $q_{LF}(c) \in [0, \qmax]$. It suffices to show that $\hat{q}(c) \leq \qmax$. Since $\hat{q}(c)$ is decreasing in $c$, we only need to show that $\hat{q}(0) \leq \qmax$. Recall that $\hat{q}(0)$ solves $P(q) + qP'(q) = 0$, the left-hand side of which decreases in $q$ because of concavity. Suppose that $\hat{q}(0) = \qmax$, the left-hand side becomes $\qmax P'(\qmax)< 0$, because, by Assumption \ref{assumption_concave} (ii), $P(\qmax) = 0$. We conclude that $\hat{q}(0) \leq \qmax$. To sum up, there is a cutoff $\overline{c}_{LF} \in (0,1]$ such that \[
   q_{LF}(c) = \begin{cases}
 \hat{q}(c),& c \leq \overline{c}_{LF}, \\
 0,& c > \overline{c}_{LF}
\end{cases}.
\]
Since $P(\cdot)$ is strictly decreasing, we conclude that the laissez-faire pricing strategy $P(q_{LF}(c))$ is strictly increasing in $c$.
This finishes the proof.

\section{Proof of Proposition \ref{prop_when} and \ref{prop_main}}\label{apx:main}
We first reformulate the mechanism design problem to employ the optimal control technique, then use the sufficient and necessary conditions to pin down the policies. Specifically, we prove Proposition \ref{prop_when} by proposing multipliers that are compatible with laissez-faire allocation, exploiting the sufficiency of the condition. We prove Proposition \ref{prop_main} by characterizing necessary properties of the optimal solution. Finally, we use two-step optimization to characterize the optimal policy.

Lemma \ref{lem:opt_ctrl_reform} claims that we can rewrite the regulator's design problem into an optimal control problem. In particular, the IC constraint \eqref{directIC} can be transformed into a ``law-of-motion'' condition wherever the demand policy $\demand(c)$ is continuous (i.e., condition \eqref{controlProfit}). We also allow for discontinuity in the demand policy, where the condition \eqref{controlProfit} will be invalid, and we handle the IC constraint using condition \eqref{IC_at_jump} instead. 
Notably, the optimal control problem is a regulation problem choosing $(q, \Pi)$ with firm's payoff is bounded above by the realized profit in the market, effectively ruling out lump-sum subsidies. If we add a ``no tax'' constraint\footnote{
The firm's profit cannot be lower than the profit associated with the market transaction, ruling out lump-sum excise taxes.
} \[
q(c)\cdot [P(q(c)) - c] - \fcost \cdot \mathbbm{1}_{q(c) > 0} \leq \Pi(c), \forall c\in[0,1],
\]
then the problem is equivalent to that of 
\cite{AmadorBagwell22Regulation}. Methodologically, we use Pontryagin approach to a mechanism problem with type-dependent constraints (e.g., \cite{XiaoPontryagin}; \cite{Xiao25JMP}).  

\begin{lem}\label{lem:opt_ctrl_reform}
The regulator's mechanism design problem \eqref{directOBJ} is equivalent to the following optimal control problem with state variables $(\Pi, \demand)$ and control variable $\control$, where $\Pi\colon [0,1] \rightarrow \R$ is continuous and piecewise differentiable, $\demand\colon [0,1] \rightarrow [0,\qmax]$ is piecewise differentiable, and $\control\colon [0,1] \rightarrow \R_-$ is piecewise continuous.
\begin{equation}
      \max_{\Pi, \demand, z}\int_0^1 \left[V(\demand(c)) - c\demand(c) -  \fcost \cdot \mathbbm{1}_{q(c) >0} - (1-\alpha)\Pi(c) \right]f(c)dc \label{controlOBJ}
  \end{equation}
  subject to, for all $c\in [0,1]$ where $\Pi'(c)$ and $q'(c)$ exist,
  \begin{align}
      &\Pi'(c) = -\demand(c),\; \Pi(0) \text{ free}, \Pi(1) = 0 \label{controlProfit}\\
      & \demand'(c) = \control(c) \leq 0,\; \demand(0) \text{ free}, \demand(1) \geq 0 \label{controlDemand}\\
      & q(c)\cdot [P(q(c)) - c] - \fcost \cdot \mathbbm{1}_{q(c)>0} \geq \Pi(c) \label{controlFS}
  \end{align}
  and for any $\sigma \in [0,1]$ where $\Pi'(\sigma)$ or $\demand'(\sigma)$ does not exist, we require \begin{equation}
        \demand(\sigma_-) \geq \demand(\sigma_+) \text{ and } \sigma = \arg\max_x [p(x) - \sigma]\demand(x), \label{IC_at_jump}
  \end{equation}
  where $\demand(\sigma_+) := \lim_{c \rightarrow \sigma_+} \demand(c)$, and $\demand(\sigma_-) := \lim_{c \rightarrow \sigma_-}\demand(c)$.
\end{lem}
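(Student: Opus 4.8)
The plan is to prove the equivalence by transforming each constraint of the direct mechanism \eqref{directOBJ} into the corresponding ingredient of the control problem, and conversely reconstructing an admissible pricing policy from any control-feasible $(\Pi,\demand,\control)$. Since the two objectives \eqref{directOBJ} and \eqref{controlOBJ} are literally the same functional of $(\demand,\Pi)$, it suffices to show that the two feasible sets coincide under the change of variables $p \leftrightarrow (\demand,\Pi)$.

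\emph{Forward direction.} Starting from a mechanism $(p,\demand)$ satisfying \eqref{directIC}--\eqref{directFS}, I would first observe that $\Pi(c)=\max_x (p(x)-c)\demand(x)-\fcost\cdot\mathbbm{1}_{\demand(x)>0}$ is a pointwise maximum of affine functions of $c$ with slopes $-\demand(x)$, hence convex and locally Lipschitz. The Milgrom--Segal envelope theorem then yields $\Pi'(c)=-\demand(c)$ wherever $\Pi$ is differentiable, giving the law of motion \eqref{controlProfit}; convexity of $\Pi$ forces the slopes $-\demand(c)$ to be nondecreasing, i.e.\ $\demand$ nonincreasing, which is exactly $\control=\demand'\le 0$ in \eqref{controlDemand}. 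Because $\Pi$ is nonincreasing, \eqref{directIR} collapses to $\Pi(1)\ge 0$, and since lowering $\Pi(1)$ shifts the whole profit path down---weakly improving \eqref{controlOBJ} when $\alpha<1$ and relaxing feasibility in all cases---it is without loss to set $\Pi(1)=0$. Finally, on $\{\demand(c)>0\}$ the defining identity $\Pi(c)=(p(c)-c)\demand(c)-\fcost$ gives $p(c)=c+(\Pi(c)+\fcost)/\demand(c)$, so the no-subsidy constraint \eqref{directFS}, equivalently $P(\demand(c))\ge p(c)$, multiplies through by $\demand(c)>0$ into $\demand(c)[P(\demand(c))-c]-\fcost\ge\Pi(c)$, which is \eqref{controlFS}; on $\{\demand(c)=0\}$ both constraints reduce to $\Pi(c)=0$.

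\emph{Reverse direction.} Given $(\Pi,\demand,\control)$ admissible for \eqref{controlOBJ}, I would define $p(c):=c+(\Pi(c)+\fcost)/\demand(c)$ where $\demand(c)>0$ and set $p(c)$ prohibitively high (any price with zero market demand) where $\demand(c)=0$. The integral form $\Pi(c)=\Pi(1)+\int_c^1 \demand(s)\,ds$ follows from \eqref{controlProfit} and continuity of $\Pi$, and together with monotonicity of $\demand$ it delivers global incentive compatibility by the standard computation: the payoff of type $c$ reporting $x$ equals $\Pi(x)+(x-c)\demand(x)$, and $\int_c^x \demand \ge (x-c)\demand(x)$ by monotonicity in both cases $x\gtrless c$. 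Then \eqref{controlFS} reproduces \eqref{directFS} by reversing the algebra above, and \eqref{directIR} follows from $\Pi(1)=0$ and monotonicity.

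The main obstacle---and the reason for condition \eqref{IC_at_jump}---is the treatment of discontinuities of $\demand$, at which the states $\demand$ and $\Pi$ are only piecewise differentiable and the pointwise law of motion \eqref{controlProfit} is vacuous. At such a point $\sigma$ I would verify that monotonicity supplies the downward jump $\demand(\sigma_-)\ge\demand(\sigma_+)$ and that the recovered $p$ jumps upward, so that the residual content of global IC is exactly the no-deviation condition $\sigma=\arg\max_x[p(x)-\sigma]\demand(x)$, which is \eqref{IC_at_jump}; this is precisely what the differential (Pontryagin) formulation cannot by itself guarantee, in contrast to the integral envelope form. I would close by recording the two promised byproducts: adding the reverse inequality $\demand(c)[P(\demand(c))-c]-\fcost\cdot\mathbbm{1}_{\demand(c)>0}\le\Pi(c)$ to \eqref{controlFS} forces $\Pi$ to equal market producer surplus and recovers the no-transfer program of \cite{AmadorBagwell22Regulation}, while the fact that the reformulation is stated entirely in terms of $(\demand,\Pi)$---with $\Pi$ interpretable as lump-sum-net profit---shows that unit taxes and lump-sum excise taxes induce the same feasible set, establishing the claimed policy-instrument equivalence.
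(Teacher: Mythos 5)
Your proposal is correct and follows essentially the same route as the paper: characterize \eqref{directIC} via the envelope theorem ($\Pi'=-\demand$) plus monotonicity of $\demand$, rewrite \eqref{directFS} algebraically as the profit bound \eqref{controlFS}, collapse \eqref{directIR} to the terminal condition $\Pi(1)=0$ using monotonicity of $\Pi$ and $\alpha\le 1$, and isolate the residual IC content at discontinuities as \eqref{IC_at_jump}. The only (immaterial) difference is that you verify sufficiency of the envelope-plus-monotonicity conditions directly from the integral form $\Pi(x)-\Pi(c)=-\int_c^x \demand$, whereas the paper runs the Laffont--Tirole double-integral argument on the local first-order condition; both are standard and equivalent.
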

\begin{proof}
    For notational convenience, define the firm's revenue as $\revenue(c) := p(c)\demand(c)$. Starting with the direct mechanism design \eqref{directOBJ}, its \eqref{directIC} implies that \[
\revenue(c_1) - c_1 \demand(c_1) \geq \revenue(c_2) - c_1 \demand(c_2), \; \revenue(c_2) - c_2 \demand(c_2) \geq \revenue(c_1) - c_2 \demand(c_1), \forall c_1, c_2 \in [0,1],
\]
where the fixed cost $\fcost$ cancels out.
Manipulating the above inequalities, we get $[\demand(c_2) - \demand(c_1)](c_2 - c_1) \leq 0$, implying that $\demand(\cdot)$ is nonincreasing, which in turn implies $\revenue(\cdot)$ is nonincreasing. The monotonicity of $\demand(\cdot)$ along with the verbatim \eqref{directIC} leads to condition \eqref{IC_at_jump}, without requiring differentiability of $\Pi(c)$ or continuity of $q(c)$. 

Since both $\demand(c)$ and $\revenue(c)$ are decreasing and bounded, they are piecewise differentiable. Recall that the value function for the firm is\[
\Pi(c) = \max_x (p(x) - c)\demand(x) - \fcost \cdot \mathbbm{1}_{q(x)>0}.
\] 
 Then $\Pi(c) = \max_x \revenue(x) - c\demand(x)- \fcost \cdot \mathbbm{1}_{q(x)>0}$.
The first-order necessary condition of the above problem requires $[\revenue'(x) - c\demand'(x)]|_{x=c} = 0$. We claim that the first-order condition along with nonincreasing condition on $\demand(c)$ is sufficient and necessary for the (global) \eqref{directIC} constraint for all active types (i.e., $q(c) > 0$). We have already shown the necessity. For sufficiency, suppose that for some $c \in [0,1]$ satisfying $\revenue'(c) - c\demand'(c) = 0$, there is some $y \neq c$ such that $y\in [0,1]$ and that \begin{equation}
    \revenue(c) - c \demand(c) < \revenue(y) - c \demand(y). \label{cond:deviation}
\end{equation}
We show that nondecreasing $\demand(\cdot)$ will lead to a contradiction following \cite{LaffontTiroleBook93} (p.157). Condition \eqref{cond:deviation} is equivalent to \[
\int_{c}^y \frac{d}{dx}[\revenue(x) - c \demand(x)]dx > 0.
\]
The first-order condition implies $\frac{d}{dx}[\revenue(x) - c \demand(x)]|_{x=c} = 0$, and thus \[
\int_{c}^y \left(\frac{d}{dx}[\revenue(x) - c \demand(x)] - \frac{d}{dx}[\revenue(x) - c \demand(x)]|_{c = x}  \right)dx > 0,
\]
which is equivalent to \[
\int_{c}^y \int_{x}^c \frac{d}{dt}[\revenue'(x) - t \demand'(x)]dt dx > 0,
\]
i.e., $\int_c^y\int_c^x(q'(x))dtdx > 0$, a contradiction to $q(\cdot)$ being nonincreasing. Thus, we conclude that the first-order condition $[\revenue'(x) - c\demand'(x)]|_{x=c} = 0$ and $\demand(c)$ being nonincreasing is sufficient for global \eqref{directIC} for all active types. The sufficiency and necessity is easy to extend to all types $c \in [0,1]$. Consider some generic $c_1, c_2 \in [0,1]$.
Since $q(c)$ is decreasing, $q(c_1) > 0$ and $q(c_2) = 0$ implies that $c_1 < c_2$. The no-subsidy constraint implies $\Pi(c_2) \leq 0$. Therefore, to make sure $c_1$ does not misreport its type to be $c_2$, we only need $\Pi(c_1) \geq 0$, implied by \eqref{directIR}.

Note that the first-order condition is equivalent to the envelope condition, implying the latter is also sufficient and necessary for optimality. We will use the envelope condition in our analysis for convenience. Recall that $\Pi(c) = \max_x \revenue(x) - c\demand(x) - \fcost\cdot \mathbbm{1}_{q(x) >0}$.
Since the maximand is affine (and thus convex) in $c$, $\Pi(c)$ is a convex function, which implies it is continuous and almost everywhere differentiable. 
By envelope theorem (e.g., \cite{MilgromSegal02Env}), $\Pi'(c) = -q(c)$ wherever differentiable. Therefore, a set of sufficient and necessary conditions for global \eqref{directIC} is that (i) $\Pi(c)$ is continuous and nonnegative; (ii) $\Pi'(c) = -q(c)$ wherever differentiable; and (iii) $q(c)$ is nonincreasing. We summarize these condition concisely in the form of ``law of motion'', \eqref{controlProfit}\eqref{controlDemand}, in the optimal control problem.

We have already shown in the text that constraint \eqref{directFS} is equivalent to $q(c) [P(q(c)) - c] - \fcost \cdot \mathbbm{1}_{q(c)>0} \geq \Pi(c)$, i.e., the unit from the market less the cost (left-hand side) must be no less than firm's profit (right-hand side), as a result of no subsidy. 
This constitutes to the current no-subsidy constraint \eqref{controlFS}.

Finally, $\Pi'(c) = -\demand(c) \leq 0$ and continuity implies $\Pi(c)$ is decreasing in $c$. Constraint \eqref{directIR} is thus equivalent to requiring $\Pi(1) \geq 0$.
Since $\alpha \leq 1$, it is always improves the welfare if the regulator transfers the profit of the highest-cost firm to the consumers, and because of quasi-linearity of firm's payoff, this does not change demand allocation. 
Therefore, it is without loss to set $\Pi(1) = 0$. 

On the other hand, there is no restriction of $q(1)$ other than being nonnegative. Thus we have the terminal conditions in \eqref{controlProfit} \eqref{controlDemand}.
\end{proof}

Lemma \ref{lem:IR_jump} simplifies our analysis by identifying a cutoff $\overline{c} \in (0,\overline{c}_{LF}]$ for exclusion, at which $q(\cdot)$ can jump from positive value to zero. 

\begin{lem}[\cite{AmadorBagwell22Regulation}] \label{lem:IR_jump}
    In any incentive compatible $q(\cdot)$, there exists a cutoff $\overline{c}   \in (0, \overline{c}_{LF}]$ such that $q(c) = 0$ for $c > \overline{c}$ and $q(c) > 0$ for $c < \overline{c}$. Moreover, if $\overline{c} \in (0, 1)$ and $\fcost>0$, then $\Pi(\overline{c}) = 0$ and $q(\overline{c}) > 0$.
\end{lem}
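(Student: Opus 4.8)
The plan is to build the exclusion cutoff from the monotonicity of the allocation, bound it above by the laissez-faire threshold using the no-subsidy and individual-rationality constraints, and then extract the two refinements at $\overline{c}$ from the continuity of $\Pi$ and the strict positivity of the fixed cost. First I would invoke Lemma \ref{lem:opt_ctrl_reform}, which shows that incentive compatibility forces $\demand(\cdot)$ to be nonincreasing. Consequently the active set $\{c : \demand(c) > 0\}$ is a down-set: if $\demand(c)>0$ and $c'<c$, then $\demand(c') \ge \demand(c) > 0$. Defining $\overline{c} := \sup\{c : \demand(c) > 0\}$ then yields $\demand(c) > 0$ for $c < \overline{c}$ and $\demand(c) = 0$ for $c > \overline{c}$, which is the claimed cutoff structure. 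Strict positivity $\overline{c} > 0$ holds for any non-degenerate mechanism (in particular the optimum): by Assumption \ref{assumption_concave}(iii) the lowest-cost type can produce profitably, so serving a positive mass of types is feasible and the relevant mechanisms have $\overline{c}>0$ (the case $\demand \equiv 0$ being vacuous).

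Next I would establish $\overline{c} \le \overline{c}_{LF}$. Suppose, toward a contradiction, that $\demand(c) > 0$ for some $c > \overline{c}_{LF}$. The no-subsidy constraint \eqref{directFS} (equivalently \eqref{controlFS}) together with \eqref{directIR} gives $\demand(c)[P(\demand(c)) - c] - \fcost \ge \Pi(c) \ge 0$. But by Lemma \ref{lem:lf}, for $c > \overline{c}_{LF}$ the maximal gross profit satisfies $\hat{\Pi}(c) = \sup_{q>0} q[P(q)-c] < \fcost$, so every positive output earns $\demand(c)[P(\demand(c)) - c] - \fcost \le \hat{\Pi}(c) - \fcost < 0$, a contradiction. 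Hence $\demand(c) = 0$ for all $c > \overline{c}_{LF}$, giving $\overline{c} \le \overline{c}_{LF}$; the bound is trivial when $\overline{c}_{LF} = 1$.

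For the refinements, fix $\overline{c} \in (0,1)$ and $\fcost > 0$. Since $\demand(c) = 0$ for $c > \overline{c}$, each such type earns $\Pi(c) = 0$; and $\Pi$ is convex, hence continuous at the interior point $\overline{c}$, so $\Pi(\overline{c}) = \lim_{c\downarrow\overline{c}}\Pi(c) = 0$. To show $\demand(\overline{c}) > 0$ I would take the left limit in the no-subsidy bound. For every $c < \overline{c}$ we have $\demand(c)[P(\demand(c)) - c] \ge \fcost$. Letting $c \uparrow \overline{c}$ and using continuity of $P$ together with $P(0) = \overline{v} < \infty$ (Assumption \ref{assumption_concave}(ii)), the left limit $\demand(\overline{c}^-) := \lim_{c\uparrow\overline{c}}\demand(c)$ satisfies $\demand(\overline{c}^-)[P(\demand(\overline{c}^-)) - \overline{c}] \ge \fcost > 0$, which forces $\demand(\overline{c}^-) > 0$: were the limit zero, the left-hand side would vanish. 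Thus exclusion occurs through a downward jump rather than a smooth taper, and setting $\demand(\overline{c}) = \demand(\overline{c}^-) > 0$ (which leaves welfare unchanged and remains incentive compatible, with $p(\overline{c}) = \overline{c} + \fcost/\demand(\overline{c})$ delivering $\Pi(\overline{c}) = 0$) gives $\demand(\overline{c}) > 0$.

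The main obstacle is this last step. The positivity of the left limit $\demand(\overline{c}^-)$—and hence the jump at the exclusion cutoff—relies crucially on $\fcost > 0$ and on the boundedness of the choke price $\overline{v}$: with a strictly positive fixed cost, the no-subsidy constraint cannot be met by an arbitrarily small positive output, so $\demand$ cannot decline continuously to zero. When $\fcost = 0$ this argument collapses and $\demand$ may approach zero continuously, which is exactly why the refinement $\demand(\overline{c})>0$ is stated only for $\fcost>0$.
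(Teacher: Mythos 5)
Your proposal is correct and follows essentially the same route as the paper's proof: monotonicity of $q$ from incentive compatibility gives the cutoff structure, the no-subsidy constraint together with the definition of $\overline{c}_{LF}$ yields $\overline{c}\le\overline{c}_{LF}$, continuity of $\Pi$ gives $\Pi(\overline{c})=0$, and the no-subsidy inequality $q(c)[P(q(c))-c]\ge\fcost>0$ near $\overline{c}$ forces the downward jump with $q(\overline{c})>0$. Even your appeal to optimality to rule out the degenerate $q\equiv 0$ case (so that $\overline{c}>0$) mirrors the paper's own argument.
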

\begin{proof}
    Since $q(c)$ is monotonically decreasing and that $q(c)$ is bounded below at $0$, there is at most one cutoff $\overline{c}$ such that $q(c) = 0$ if and only if $c > \overline{c}$. We claim $\overline{c} > 0$; otherwise, all types have $q(c) = 0$ and this is clearly worse than the laissez-faire baseline, which is always feasible. Meanwhile, $\overline{c} \leq \overline{c}_{LF}$ because $\max_{q}q[P(q) - \overline{c}_{LF}]  = k$. The left-hand side strictly decreases in $c$ when $q > 0$. Hence, $\overline{c} > \overline{c}_{LF}$ violates the no-subsidy constraint. 
    If $\overline{c} \in (0, \overline{c}_{LF})$, then by definition $q(c) = 0$ for $c \in (\overline{c}, \overline{c}_{LF}]$, which implies $\Pi(c) = 0$. By continuity of $\Pi(c)$, we conclude that $\Pi(\overline{c}) = 0$. The no-subsidy constraint then implies $q(c)[P(q(c))- c] \geq  k$ on $(\overline{c} - \epsilon, \overline{c})$, for some small $\epsilon> 0$. When $k > 0$, it is necessary that $q(c)$ is bounded away from $0$ for $c \in (\overline{c} - \epsilon, \overline{c})$. Without loss, we can set $q(\overline{c}) = \lim_{c \rightarrow \overline{c}_-}q(c) > 0$. That is, $q(\cdot)$ jumps from positive value to zero at $\overline{c}$.
\end{proof}

Using Lemma \ref{lem:opt_ctrl_reform} and \ref{lem:IR_jump}, we can solve the optimal control problem in two steps: (i) Fixing $\overline{c} \in [0,\overline{c}_{LF}]$, solve a truncated version of optimal control \eqref{controlOBJ}-\eqref{IC_at_jump} by replacing the terminal type with $\overline{c}$ and the terminal values as $\Pi(\overline{c}) = 0, q(\overline{c}) \geq 0$; (ii) optimize over $\overline{c}$.

For the truncated version of control problem \eqref{controlOBJ}-\eqref{IC_at_jump}, we derive a set of necessary and sufficient conditions. Define the constraint value of \eqref{controlFS} to be \[
g(q, \Pi, c) = qP(q) - cq - \Pi - \fcost.
\]
By Assumption \ref{assumption_concave}, $g$ is concave in $(q, \Pi)$, and thus quasi-concave. On the other hand, the integrand of the objective is concave in $(q, \Pi)$  because it is linear in $\Pi$, and the second-order derivative with respect to $q$ is $P'(q) < 0$. As a result, the following Hamiltonian is concave in $(q, \Pi)$:  \[
H = [V(\demand) - c\demand - (1-\alpha)\Pi - \fcost]f - \lambda_\Pi \demand + \lambda_{\demand} \control. 
\]
Define the Lagrangian to be \[
L = H + \gamma(c) \cdot g(\demand, \Pi, c),
\]
where $\gamma(c) \geq 0$ is some piecewise continuous function of $c$.
By Theorem 5.1 in \cite{SeierstadSydsaeter93OptCtrl} (pp.\ 317-319), we derive the following sufficient and necessary conditions\footnote{
The necessity is guaranteed according to  pp.\ 335 - 336 of \cite{SeierstadSydsaeter93OptCtrl}.
The variable $\beta$ in \eqref{ctrl:lambda_pi} and \eqref{ctrl:lambda_q} is a scale parameter for the jump of costate at $c=0$.
By conditions (5.37) - (5.38) of \cite{SeierstadSydsaeter93OptCtrl}, the costate cannot jump at $c \in (0,1]$. Moreover, 
the no-subsidy constraint \eqref{controlFS} being slack at $c = 0$ or $q'(0)$  not existing implies $\beta = 0$. 
In condition \eqref{ctrl:lambda_pi}, $\lambda_\Pi(0_+) = -\beta \frac{\partial g(q, \Pi, 0)}{\partial \Pi} = \beta \geq 0$.
For condition \eqref{ctrl:lambda_q} to be consistent with the primal condition \eqref{ctrl:control}, we need $\lambda_\demand(0_+) = -\beta\frac{\partial g(q, \Pi, c)}{\partial q}|_{c=0} \geq 0$, which imposes a requirement on $q$, i.e., $q(0) \geq q_{LF}(0)$.
} from the Lagrangian: $\forall c \in [0,1]$ \begin{align}
    & \lambda_\Pi' = (1-\alpha)f +\gamma, \text{ with }\lambda_\Pi(0) = 0, \lambda_\Pi(0_+) = \beta \geq 0, \lambda_\Pi(1) \text{ free }  \label{ctrl:lambda_pi}\\
      & \lambda_\demand' = -(P(\demand) - c)f +\lambda_\Pi - \gamma \left[P(q) + qP'(q) - c\right], \nonumber \\
      & \qquad \qquad \qquad  \text{ with }\lambda_\demand(0) = 0, \lambda_\demand(0_+) = -\beta \frac{\partial g(\demand, \Pi, c)}{\partial \demand}, \lambda_\demand(1)\geq 0  \label{ctrl:lambda_q}\\
   &\lambda_\demand \geq 0, z\lambda_\demand = 0 \text{ with } z(c) < 0 \text{ or } q(c_-) > q(c_+) \Rightarrow \lambda_\demand(c) = 0\label{ctrl:control}\\
   &\gamma \geq 0, \gamma \left[\demand P(\demand) - c\demand - \Pi - \fcost\right]= 0 \text{ for } c< \overline{c}\label{ctrl:fs}
\end{align}
Note that we allow for discontinuity of $q(c)$ and the corresponding condition is in \eqref{ctrl:control} (see \cite{XiaoPontryagin} (p.\ 7)). 
We are ready to prove Proposition \ref{prop_when}, which provides a simple condition for the scope of intervention. 

\begin{proof}[Proof of Proposition \ref{prop_when}]
    Recall that under the laissez-faire baseline (Lemma \ref{lem:lf}), $q(c) = q_{LF}(c)$ and $\Pi(c) = q(c)[P(q(c)) - c] - \fcostapp$. We propose the multiplier function $\gamma(c)$ that is consistent of the laissez-faire solution, and impose conditions for the multiplier function to be valid. 
      
    By the first-order condition of the laissez-faire problem, we have \[
P(q_{LF}(c)) = c - q_{LF}(c) \cdot P'(q_{LF}(c)), \forall c \in [0, \overline{c}_{LF}],
\]
which implies that the term $P(q) + qP'(q) - c$ vanishes in \eqref{ctrl:lambda_q}. Plugging the solution candidate into condition \eqref{ctrl:lambda_q}, we get
 \[
\lambda_q'(c) = -[P(q_{LF}(c)) - c]f(c) + \lambda_\Pi(c),
\]
with \[
\lambda_\Pi(c) = (1-\alpha)F(c) + \Gamma(c) + \beta,
\]
where the cumulative multiplier function $\Gamma(c) := \int_0^c \gamma(x)dx \geq 0$ is nondecreasing in $c$.
Since the laissez-faire demand is strictly decreasing, we have $\lambda_q(c) \equiv 0$ for all $c \in [0,\overline{c}_{LF}]$, which in turn requires $\lambda_q'(c) \equiv 0$ for all $c \in [0,\overline{c}_{LF}]$, i.e.,\begin{equation}
    \Gamma(c) + \beta = [P(q_{LF}(c)) - c]f(c) - (1-\alpha)F(c), \forall c \in [0,\overline{c}_{LF}], \label{lf_optimal_derivative}
\end{equation}
which proposes a guess of the cumulative multiplier. 
A sufficient and necessary condition for the laissez-faire allocation to be truncated optimal (for $\overline{c} = \overline{c}_{LF}$) is to find a piecewise continuous, nonnegative function $\gamma(c)$ and some nonnegative constant $\beta$ that correspond to the above proposed cumulative multiplier $\Gamma(c) + \beta$, which is to require
\[
[P(q_{LF}(c)) - c]f(c) - (1-\alpha)F(c) \text{ is nondecreasing in } c \in [0, \overline{c}_{LF}],
\]
and is nonnegative when evaluated at $c=0$. Note that $P(q_{LF}(0) - 0)f(0) - (1-\alpha)F(0)$ is nonnegative because $F(0) = 0$ and $P(q_{LF}(c)) \geq c$. This results in condition \eqref{lf_optimal} in Proposition \ref{prop_when}. Finally, note that \eqref{lf_optimal} implies the same expression is nondecreasing over $[0, \overline{c}]$, for all $\overline{c} \leq \overline{c}_{LF}$. That is, the laissez-faire policy is truncated optimal for any truncation. Nevertheless, the truncation $\overline{c} = \overline{c}_{LF}$ is uniquely optimal because any $\overline{c} < \overline{c}_{LF}$ leads to more exclusion, which lowers the welfare. This finishes the proof.
\end{proof}

\medskip

Next, we turn to the optimal policy when condition \eqref{lf_optimal} fails. We pin down the behavior of the optimal policy by exploiting the necessity of  \eqref{ctrl:lambda_pi} - \eqref{ctrl:fs}. Recall that Lemma \ref{lem:IR_jump} shows that $q(\cdot)$ can jump at $\overline{c}$. The following Lemma \ref{lem:continuous} shows that $q(\cdot)$ does not have other discontinuity than $\overline{c}$. This is consistent with the intuition of continuous policies in \cite{BaronMyerson82}. A jump at any non-excluded type is either not incentive compatible, or not optimally extracting the firm's rent. Strict concavity of the market revenue function (Assumption \ref{assumption_concave}) and continuous density function $f$ of types constitute sufficient conditions to avoid interior jumps for our policy.

\begin{lem}\label{lem:continuous}
In Problem \eqref{controlOBJ} - \eqref{IC_at_jump}, it is without loss to restrict attention to a demand policy $q(c)$ that is continuous on $[0, \overline{c})$.
\end{lem}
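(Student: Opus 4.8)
The plan is to show that the optimal $\demand(\cdot)$ has no downward jump at any interior type $\sigma\in(0,\overline{c})$. Monotonicity of $\demand$ already restricts its discontinuities to a countable set of downward jumps, and Lemma~\ref{lem:IR_jump} accounts for the single jump to zero at $\overline{c}$, so it suffices to rule out interior jumps. I would argue by contradiction from the necessary conditions \eqref{ctrl:lambda_pi}--\eqref{ctrl:fs}. Suppose $\demand$ jumps at $\sigma\in(0,\overline{c})$, from $q_1:=\demand(\sigma_-)$ down to $q_2:=\demand(\sigma_+)$ with $q_1>q_2>0$. By the jump clause of \eqref{ctrl:control}, $\lambda_\demand(\sigma)=0$; moreover $\lambda_\demand\ge 0$ everywhere and, since the costate does not jump on $(0,1]$, $\lambda_\demand$ is continuous at $\sigma$. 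Hence $\sigma$ is a local minimum of the continuous, piecewise-$C^1$ function $\lambda_\demand$, which forces $\lambda_\demand'(\sigma_-)\le 0\le\lambda_\demand'(\sigma_+)$, i.e.\ $\lambda_\demand'(\sigma_+)-\lambda_\demand'(\sigma_-)\ge 0$.

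The heart of the argument is to evaluate this derivative jump through the costate law \eqref{ctrl:lambda_q}. Because $\lambda_\Pi$ is continuous while $\demand$ drops from $q_1$ to $q_2$, the one-sided derivatives differ by
\begin{multline*}
\lambda_\demand'(\sigma_+)-\lambda_\demand'(\sigma_-)
= [P(q_1)-P(q_2)]f(\sigma)\\
+ \gamma(\sigma_-)\bigl[P(q_1)+q_1P'(q_1)-\sigma\bigr]
- \gamma(\sigma_+)\bigl[P(q_2)+q_2P'(q_2)-\sigma\bigr].
\end{multline*}
Since $q_1>q_2$, $P$ is strictly decreasing, and $f(\sigma)>0$, the leading term $[P(q_1)-P(q_2)]f(\sigma)$ is strictly negative. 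A contradiction with $\lambda_\demand'(\sigma_+)-\lambda_\demand'(\sigma_-)\ge 0$ will therefore follow once I show that the two multiplier terms vanish.

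To dispatch the multiplier terms I would use two facts about the no-subsidy constraint. First, whenever \eqref{controlFS} binds (so $\gamma>0$) the collected tax is zero and the firm prices at its unconstrained optimum, so $\demand=q_{LF}$ on a neighborhood; by the laissez-faire first-order condition (Lemma~\ref{lem:lf}), marginal revenue equals cost, $P(\demand)+\demand P'(\demand)=\sigma$, and the corresponding bracket is zero. Second, a unit tax can only depress demand, so $\demand(c)\le q_{LF}(c)$ for every $c$. These imply: NS cannot bind on both sides of $\sigma$ (else $\demand=q_{LF}$ is continuous across $\sigma$, contradicting the jump); if NS binds only on the left then $q_1=q_{LF}(\sigma)$ kills the $\gamma(\sigma_-)$ term while $\gamma(\sigma_+)=0$; the mirror-image case with NS binding only on the right would require $q_1>q_2=q_{LF}(\sigma)$, i.e.\ $q_1>q_{LF}(\sigma)$, which is precluded by $\demand\le q_{LF}$; and if NS is slack on both sides both terms vanish outright. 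In every admissible case both multiplier terms vanish, leaving $\lambda_\demand'(\sigma_+)-\lambda_\demand'(\sigma_-)=[P(q_1)-P(q_2)]f(\sigma)<0$, the desired contradiction.

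The main obstacle is the rigorous handling of these no-subsidy multiplier terms, not the sign computation, which is immediate. Two points need care: (i) justifying that a binding \eqref{controlFS} forces the laissez-faire first-order condition—this rests on reading a binding no-subsidy constraint as a zero-tax (delegation) regime in which the firm optimizes against unregulated demand, so that $\demand=q_{LF}$ and marginal revenue equals cost there; and (ii) establishing the monotone feasibility bound $\demand\le q_{LF}$, which is intuitive (taxes only shrink demand) but must be argued from feasibility \eqref{controlFS} and the firm's incentives. A secondary technical point is confirming that $\lambda_\demand$ is piecewise $C^1$ with well-defined one-sided derivatives at $\sigma$, so the local-minimum inequality is legitimate. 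An alternative, more constructive route—consistent with the reading that an interior jump is either not incentive compatible or fails to extract rent optimally—would smooth the jump into a strictly decreasing segment and show, using strict concavity of the trading surplus $V(\demand)-c\demand$ (Assumption~\ref{assumption_concave}) and continuity of $f$, that welfare weakly improves while \eqref{IC_at_jump}, \eqref{controlProfit}, and \eqref{controlFS} are preserved; I would keep the costate argument as primary and use the constructive one as a cross-check.
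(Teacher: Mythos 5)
Your overall strategy---using $\lambda_\demand(\sigma)=0$ at a jump, $\lambda_\demand\ge 0$ everywhere, hence $\lambda_\demand'(\sigma_-)\le 0\le\lambda_\demand'(\sigma_+)$, and then deriving the opposite strict inequality from the costate law \eqref{ctrl:lambda_q}---is exactly the paper's argument, and your sign computation for the leading term $[P(q_1)-P(q_2)]f(\sigma)<0$ is correct. However, your treatment of the multiplier terms rests on two claims that are false in this model, and you flagged this part as the main obstacle for good reason. First, a binding no-subsidy constraint does \emph{not} force $\demand=q_{LF}$ on a neighborhood: substituting the binding constraint into the envelope condition gives $[\demand P(\demand)-c\demand]'=-\demand$, which is satisfied either by $\demand=q_{LF}$ \emph{or} by $\demand'=0$ (a flat/bunching segment). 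The optimal mechanism in fact has such a bunching region $[c_L,\hat c)$ where \eqref{controlFS} binds but $\demand(c)=P^{-1}(\hat p)\ne q_{LF}(c)$. Second, the feasibility bound $\demand(c)\le q_{LF}(c)$ is not true: the constraint $q_\tau(p)\le P^{-1}(p)$ bounds demand \emph{at a given price}, but the tax induces low-cost types to post a price below their laissez-faire price, so they sell \emph{more} than $q_{LF}(c)$ (indeed, Lemma \ref{lem:no-subsidy-behavior} shows $q>q_{LF}$ precisely where the constraint transitions from binding to slack). Both of your case eliminations (NS binding on both sides is impossible; NS binding only on the right is impossible) therefore fail.

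The correct way to handle the binding cases, as in the paper, is not to make the multiplier terms vanish but to sign them. At an interior jump the firm must be indifferent between $q(\sigma_-)$ and $q(\sigma_+)$, i.e.\ $q(\sigma_-)[P(q(\sigma_-))-\sigma]=q(\sigma_+)[P(q(\sigma_+))-\sigma]$; strict concavity of $qP(q)$ then forces the jump to straddle the monopoly quantity, $q(\sigma_-)>q_{LF}(\sigma)>q(\sigma_+)$ (when NS binds on both sides), or the appropriate one-sided version in the mixed cases (using that a slack constraint means $\Pi<$ market profit, which combined with concavity rules out jumps entirely on one side of $q_{LF}(\sigma)$). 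This gives $[qP(q)]'|_{q(\sigma_+)}>\sigma>[qP(q)]'|_{q(\sigma_-)}$, so $-\gamma(\sigma_-)\bigl([qP(q)]'|_{q(\sigma_-)}-\sigma\bigr)\ge 0\ge -\gamma(\sigma_+)\bigl([qP(q)]'|_{q(\sigma_+)}-\sigma\bigr)$, and the multiplier terms reinforce rather than cancel, preserving $\lambda_\demand'(\sigma_-)>\lambda_\demand'(\sigma_+)$ and the contradiction. Your fallback idea of smoothing the jump and invoking strict concavity of the surplus is closer in spirit to this, but as written the primary argument has a genuine gap in every case where $\gamma>0$ on at least one side.
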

\begin{proof}
Without loss, we can restrict attention to left- and right- continuous $\demand(c)$, because single-point jumps account to zero measure. 
    We first show that $\demand(c)$ has no jump for interior $c \in (0,\overline{c})$. Our argument follows \cite{MorrisShadmehr23Inspiring}, which dates back to \cite{Arrow666}. Suppose there is a jump of $\demand(c)$ at some $\sigma \in (0,\overline{c})$. Since $\demand(c)$ is decreasing, we must have $\demand(\sigma_+) < \demand(\sigma_-)$, where $q(\sigma_+) := \lim_{c \rightarrow \sigma_+} \demand(c)$, and $\demand(\sigma_-) := \lim_{c \rightarrow \sigma_-}\demand(c)$. 
    
    Now, we discuss with respect to whether the no-subsidy constraint \eqref{controlFS} binds around $\sigma$. 
    
    \vspace{.1in} \noindent{\it Case 1: The no-subsidy constraint  
    \eqref{controlFS} is slack for both $\demand(\sigma_-)$ and $\demand(\sigma_+)$.} By the complementary slackness condition \eqref{ctrl:fs},  $\gamma(\sigma_+) = \gamma(\sigma_-) = 0$. Recall that the ``law-of-motion'' \eqref{ctrl:lambda_q} of the costate $\lambda_q$ is given by
    \[
    \lambda_\demand'(c) = -(P(\demand(c)) - c)f(c) + \lambda_\Pi(c) - \gamma(c) \left[q(c)\cdot P'(q(c)) + P(q(c)) - c\right],
    \]
    for $c\in (\sigma-\epsilon, \sigma) \cup (\sigma, \sigma+\epsilon)$, for some small $\epsilon>0$; and that \[
    \lambda_\Pi(c) = \underbrace{\lambda_\Pi(0_+)}_{=\beta} + \int_0^c \lambda_\Pi'(x)dx =  (1-\alpha)F(c) + \underbrace{ \int_0^c\gamma(x)dx}_{:= \Gamma(c)} + \beta.
    \]
  The above expression implies $\lambda_\Pi(\sigma_-) = \lambda_\Pi(\sigma_+)$. Therefore, when $\gamma(\sigma_+) = \gamma(\sigma_-) = 0$, the difference between $\lambda_\demand'(\sigma_-)$ and $\lambda_\demand'(\sigma_+)$ is governed 
  by the term $-(P(\demand(c)) - c)f(c)$. Because (i) $\demand(\sigma_-) > \demand(\sigma_+)$, (ii) $f$ is continuous at $\sigma$, and (iii) $-P(\cdot)$ is strictly increasing, we conclude that 
    \begin{equation}
         \lambda_\demand'(\sigma_-)  > \lambda_\demand'(\sigma_+). \label{continuity_contradiction}
    \end{equation}
    On the other hand, $\lambda_\demand(\sigma) = 0$ at the jump point of its corresponding state $\demand(\cdot)$.\footnote{
See condition (3.74) of Theorem 3.7 in Chapter 3,  \cite{SeierstadSydsaeter93OptCtrl} (p.\ 196-198). Also see condition (4.13) in  \cite{Hellwig10ECMA} and \cite{XiaoPontryagin} (p.\ 7).
    }
    Since $\lambda_\demand(\sigma) \geq 0$ by the primal condition \eqref{ctrl:control}, we have\[
     \lambda_\demand'(\sigma_-)\leq 0, \lambda_\demand'(\sigma_+)\geq 0,
    \] 
    A contradiction to condition \eqref{continuity_contradiction}. Therefore, there cannot be a jump of $\demand(\cdot)$ at any $\sigma \in (0,\overline{c})$ when the no-subsidy constraint \eqref{controlFS} is slack for both $\demand(\sigma_-)$ and $\demand(\sigma_+)$.

    \smallskip
    
    In each of the remaining cases, we establish condition \eqref{continuity_contradiction} and thus reach a contradiction as in Case 1.

    \vspace{.1in} \noindent{\it Case 2: The no-subsidy constraint  
    \eqref{controlFS} binds for both $q(\sigma_-)$ and $q(\sigma_+)$.} We first show that the only incentive compatible situation is $q(\sigma_-) > q_{LF}(\sigma) > q(\sigma_+)$ for this case, i.e., the jump should ``cross'' the laissez-faire demand from above.
    Note that the firm must be indifferent at the jump point:
    \[
    q(\sigma_-)(P(q(\sigma_-)) - \sigma) - \fcost= q(\sigma_+)(P(q(\sigma_+)) - \sigma) - \fcost,
    \]
    where $\fcost$ appears on both sides of equation whenever we compare profits and is henceforth omitted.
    Suppose, for the sake of contradiction, that $q_{LF}(\sigma) \geq q(\sigma_-)  > q(\sigma_+)$. 
    By strict concavity of the profit function, \[
    \frac{\partial[q(P(q) - \sigma)]}{\partial q}\bigg|_{q < q_{LF}(\sigma)} > 0,
    \] 
    which implies \[
      q(\sigma_-)(P(q(\sigma_-)) - \sigma) > q(\sigma_+)(P(q(\sigma_+)) - \sigma),
    \]
    because $q(\sigma_-) > q(\sigma_+)$. A contradiction. The infeasibility of $q(\sigma_-)  > q(\sigma_+) \geq q_{LF}(\sigma)$ follows a mirror argument. 
    
    Now we focus on $q(\sigma_-) > q_{LF}(\sigma) > q(\sigma_+)$.
    By the complementary slackness condition \eqref{ctrl:fs}, $\gamma(\sigma_-) \geq 0, \gamma(\sigma_+) \geq 0$. In Case 1, we have already shown that the first term $-(P(\demand(c)) - c)f(c)$ jumps down, and that $\lambda_\Pi(c)$ is continuous. What remains is to show \[
    -\gamma(\sigma_-) \left[q(\sigma_-)\cdot P'(q(\sigma_-)) + P(q(\sigma_-)) - \sigma\right] \geq -\gamma(\sigma_+) \left[q(\sigma_+)\cdot P'(q(\sigma_+)) + P(q(\sigma_+)) - \sigma\right],
    \]
    or equivalently,
    \[
    -\gamma(\sigma_-) ([qP(q)]'|_{q = q(\sigma_-)} - \sigma) \geq -\gamma(\sigma_+)([qP(q)]'|_{q = q(\sigma_+)} - \sigma).
    \]
 By strict concavity of $qP(q)$, we obtain $[qP(q)]'|_{q = q(\sigma_+)} > \sigma > [qP(q)]'|_{q = q(\sigma_-)}$, which implies the above inequality. Hence, we establish condition \eqref{continuity_contradiction}, leading to a contradiction. As a result, there cannot be a jump of $q(\cdot)$ at any $\sigma \in (0,\overline{c})$ when the no-subsidy constraint \eqref{controlFS} binds for both $q(\sigma_-)$ and $q(\sigma_+)$.

 \vspace{.1in} \noindent{\it Case 3: The no-subsidy constraint  
    \eqref{controlFS} is slack for $q(\sigma_-)$ and binds for $q(\sigma_+)$.} 
    Similar to the previous case,  it is not incentive compatible for $q(\sigma_-) > q(\sigma_+) \geq q_{LF}(\sigma)$. By strict concavity of the laissez-faire profit function, the profit is decreasing in $q$ when $q$ is larger than $q_{LF}$, i.e., \[
    q(\sigma_-)(P(q(\sigma_-)) - \sigma) < q(\sigma_+)(P(q(\sigma_+)) - \sigma).
    \]
    Note that $\Pi(\sigma_-) < q(\sigma_-)(P(q(\sigma_-)) - \sigma)$ because the no-subsidy constraint is slack, implying $\Pi(\sigma_-) < \Pi(\sigma_+)$, which violates the IC. Therefore, we know that at least $q(\sigma_+) < q_{LF}(c)$, regardless the relationship between $q(\sigma_-)$ and $q_{LF}(c)$. But then, 
    \[
    -\underbrace{\gamma(\sigma_-)}_{=0} \left[q(\sigma_-)\cdot P'(q(\sigma_-)) + P(q(\sigma_-)) - \sigma\right]  \geq -\gamma(\sigma_+) \left[q(\sigma_+)\cdot P'(q(\sigma_+)) + P(q(\sigma_+)) - \sigma\right], 
    \]
    the right-hand side of which is non-positive due to the same reasoning as Case 2. Again, we have condition \eqref{continuity_contradiction}, leading to a contradiction. Hence, there cannot be a jump of $q(\cdot)$ at any $\sigma \in (0,\overline{c})$ when the no-subsidy constraint \eqref{controlFS} is slack for $q(\sigma_-)$ and binds for $q(\sigma_+)$.

    \vspace{.1in} \noindent{\it Case 4: The no-subsidy constraint  
    \eqref{controlFS} binds for $\demand(\sigma_-)$ and is slack for $\demand(\sigma_+)$.} We first show that $q(\sigma_-) > q_{LF}(\sigma)$. Suppose not, then $q_{LF}(\sigma) \geq q(\sigma_-) > q(\sigma_+)$. By concavity of the laissez-faire profit function, we have \[
    q(\sigma_-)(P(q(\sigma_-)) - \sigma) > q(\sigma_+)(P(q(\sigma_+)) - \sigma),
    \]
    and the right-hand side is strictly greater than $\Pi(\sigma_+)$ because the no-subsidy constraint is slack, violating the IC constraint. We conclude that $q(\sigma_-) > q_{LF}(\sigma)$. Again, this implies 
\[
-\gamma(\sigma_-)\left[q(\sigma_-)\cdot P'(q(\sigma_-)) + P(q(\sigma_-)) - \sigma\right]  \geq -\underbrace{\gamma(\sigma_+)}_{=0} \left[q(\sigma_+)\cdot P'(q(\sigma_+)) + P(q(\sigma_+)) - \sigma\right], 
\]
the left-hand side of which is nonnegative following the same reasoning as Case 2. We establish condition \eqref{continuity_contradiction}, leading to a contradiction. Hence, there cannot be a jump of $q(\cdot)$ at any $\sigma \in (0,\overline{c})$ when the no-subsidy constraint \eqref{controlFS} binds for $q(\sigma_-)$ and is slack for  $q(\sigma_+)$.

   \vspace{.1in}\noindent To sum up, there cannot be a jump at any interior $c \in (0,\overline{c})$. Finally, it is without loss to set $\demand(0) = \demand(0_+)$ and $\demand(\overline{c}) = \demand(\overline{c}_-)$, for $c = 0$ or $\overline{c}$ has zero measure and does not affect the regulator's objective value. In conclusion, it is without loss to restrict attention to continuous demand policy $\demand(c)$ over $[0, \overline{c})$ for Problem \eqref{controlOBJ}-\eqref{IC_at_jump}.
\end{proof}

With continuity, we can show that the transformed the truncated version of problem \eqref{controlOBJ} - \eqref{IC_at_jump} admits an optimal solution. As a result, the negation of condition \eqref{lf_optimal} is sufficient and necessary for intervention to be optimal. 

\begin{lem}\label{lem:existence}
There exists an (essentially) unique optimal solution $(q^*, \Pi^*, z^*)$ for the truncated version of control problem \eqref{controlOBJ} - \eqref{IC_at_jump}. Furthermore, it is without loss to focus on deterministic mechanism.
\end{lem}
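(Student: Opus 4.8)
The plan is to recast the truncated problem as a \emph{concave program in the single decision variable} $\demand$, recover the profit path $\Pi$ and the control $\control$ from $\demand$, and then run the direct method for existence, strict concavity for uniqueness, and Jensen's inequality for the deterministic reduction. By the law of motion \eqref{controlProfit} together with the terminal value $\Pi(\overline{c})=0$, any feasible profit path is the integral $\Pi(c)=\int_c^{\overline{c}}\demand(x)\,dx$, so $\Pi$ is a continuous linear functional of $\demand$; likewise $\control=\demand'$ is pinned down almost everywhere by $\demand$. Thus the only free object is a nonincreasing $\demand\colon[0,\overline{c}]\to[0,\qmax]$, which by Lemma \ref{lem:continuous} may be taken continuous on $[0,\overline{c})$. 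The objective \eqref{controlOBJ} then becomes a functional of $\demand$ alone.

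For existence I would use Helly's selection theorem. The feasible maps $\demand$ are uniformly bounded in $[0,\qmax]$ and nonincreasing, so any maximizing sequence $\{\demand_n\}$ has a subsequence converging pointwise (at every continuity point of the limit) to a nonincreasing limit $\demand^*$. Bounded pointwise convergence and dominated convergence give $\Pi_n\to\Pi^*$ pointwise and convergence of the objective, since $V$ and the integrand are continuous and bounded on $[0,\qmax]$. It then remains to verify that feasibility is preserved in the limit: monotonicity passes automatically, and the no-subsidy constraint \eqref{controlFS} is preserved because $\demand\mapsto\demand P(\demand)$ is continuous. The one delicate point is the fixed-cost indicator, which I discuss next.

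The fixed-cost indicator is the step I expect to be the main obstacle, since it makes $g(\demand,\Pi,c)=\demand P(\demand)-c\demand-\Pi-\fcost$ discontinuous at $\demand=0$, threatening closedness of the feasible set. The resolution is that, whenever $\fcost>0$, the constraint $\demand P(\demand)-c\demand-\fcost\ge\Pi(c)\ge 0$ forces $\demand(c)\bigl(P(\demand(c))-c\bigr)\ge\fcost>0$, so $\demand$ is bounded away from zero on every interval $[0,\overline{c}-\epsilon]$ (consistent with Lemma \ref{lem:IR_jump}). Consequently $\mathbbm{1}_{\demand(c)>0}\equiv 1$ on the interior, no mass can escape to $\demand=0$ in the limit, and $g$ is smooth and concave there; hence $\demand^*$ is feasible and attains the supremum. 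For $\fcost=0$ the indicator is irrelevant and the argument is simpler. For uniqueness I would invoke strict concavity: $V(\demand)$ is strictly concave because $V''=P'<0$, $-c\demand$ is linear, and $-(1-\alpha)\Pi$ is linear in $\demand$ through the integral representation, while by Assumption \ref{assumption_concave} the constraint function $g$ is concave in $(\demand,\Pi)$, so the feasible set is convex. Two distinct maximizers would make their midpoint feasible with strictly larger objective, a contradiction; thus $\demand^*$ is unique, $\Pi^*$ and $\control^*=(\demand^*)'$ are then determined (the latter up to a null set), giving essential uniqueness of $(\demand^*,\Pi^*,\control^*)$.

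Finally, the reduction to deterministic mechanisms follows from the same concavity. A stochastic mechanism assigns each type $c$ a lottery over quantities with mean $\overline{\demand}(c)$ and expected revenue $\overline{\revenue}(c)$, yielding interim profit $\Pi(c)=\overline{\revenue}(c)-c\,\overline{\demand}(c)-\fcost$; replacing the lottery by the deterministic quantity $\overline{\demand}(c)$ priced at $\overline{\revenue}(c)/\overline{\demand}(c)$ leaves the firm's reduced-form payoff, and hence \eqref{directIC} and \eqref{directIR}, exactly unchanged. This substitution keeps $-c\,\overline{\demand}$ and $\Pi$ fixed, weakly raises the consumer-value term because $\E[V(\demand)]\le V(\overline{\demand})$, and preserves \eqref{controlFS} because $\E[\demand P(\demand)]\le\overline{\demand}\,P(\overline{\demand})$ by strict concavity of $\demand P(\demand)$; monotonicity of $\overline{\demand}(\cdot)$ is inherited from expected quantities. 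The deterministic mechanism is therefore feasible and weakly better, so it is without loss to restrict attention to deterministic policies.
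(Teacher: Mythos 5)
Your proposal is correct, and the uniqueness and deterministic-reduction steps are essentially the paper's own argument: the paper likewise takes two putative optima, forms their midpoint, uses strict concavity of $qP(q)$ to keep the no-subsidy constraint and strict concavity of the objective to get a strict improvement, and then notes that the same averaging argument eliminates stochastic mechanisms. Where you genuinely diverge is existence. The paper invokes a Filippov--Cesari-type existence theorem for optimal control (Theorem 5.5 of Seierstad and Syds{\ae}ter), which requires artificially compactifying the control set to $z\in[-Z,0]$ and verifying convexity of the velocity set $N(q,\Pi,[-Z,0],c)$; you instead collapse the problem onto the single state $\demand$ via $\Pi(c)=\int_c^{\overline{c}}\demand(x)\,dx$ and run the direct method with Helly's selection theorem on uniformly bounded nonincreasing functions. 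Your route is more elementary and self-contained (it needs no control-theoretic machinery and no bound on $z$), and you correctly identify and resolve the one delicate point---closedness of the feasible set at $\demand=0$ when $\fcost>0$---by observing that the binding combination of \eqref{controlFS} and \eqref{directIR} forces $\demand$ uniformly away from zero wherever it is positive, which is exactly the content of Lemma \ref{lem:IR_jump}. Two small remarks: Helly already handles monotone functions, so your appeal to Lemma \ref{lem:continuous} for continuity of $\demand$ is unnecessary (and is cleaner to avoid, since that lemma's proof uses necessary conditions that presuppose an optimum); and in the deterministic reduction, a lottery mixing atoms at $\demand=0$ with positive quantities changes the expected fixed-cost term $\E[\fcost\cdot\mathbbm{1}_{\demand>0}]$ under averaging, so the firm's payoff is not literally unchanged in that corner case---though the paper's own treatment elides the same point, and for $\fcost$ small or zero it is immaterial.
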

\begin{proof}
We first show existence by invoking Theorem 5.5 of \cite{SeierstadSydsaeter93OptCtrl}, and check the problem \eqref{controlOBJ} - \eqref{IC_at_jump} satisfies their conditions. Denote the pointwise welfare as \[
    W(q, \Pi, c) := V(q) - cq - \fcost - (1-\alpha)\Pi
    \]
    By Lemma \ref{lem:continuous}, the objective integrand $W(q, \Pi, c)f(c)$, the law-of-motion $-q$, and the constraint value $q[P(q) - c] - \fcost - \Pi$ are continuous in $(q, \Pi, z, c)$.  For technical purpose, we can assume that $z$ has a negative lower bound $-Z < 0$. Since $q$ is continuous, monotone, and bounded, we can find a large enough $Z$ so that the restricted problem with $z \in [-Z, 0]$ is equivalent to the original problem with $z \leq 0$. Thus, we establish a compact control set $[-Z,0]$. Also, $\Pi$ is bounded because $\Pi \leq qP(q) - cq -\fcost\leq qP(q)$.
    Finally, for each fixed $(q, \Pi, c)$, the set \[
    N(q, \Pi, [-Z, 0], c) := \{(W(q, \Pi, c)f(c) + \eta, -q, z)\colon \eta \leq 0, z \in [-Z, 0]\} \subseteq \R^3
    \]
    is a rectangle unbounded on one side, which is convex. Therefore, all conditions of Theorem 5.5 of \cite{SeierstadSydsaeter93OptCtrl} hold. As a result, there exists an optimal solution $(q^*, \Pi^*, z^*)$ for the optimal control problem \eqref{controlOBJ} - \eqref{IC_at_jump}.

    For uniqueness, we follow the approach of \cite{KangWatt24}.\footnote{
    We also note that unlike \cite{KangWatt24}, our regulation problem is a convex program with \textit{non}-majorization constraints: \[
q(c)P(q(c)) - cq(c) - \fcost \geq \int_c^1 q(x)dx.
\]
Therefore, we cannot proceed using the novel approach provided in \cite{KangWatt24}, neither can we adopt the approach of \cite{KleinerMajorizationECMA}.
    }
    Suppose there are two distinct optimal mechanisms $(q_1, \Pi_1)$ and $(q_2, \Pi_2)$ that differ for some positive-measure range of costs. Then each of the mechanism must satisfy IC and the no-subsidy constraint $qP(q) - cq \geq \Pi$. Consider an alternative mechanism $(\frac{q_1+q_2}{2}, \frac{\Pi_1+\Pi_2}{2})$. The new mechanism satisfies IC because of linearity. It also satisfies the no-subsidy constraint because $qP(q)$ is strictly concave, we have for each $c$ \[
    \frac{q_1+q_2}{2}P\left(\frac{q_1+q_2}{2}\right) - c\frac{q_1+q_2}{2} - \fcost > \frac{q_1P(q_1)+q_2P(q_2)}{2} - c\frac{q_1+q_2}{2} - \fcost \geq \frac{\Pi_1+\Pi_2}{2}.
    \]
    Since the objective is strictly concave, the mechanism  $(\frac{q_1+q_2}{2}, \frac{\Pi_1+\Pi_2}{2})$ strictly dominates both $(q_1, \Pi_1)$ and $(q_2, \Pi_2)$, contradicting to their optimality. This also implies the optimal deterministic mechanism dominates any stochastic mechanism. For any stochastic mechanism with support of a set of deterministic mechanisms, the arithmetic average of the deterministic mechanisms dominates the stochastic one according to the same argument. 

    Hence, we conclude that there is an essentially unique optimal solution $(q^*, \Pi^*, z^*)$ that is deterministic. 
\end{proof} 

Next, we characterize the behavior of the no-subsidy constraint. 
The following two lemmata show that the no-subsidy constraint \eqref{controlFS} binds for $c \rightarrow 0$ and it turns from binding to slack when and only when $q(c) > q_{LF}(c)$.

\begin{lem}\label{lem:no_tax_all}
Suppose condition \eqref{lf_optimal} in Proposition \ref{prop_when} is violated, then in any optimal solution $(q, \Pi)$, 
the no-subsidy constraint \eqref{controlFS} is slack for a positive-measure set of types. Nevertheless,  \eqref{controlFS} must bind for $[0, \epsilon]$ for some small $\epsilon> 0$.  
\end{lem}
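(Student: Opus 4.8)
The plan is to treat the two assertions separately, since they exploit opposite features of the optimality conditions \eqref{ctrl:lambda_pi}--\eqref{ctrl:fs}, and to organize the argument around the proposed-multiplier logic already used for Proposition \ref{prop_when}.

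For the second assertion (binding on $[0,\epsilon]$), I would argue by contradiction at the bottom type. Suppose \eqref{controlFS} is slack at $c=0$; by continuity of $(q,\Pi)$ it is then slack on a right-neighborhood $[0,\delta)$, so complementary slackness \eqref{ctrl:fs} gives $\gamma\equiv 0$ there, and the initial-jump condition recorded in the footnote (slackness at $c=0$ implies $\beta=0$) yields $\lambda_\Pi(c)=(1-\alpha)F(c)$ together with $\lambda_q(0)=\lambda_q(0_+)=0$. Then \eqref{ctrl:lambda_q} reduces to $\lambda_q'(c)=-(P(q(c))-c)f(c)+(1-\alpha)F(c)$, so $\lambda_q'(0_+)=-P(q(0))f(0)$. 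If $q(0)<\qmax$ this is strictly negative, which pushes $\lambda_q$ below zero immediately and violates $\lambda_q\ge 0$ in \eqref{ctrl:control}. If instead $q(0)=\qmax$, then $P(q(0))=0$ and the constraint value equals $g(q(0),\Pi(0),0)=-\Pi(0)-\fcost\le 0$ (using $\Pi(0)\ge 0$ from \eqref{directIR} and $\fcost\ge 0$), contradicting slackness $g>0$. Either way we reach a contradiction, so \eqref{controlFS} binds at $0$ and, by continuity, on some $[0,\epsilon]$.

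For the first assertion (slackness on a positive-measure set), I would argue by contradiction using Proposition \ref{prop_when} and uniqueness (Lemma \ref{lem:existence}). Suppose \eqref{controlFS} binds for almost every active type. Differentiating the binding identity $\Pi(c)=q(c)P(q(c))-cq(c)-\fcost$ and using the envelope law $\Pi'=-q$ gives $q'(c)\bigl([qP(q)]'|_{q(c)}-c\bigr)=0$ a.e.; since $[qP(q)]'$ is strictly decreasing by Assumption \ref{assumption_concave} and $[qP(q)]'|_{q_{LF}(c)}=c$, on every interval where $q$ is strictly decreasing we must have $q=q_{LF}$, while the remaining intervals are pooling blocks. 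Combined with continuity (Lemma \ref{lem:continuous}) and the single exclusion cutoff (Lemma \ref{lem:IR_jump}), the candidate is thus a laissez-faire-plus-pooling (``hard-cap'') allocation. On any separating sub-interval, \eqref{ctrl:control} forces $\lambda_q\equiv 0$, so \eqref{ctrl:lambda_q} gives $\lambda_\Pi(c)=(P(q_{LF}(c))-c)f(c)$; as $\lambda_\Pi=\beta+(1-\alpha)F+\Gamma$ is nondecreasing, the expression $(P(q_{LF}(c))-c)f(c)-(1-\alpha)F(c)$ is nondecreasing there. If the candidate were pure laissez-faire on $[0,\overline{c}_{LF}]$ this would be exactly \eqref{lf_optimal}, contradicting its failure; hence the binding-a.e.\ candidate must contain a genuine pooling block.

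The main obstacle is to rule out such a pooling (hard-cap) candidate, i.e.\ to show that when \eqref{lf_optimal} fails the optimum strictly prefers introducing slack. My primary route is variational: starting from the pooling block $[c_1,\overline{c})$ adjacent to exclusion at level $\hat q$, I would introduce a small progressive tax so that the highest active types spread downward with $q<\hat q$, making \eqref{controlFS} slack; the first-order welfare effect is governed by the same marginal tradeoff \eqref{pseudoFOC}, and the strict decrease of $(P(q_{LF}(c))-c)f(c)-(1-\alpha)F(c)$ somewhere in $[0,\overline{c}_{LF}]$ (the negation of \eqref{lf_optimal}) guarantees that some such perturbation strictly raises \eqref{controlOBJ}, so the candidate is not optimal. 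An alternative route carries the proposed multiplier through the pooling block: integrating \eqref{ctrl:lambda_q} with $\gamma\ge 0$ from $c_1$ (where $\lambda_q=0$) and imposing both $\lambda_q\ge 0$ and the transversality condition $\lambda_q(\overline{c})\ge 0$ becomes inconsistent precisely where the \eqref{lf_optimal}-expression turns decreasing. Either way, no binding-a.e.\ allocation is optimal; since by Lemma \ref{lem:existence} the optimum is unique and, by the previous paragraph, is not pure laissez-faire, it must make \eqref{controlFS} slack on a positive-measure set of active types.
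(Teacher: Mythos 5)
Your argument for the second assertion is correct and essentially the paper's: both proofs suppose \eqref{controlFS} is slack on a right-neighborhood of $0$, use \eqref{ctrl:lambda_pi}--\eqref{ctrl:control} with $\gamma=\beta=0$ to pin down the allocation there, and derive a contradiction as $c\to 0$; you work through the sign of $\lambda_\demand$, while the paper substitutes the induced pricing rule $P(q(c))=c+(1-\alpha)F(c)/f(c)$ back into the constraint value and lets $c\to 0$. Either route works, and your treatment of the $q(0)=\qmax$ corner is fine.

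The gap is in the first assertion. The paper disposes of it in one line: the failure of \eqref{lf_optimal} makes laissez-faire (and any truncation of it) suboptimal, hence the optimum must have \eqref{controlFS} slack on a positive-measure set. You correctly observe that this inference is not immediate, because ``binding a.e.'' only forces $q'(c)\bigl([qP(q)]'-c\bigr)=0$, i.e.\ a piecewise laissez-faire-or-pooling (hard-cap) allocation, and such allocations are feasible, incentive compatible, and not laissez-faire. But your two proposed ways of excluding them remain sketches whose key claims are not established. The variational route asserts that because $(P(q_{LF}(c))-c)f(c)-(1-\alpha)F(c)$ strictly decreases somewhere in $[0,\overline{c}_{LF}]$, some perturbation of the pooling block adjacent to exclusion is profitable; but the decreasing region need not intersect that block, so the pointwise failure of \eqref{lf_optimal} does not directly deliver a first-order gain where you perturb. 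The multiplier route faces the same difficulty: on a pooling block $[c_1,\overline{c}]$ with $\hat q=q_{LF}(c_1)$ the costate equation reads $\lambda_\demand'=-(P(\hat q)-c)f+\lambda_\Pi+\gamma(c)(c-c_1)$, and the last term is a free nonnegative quantity on $c>c_1$, so the inconsistency you invoke is not automatic --- one must actually show that no admissible $\gamma\ge 0$ can reconcile $\lambda_\demand\ge 0$ with the boundary values at $c_1$ and $\overline{c}$. To be fair, the paper's own proof does not confront these candidates either, so you have identified a real issue rather than manufactured one; but as a proof, your first part is incomplete at exactly the step you yourself flag as the main obstacle.
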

\begin{proof}
    The first part of the claim is a direct consequence combining Proposition \ref{prop_when} with Lemma \ref{lem:existence}: Since condition \eqref{lf_optimal} is violated, the laissez-faire baseline is sub-optimal, which implies any truncated laissez-faire allocation over $[0, \overline{c}]$ with $\overline{c} < \overline{c}_{LF}$ is also suboptimal. Therefore, any optimal allocation must involve a positive-measure set of types for which the no-subsidy constraint is slack (i.e., intervention). 
    For the second part, suppose that on the contrary, \eqref{controlFS} is slack for $[0, \epsilon]$. By condition \eqref{ctrl:lambda_pi} and \eqref{ctrl:lambda_q}, 
    we have $P(q(c)) = c + (1-\alpha)\frac{F(c)}{f(c)}$ for $c \in [0, \epsilon]$. 
    On the other hand, \eqref{controlFS} being slack implies $q(c)P(q(c)) - cq(c) - \fcost > \Pi(c) = \int_c^1q(x)dx > 0$. However, this inequality is violated for $c \rightarrow 0$ because $\lim_{c\rightarrow0} [q(c)P(q(c)) - cq(c) - \fcost] = \lim_{c\rightarrow0} [q(c)F(c)/f(c) - \fcost] \leq 0$. A contradiction. Therefore, \eqref{controlFS} must bind for $[0, \epsilon]$ with some small $\epsilon > 0$.
\end{proof}

\begin{lem} \label{lem:no-subsidy-behavior}
Fix any optimal policy $(q, \Pi)$. There exists countably many cutoffs $0 = c_0 \leq c_1\leq c_2 \leq \cdots  \leq c_{2K} =  \overline{c}$, such that the no-subsidy constraint binds if and only if $c \in [c_{2k}, c_{2k+1}]$ for $k = 0, 1, ..., K$. Moreover, $q(c_{2k+1}) > q_{LF}(c_{2k+1})$ and $q(c_{2k+2}) < q_{LF}(c_{2k+2})$, for $k = 0, 1, ..., K$.
\end{lem}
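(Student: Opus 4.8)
The plan is to track the slack in the no-subsidy constraint through the single scalar function
\[
D(c) := q(c)\bigl[P(q(c))-c\bigr]-k-\Pi(c)\ge 0,\qquad c\in[0,\overline{c}),
\]
which vanishes exactly where \eqref{controlFS} binds. First I would record that $D$ is continuous on $[0,\overline{c})$: $q$ is continuous by Lemma \ref{lem:continuous}, $\Pi$ is continuous by Lemma \ref{lem:opt_ctrl_reform}, and $P$ is continuous by Assumption \ref{assumption_concave}. Hence the binding set $B=\{c:D(c)=0\}$ is relatively closed and its complement, the slack set, is relatively open, so it decomposes into an at most countable union of disjoint open intervals. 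Labeling the endpoints in increasing order produces the cutoffs $0=c_0\le c_1\le\cdots$; Lemma \ref{lem:no_tax_all} supplies $0\in B$ (so the first interval $[c_0,c_1]$ is binding) and the existence of a positive-measure slack set (so the decomposition is nontrivial). This delivers the ``binds if and only if $c\in\bigcup_k[c_{2k},c_{2k+1}]$'' structure, with slack on the intervening open intervals $(c_{2k+1},c_{2k+2})$.

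Next I would establish the derivative identity that drives the crossing conditions. Wherever $q$ is differentiable, using $\Pi'=-q$ from \eqref{controlProfit},
\[
D'(c)=q'(c)\,\varphi(c),\qquad \varphi(c):=P(q(c))+q(c)P'(q(c))-c .
\]
Because $qP(q)$ is strictly concave (Assumption \ref{assumption_concave}), the marginal revenue $[xP(x)]'$ is strictly decreasing in $x$, and the laissez-faire first-order condition from Lemma \ref{lem:lf} reads $[xP(x)]'|_{x=q_{LF}(c)}=c$. Thus $\varphi(c)=[xP(x)]'|_{q(c)}-[xP(x)]'|_{q_{LF}(c)}$, giving the sign dictionary $\varphi(c)<0\iff q(c)>q_{LF}(c)$, $\varphi(c)>0\iff q(c)<q_{LF}(c)$, and $\varphi(c)=0\iff q(c)=q_{LF}(c)$. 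Since $q$ is nonincreasing, $q'\le 0$. At a binding$\to$slack endpoint $c_{2k+1}$ the constraint is released, so $D$ rises from $0$ and mean-value ratios force $D'>0$ along a sequence $\xi_n\downarrow c_{2k+1}$; with $q'\le 0$ this gives $\varphi(\xi_n)<0$, and continuity of $\varphi$ yields $\varphi(c_{2k+1})\le 0$, i.e. $q(c_{2k+1})\ge q_{LF}(c_{2k+1})$. The symmetric argument at a slack$\to$binding endpoint $c_{2k+2}$, where $D$ falls back to $0$, gives $\varphi(c_{2k+2})\ge 0$, i.e. $q(c_{2k+2})\le q_{LF}(c_{2k+2})$.

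The remaining, and most delicate, task is to upgrade these weak inequalities to the strict ones claimed, by ruling out a tangential transition at which $q$ meets $q_{LF}$ exactly ($\varphi=0$). To do so I would first classify binding intervals: on $B$ we have $D\equiv 0$, hence $q'\varphi=0$ almost everywhere, so at each interior binding point either $q'=0$ (a pooling segment) or $\varphi=0$ (the firm tracks $q_{LF}$). For a pure $q_{LF}$-tracking segment ending at $c_{2k+1}$, $q$ is strictly decreasing, forcing $\lambda_q\equiv 0$ through \eqref{ctrl:control} and, via \eqref{ctrl:lambda_q}, the relation $[P(q_{LF}(c))-c]f(c)=(1-\alpha)F(c)+\Gamma(c)+\beta$ with $\Gamma$ nondecreasing. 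If the release were tangential, feasibility $D\ge 0$ just to the right (where $\gamma=0$ freezes $\Gamma$) would require $\varphi<0$, hence the laissez-faire-consistent cumulative multiplier to keep increasing past $c_{2k+1}$; but then the laissez-faire continuation itself satisfies all the necessary-and-sufficient optimality conditions there, so by uniqueness (Lemma \ref{lem:existence}) the optimum would remain binding past $c_{2k+1}$, contradicting that it is a transition. When the binding side is instead a pooling segment, continuity of the costates $\lambda_q,\lambda_\Pi$ across the interior endpoint together with $\lambda_q(c_{2k+1})=0$ forces the pooling quantity to strictly overshoot $q_{LF}(c_{2k+1})$, again excluding tangency. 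The symmetric reasoning applies at slack$\to$binding endpoints, yielding the strict inequalities.

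Finally I would assemble the pieces: the alternating closed binding and open slack intervals give the cutoff list $0=c_0\le c_1\le\cdots\le c_{2K}=\overline{c}$ with binding on the even-indexed intervals, and the strict crossings just established yield the ``moreover'' statement. I expect the strictness step to be the main obstacle: the first-order and mean-value analysis only delivers weak inequalities, and excluding tangency genuinely requires the costate-continuity and feasibility ($D\ge 0$) plus uniqueness arguments sketched above, rather than a purely local sign computation.
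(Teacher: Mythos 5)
Your proposal follows the same skeleton as the paper's proof: continuity of the constraint value $g(c)=q(c)[P(q(c))-c]-k-\Pi(c)$ (via Lemma \ref{lem:continuous}) gives the closed/open decomposition into alternating binding and slack intervals; the identity $g'(c)=q'(c)\,[P(q(c))+q(c)P'(q(c))-c]$ obtained from $\Pi'=-q$ is exactly the paper's computation; and the sign dictionary based on strict concavity of $qP(q)$ and the laissez-faire first-order condition from Lemma \ref{lem:lf} is the paper's argument verbatim, anchored at $c=0$ by Lemma \ref{lem:no_tax_all}. Where you diverge is the strictness step. The paper reads the strict inequalities directly off the one-sided derivative: at a binding-to-slack point $\hat c$, piecewise differentiability of $g$ together with $g(\hat c)=0$ and $g>0$ immediately to the right gives $g'(\hat c_+)>0$, which combined with $q'\le 0$ forces $P(q(\hat c))+q(\hat c)P'(q(\hat c))-\hat c<0$ and hence $q(\hat c)>q_{LF}(\hat c)$ in one line (and symmetrically at slack-to-binding points). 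You instead pass to mean-value points in the interior of the slack interval, which only yields the weak inequality $\varphi(\hat c)\le 0$ in the limit, and then you bolt on a separate tangency-exclusion argument.

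That tangency-exclusion argument is where the gap lies. The claim that, at a tangential release from a $q_{LF}$-tracking segment, ``the laissez-faire continuation itself satisfies all the necessary-and-sufficient optimality conditions there, so by uniqueness the optimum would remain binding'' does not follow from Lemma \ref{lem:existence}: uniqueness of the optimal mechanism does not imply that an allocation satisfying the Pontryagin system on a subinterval must coincide with the optimum on that subinterval, and you have not checked that the proposed laissez-faire continuation is consistent with the multiplier system globally (nonnegativity of $\gamma$ downstream, the terminal conditions, and the behavior at $\overline c$). Likewise, the assertion that costate continuity ``forces the pooling quantity to strictly overshoot $q_{LF}$'' at the end of a pooling segment is stated without any computation and is not self-evident. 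Since the strict inequalities in the ``moreover'' clause are precisely what gets used downstream (e.g., in the single-crossing argument of Lemma \ref{lem:sc}), this is a substantive hole rather than a cosmetic one. The repair is the paper's: on the slack side of the transition the optimality conditions make $g$ continuously differentiable up to the endpoint, so you can evaluate $g'(\hat c_+)$ itself rather than a sequence of interior derivatives, and the strict sign of $\varphi$ at $\hat c$ follows at once.
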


\begin{proof}
 By Lemma \ref{lem:continuous}, the constraint value \[
g(\demand(c), \Pi(c), c) = \demand(c)P(\demand(c)) - c\demand(c) - \fcost - \Pi(c)
\]
is continuous on $[0, \overline{c})$ for any $\overline{c}\in (0, \overline{c}_{LF}]$. Taking total derivative of $g(q(c),\Pi(c),c)$ with respect to $c$, we obtain \begin{align*}
    \frac{d g(\demand(c), \Pi(c), c)}{d c}
    &=  \frac{\partial  g}{\partial \demand}\cdot \demand'(c)+ \frac{\partial  g}{\partial \Pi}\cdot \Pi'(c)  + \frac{\partial  g}{\partial c}\\
    &= [q(c)P'(q(c)) + P(q(c)) - c]q'(c) - 1\cdot [-q(c)] - q(c)\\
    &= [q(c)P'(q(c)) + P(q(c)) - c]q'(c),
\end{align*}
wherever it exists. For notation simplicity, we henceforth use $g(c) := g(q(c), \Pi(c),c)$. By Lemma \ref{lem:no_tax_all}, the no-subsidy constraint binds at $0$. Then, it suffices to characterize the transition points at which the constraint becomes slack or binding. 
To begin with, if there is some $\hat{c} \in (0,\overline{c})$ such that $g(\hat{c}) = 0$ and $g(\hat{c}+\epsilon) > 0$ for any small $\epsilon> 0$, then $q(\hat{c}) > q_{LF}(\hat{c})$.
To see this, note that $g(c)$ is piecewise differentiable, and thus we have $g'(\hat{c}_+) > 0$. 
Since $q'(c) \leq 0$ for all $c$, it must be that $q(\hat{c})P'(q(\hat{c}_+)) + P(q(\hat{c})) - \hat{c} < 0$. Strict concavity of $qP(q)$ implies $q(\hat{c}) > q_{LF}(\hat{c})$. Similarly, if there is some $\hat{c} \in (0,\overline{c})$ such that $g(\hat{c}) = 0$ and $g(\hat{c}- \epsilon) > 0$, then $q(\hat{c}) < q_{LF}(\hat{c})$. This finishes the proof.
\end{proof}

\paragraph{Single Crossing} Lemma \ref{lem:sc} shows that, with the sufficient condition that the \textit{cost density is nonincreasing}, the no-subsidy constraint has a single-crossing property. This is the minimal sufficient condition we identify that makes the no-subsidy constraint single-crossing.

We are going to use the following fact in the proof of Lemma \ref{lem:sc}: Observe that for the firm to be incentive compatible in the interval with a binding no-subsidy constraint, it is necessary to require either $q(c) = q_{LF}(c)$ or $q'(c) = 0$. To see this, substituting $\Pi(c)$ by binding no-subsidy constraint \eqref{controlFS} into IC, we have $[q(c)P(q(c)) - cq(c)]' = -q(c)$, which is satisfied either $q'(c) = 0$ (flat) or $P(q(c)) = c - q(c)P'(q(c))$ (laissez-faire).

\begin{lem}\label{lem:sc}
Suppose condition \eqref{lf_optimal} in Proposition \ref{prop_when} is violated. Assume that the density $f$ is nonincreasing.  
Fix any optimal policy $(q, \Pi)$. For the range $[0, \overline{c})$, there is exactly one cutoff $\hat{c} \in (0,\overline{c})$ such that the no-subsidy constraint \eqref{controlFS} binds if and only if $c \leq \hat{c}$.  
\end{lem}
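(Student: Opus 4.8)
The plan is to show that the binding region of the no-subsidy constraint \eqref{controlFS} is a single interval of the form $[0,\hat c]$, by ruling out any binding interval whose left endpoint is an interior \emph{slack-to-binding} transition. By Lemmata \ref{lem:no_tax_all}, \ref{lem:continuous}, and \ref{lem:no-subsidy-behavior}, the binding region is a union of closed intervals, the first of which is $[0,c_1]$ (the constraint binds near $0$), successive binding intervals are separated by slack intervals, and every transition is governed by the sign of $q(c)-q_{LF}(c)$: at a binding-to-slack transition $q>q_{LF}$, and at a slack-to-binding transition $q<q_{LF}$. It therefore suffices to prove that no binding interval begins at an interior point; the first interval is then the only one, and I set $\hat c:=c_1\in(0,\overline c)$ (positivity from Lemma \ref{lem:no_tax_all}, and $\hat c<\overline c$ because \eqref{lf_optimal} fails, so the constraint is slack on a positive-measure set).

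The key tool is the costate $\lambda_\Pi$. From \eqref{ctrl:lambda_pi}, $\lambda_\Pi'=(1-\alpha)f+\gamma\ge 0$ with $\gamma\ge0$, so $\lambda_\Pi$ is continuous (the costate does not jump at interior types) and \emph{nondecreasing} on $[0,\overline c)$. Moreover, on any subinterval where $q$ is strictly decreasing---namely a slack region (where $\gamma=0$) or a piece on which $q=q_{LF}$ (where the term $P(q)+qP'(q)-c$ vanishes by the laissez-faire first-order condition)---the complementary slackness condition \eqref{ctrl:control} forces $\lambda_q\equiv 0$, hence $\lambda_q'\equiv0$, and \eqref{ctrl:lambda_q} collapses to the stationarity identity
\[
\lambda_\Pi(c)=\bigl[P(q(c))-c\bigr]f(c).
\]

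Now suppose, toward a contradiction, that some binding interval begins at an interior point $c_\ast>0$ through a slack-to-binding transition, so that $q(c_\ast)<q_{LF}(c_\ast)$. By the characterization preceding the lemma, on a binding interval $q=q_{LF}$ or $q'=0$ pointwise; since $q(c_\ast)<q_{LF}(c_\ast)$ and $q,q_{LF}$ are continuous, the interval must begin with a \emph{pooling} (flat) piece $q\equiv\bar q:=q(c_\ast)$ on some $[c_\ast,a]$. At its right endpoint $a$, this flat piece abuts a region on which $q$ is strictly decreasing---either an adjoining $q=q_{LF}$ piece of the same binding interval or the next slack interval---and by continuity $q(a)=\bar q$ there. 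Applying the stationarity identity on the slack side at $c_\ast$ and on the strictly-decreasing side at $a$ gives
\[
\lambda_\Pi(c_\ast)=\bigl[P(\bar q)-c_\ast\bigr]f(c_\ast),\qquad \lambda_\Pi(a)=\bigl[P(\bar q)-a\bigr]f(a).
\]
Since $\lambda_\Pi\ge0$ and $f>0$, we have $P(\bar q)\ge a>c_\ast$; because $f$ is nonincreasing, the map $c\mapsto[P(\bar q)-c]f(c)$ is strictly decreasing on $[c_\ast,a]$, whence $\lambda_\Pi(a)<\lambda_\Pi(c_\ast)$. This contradicts the monotonicity of $\lambda_\Pi$. (The first interval escapes this argument precisely because its left endpoint is the boundary $c=0$, where $\lambda_\Pi(0_+)=\beta$ is a free initial value rather than one pinned by a preceding slack region.) Hence the binding region is the single interval $[0,\hat c]$.

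I expect the main obstacle to be the bookkeeping at the endpoints of the initial pooling piece: one must verify that the flat piece is genuinely adjacent, on its right, to a strictly-decreasing region at the \emph{same} quantity level $\bar q$, so that the stationarity identity delivers exactly $[P(\bar q)-a]f(a)$ there. This requires using the key fact that binding forces $q=q_{LF}$ or $q'=0$, the continuity of $q$ from Lemma \ref{lem:continuous}, and the vanishing of the $\gamma$-term in \eqref{ctrl:lambda_q} on $q=q_{LF}$ pieces via the laissez-faire first-order condition; the monotonicity of $\lambda_\Pi$ together with $f'\le0$ then does the rest.
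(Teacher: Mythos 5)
Your proof is correct and follows essentially the same route as the paper's: both arguments rest on the stationarity identity $\lambda_\Pi(c)=[P(q(c))-c]f(c)$ holding at the two ends of the initial flat piece of a putative interior binding interval, combined with monotonicity of the cumulative multiplier and the nonincreasing density. The only cosmetic differences are that you derive the contradiction from $\lambda_\Pi$ having to decrease while the paper derives it from $P(q(c_2))>P(q(c_1))$ via the strictly increasing map $\phi(\cdot\,;\Gamma)$ --- an equivalent rearrangement --- and that you fold the paper's two cases (all-flat versus flat-then-laissez-faire) into one by observing that the identity also holds on laissez-faire pieces where the $\gamma$-term vanishes.
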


\begin{proof}
By Lemma \ref{lem:no_tax_all} and continuity, there exists at least one cutoff $\hat{c} \in (0, \overline{c})$ for the no-subsidy constraint to shift from binding to slack, i.e., $g(\hat{c}) = 0$ and $g(c) > 0$ for $c \in (\hat{c}, \hat{c} + \epsilon)$ for small $\epsilon > 0$. We show that there is exactly one such cutoff, i.e.,  $g(c) > 0$ for all $c \in (\hat{c}, \overline{c})$. 

Suppose there is some proper subset of $[\hat{c}, \overline{c}]$ in which $g(c) = 0$, there must be two cutoffs $c_1 < c_2$, satisfying $g(c_1) = 0, g(c) > 0$ for $c \in (c_1 - \epsilon, c_1)$;  and $g(c_2) = 0, g(c) > 0$ for $c \in (c_2, c_2 + \epsilon)$. By Lemma \ref{lem:no-subsidy-behavior}, $q(c_1) < q_{LF}(c_1)$ and $q(c_2) > q_{LF}(c_2)$. According to the observation before this lemma, for the firm to be incentive compatible, we require $q'(c) = 0$ for $q(c) \neq q_{LF}(c)$ on $[c_1, c_2]$. Continuity (Lemma \ref{lem:continuous}) implies two cases: either $P(q(c_1)) = P(q(c_2))$, or there exists some $c_{LF}$ such that $P(q(c_1)) = P(q(c_{LF}))$ and $q(c) = q_{LF}(c)$ for $c \in [c_{LF}, c_{LF}+\epsilon)$.

Suppose that $P(q(c_1)) = P(q(c_2))$.
By condition \eqref{ctrl:lambda_q} and \eqref{ctrl:control}, $\lambda_q'(c) = 0$ requires \[
P(q(c)) = c +  (1-\alpha)\frac{F(c)}{f(c)} + \frac{\Gamma(c) + \beta}{f(c)},
\]
satisfied by both $c=c_1$ and $c=c_2$. By our supposition, $P(q(c))$ strictly increases in $(c_1 - \epsilon,c_1)$ and $(c_2, c_2 + \epsilon)$.
Let $\phi(c;\Gamma(x)) = c +  (1-\alpha)\frac{F(c)}{f(c)} + \frac{\Gamma(x) + \beta}{f(c)}$, for some constant $x \leq c$.
Since $f$ is nonincreasing, $\phi(c;\Gamma(x))$ is strictly increasing in $c$. We have
\[
P(q(c_2)) = \phi(c_2;\Gamma(c_2)) \geq \phi(c_2;\Gamma(c_1)) > \phi(c_1;\Gamma(c_1)) = P(q(c_1)),
\]
a contradiction to the supposition that $P(q(c_2)) = P(q(c_1))$.

Suppose that $P(q(c_1)) = P(q(c_{LF}))$ and $q(c) = q_{LF}(c)$ for $c \in [c_{LF}, c_{LF}+\epsilon)$. 
We know that $P(q(c_{LF})) = \phi(c_{LF}; \Gamma(c_{LF})) \geq \phi(c_{LF}; \Gamma(c_1)) > \phi(c_1; \Gamma(c_1)) = P(q(c_{1}))$. A contradiction. 

We thus establish that there is no proper subset of $[\hat{c}, \overline{c}]$ in which $g(c) = 0$. In other words, there is exactly one cutoff $\hat{c} \in (0, \overline{c})$ such that the no-subsidy constraint \eqref{controlFS} binds if and only if $c \leq \hat{c}$.
 \end{proof}

 \paragraph{Proof of Proposition \ref{prop_main}.} So far, we know if the cost density is nonincreasing, there is a unique cutoff $\hat{c} \in (0, \overline{c})$ for the no-subsidy constraint \eqref{controlFS} to bite. We first show that $\tau(p)$ increasing in the firm price $p$ is equivalent to the \textit{direct unit tax} $\tau(p(c)) := g(c)/q(c)$ increasing in $c$ over $[\hat{c}, \overline{c})$, which boils down to verifying $p(c)$ is increasing.
 By the definition of firm profit, $p(c) = c +  \frac{\Pi(c) + \fcost}{q(c)} = c + \frac{\int_c^{\overline{c}} q(x)dx + \fcost}{q(c)}$. We get $\frac{dp(c)}{dc} = -\frac{q'(c)[\Pi(c)+\fcost]}{[q(c)]^2} > 0$, i.e., $p(c)$ is increasing in $c$. We show that $q'(c) < 0$ shortly.
 
 Now we show that the direct unit tax is progressive for $c \in [\hat{c}, \overline{c})$, which is equivalent to showing that $g(c)/q(c)$ is nondecreasing in $c$, i.e., 
\begin{equation}
    P(q(c)) - c  - \frac{\fcost}{q(c)} - \frac{\Pi(c)}{q(c)} \text{ increases in } c, \label{monotone}
\end{equation}
for $c \in [\hat{c}, \overline{c})$. We show that nondecreasing, log-concave $f$, and strictly log-concave $P^{-1}$ constitute sufficient conditions for the desired monotonicity. For $c \in [\hat{c}, \overline{c})$, condition \eqref{ctrl:lambda_q} implies \[
 P(q(c)) = c + (1-\alpha)\frac{F(c)}{f(c)} + \frac{\Gamma(\hat{c}) + \beta}{f(c)} = \phi(c; \Gamma(\hat{c})),
 \] 
 which is strictly increasing in $c$ because $f(c)$ is nonincreasing, implying $q(c) = P^{-1}[\phi(c; \Gamma(\hat{c}))]$ strictly decreases over $[\hat{c}, \overline{c})$.
 The expression in \eqref{monotone} becomes
\[
 (1-\alpha)\frac{F(c)}{f(c)} + \frac{\Gamma(\hat{c}) + \beta}{f(c)} - \frac{\int_c^{\overline{c}}P^{-1}[\phi(x; \Gamma(\hat{c}))]\, dx + \fcost}{P^{-1}[\phi(c; \Gamma(\hat{c}))]}.
\]
The first two terms $(1-\alpha)\frac{F(c)}{f(c)} + \frac{\Gamma(\hat{c}) + \beta}{f(c)}$ is nondecreasing as $f(c)$ is nonincreasing.
What remains is to show $\frac{\int_c^1P^{-1}[\phi(x; \Gamma(\hat{c}))]\, dx + \fcost }{P^{-1}[\phi(c; \Gamma(\hat{c}))]}$ is decreasing. Denote the demand function to be $D:= P^{-1}$, and define the composite function $H := D \circ \phi_{\Gamma(\hat{c})}$. It is equivalent to show $\frac{\int_c^1H(x)\, dx + \fcost}{H(c)}$ is decreasing. First, consider the case $\fcost = 0$, then $\frac{\int_c^1H(x)\, dx}{H(c)}$ is strictly decreasing if $H$ is strictly log-concave.\footnote{
See Theorem 6 in \cite{BagBerg05Logconcave}.
}
Next, we show that a set of sufficient conditions for $H$ to be strictly log-concave is that $D$ is strictly log-concave, as well as $f$ being log-concave and nonincreasing. 
By Theorem 3 in \cite{ZouLogConcave} (p.\ 5), a set of sufficient conditions for the composite function $H = D \circ \phi_{\Gamma(\hat{c})}$ to be strictly log-concave is (i) $D$ is strictly log-concave and decreasing; (ii) $\phi_{\Gamma(\hat{c})}$ is monotone and  convex. We only need to show that $f$ being log-concave and decreasing leads to convex $\phi_{\Gamma(\hat{c})}$, which is true by directly verifying the second-order derivative: \[
\phi_{\Gamma(\hat{c})}''(x) = (1-\alpha)\frac{-f' \cdot f^2 +  F\cdot [2(f')^2 -f''\cdot f]}{f^3} + [\Gamma(\hat{c}) + \beta] \cdot \frac{2(f')^2-f''\cdot f  }{f^3}
\]
Log-concavity of $f$ implies $f\cdot f'' - (f')^2\leq 0$, ensuring $\phi_{\Gamma(\hat{c})}''(x)  \geq 0$. We conclude that $H$ is strictly log-concave. On the other hand, for $k > 0$, differentiating $\frac{\int_c^1H(x)\, dx + \fcost}{H(c)}$ yields \[
-1 + \frac{-H'(c)}{H(c)} \left[\frac{\int_c^1H(x)\, dx}{H(c)}+ \frac{k}{H(c)}\right].
\]
For small $k$, we show that above expression is negative, which is equivalent to $(-H'(c)) \cdot \fcost + (-H'(c)) \cdot \int_c^1H(x)\, dx - H^2(c) < 0$.
$H(c)$ being strictly log-concave implies $\int_c^{\overline{c}} H(x)\,dx$ is strictly log-concave, leading to $(-H'(c)) \cdot \int_c^1H(x)\, dx  - H^2(c) < 0, \forall c \in  [\hat{c}, \overline{c})$. Since $-H'(c) \geq 0$ is finite, there exists $\overline{k} > 0$ such that for all $k < \overline{k}$, the above inequality hold. 
To conclude, the sufficient conditions in Proposition \ref{prop_main} ensure the progressive tax condition \eqref{monotone}. This finishes the proof of Proposition \ref{prop_main}. \hfill \qedsymbol

\paragraph{Characterization of the Optimal Mechanism} In this part, we describe our strategy to find the optimal mechanism $(q, \Pi)$, and back out the firm price $p$ as well as unit tax policy $\tau$, given that the sufficient conditions in Proposition \ref{prop_main} hold. As described at the beginning of this appendix, there are two steps. First, fix $\overline{c} \in [0, \overline{c}_{LF}]$, we can pin down the truncated optimal mechanism. Then, the globally optimal mechanism is solved by optimizing across $\overline{c}$. 

In the first step, suppose $\overline{c} \in [0, \overline{c}_{LF}]$ is fixed. If $\overline{c} = 0$, the welfare is $0$, which is dominated by the always-feasible laissez-faire allocation. Now consider some fixed $\overline{c} \in(0, \overline{c}_{LF}]$. We show that the value of $\overline{c}$ corresponds to the terminal value of the cumulative multiplier function, which pins down $\hat{c}$. Note that for $c\in [\hat{c}, \overline{c}]$, $P(q(c)) = \phi(c; \Gamma(\hat{c}))$. By Lemma \ref{lem:IR_jump}, $q(\overline{c})$ is pinned down by $q(\overline{c})[P(q(\overline{c})) - \overline{c}] = \fcost$. Denote the solution as $q(\overline{c}; \fcost)$. Rearranging $\phi(c; \Gamma(\hat{c}))$ yields \[
\Gamma(\hat{c}) + \beta  = [P(q(\overline{c}; \fcost)) - \overline{c}] f(\overline{c}) - (1-\alpha)F(\overline{c}).
\]
In other words, we can rewrite the optimal policy as a function $\overline{c}$ instead of $\hat{c}$. Slightly abusing notation, write \[
\phi(c; \overline{c}):= c+ (1-\alpha)\frac{F(c)}{f(c)} + \frac{(P(q(\overline{c}; \fcost))-\overline{c})f(\overline{c}) - (1-\alpha)F(\overline{c})}{f(c)},
\]
then $q(c) = P^{-1}(\phi(c; \overline{c}))$, which only depend on exogenous parameters and the fixed choice of $\overline{c}$. This finishes the characterization of $q(c)$ for $c \in [\hat{c}, \overline{c}]$.

Now we turn to the policy for $c < \hat{c}$.
 Recall that the cutoff $\hat{c}$ is such that $P(q(\hat{c})) - \hat{c} = \frac{\Pi(\hat{c})+\fcost}{q(\hat{c})}$, and that $g(\hat{c}) = 0, g(\hat{c}+\epsilon) > 0$ for all small $\epsilon > 0$. By Lemma \ref{lem:no-subsidy-behavior}, $q(\hat{c}) > q_{LF}(\hat{c})$. Since the no-subsidy constraint \eqref{controlFS} binds for all $c\in [0,\hat{c}]$, the policy $q(c)$ is either flat for laissez-faire. Because $q(\hat{c}) > q_{LF}(\hat{c})$, it must be that $q(\hat{c}_-)' = 0$. By continuity (Lemma \ref{lem:continuous}), $q(c)$ is flat as $c$ decreases. If there exists $c_L \in (0, \hat{c})$ such that $q(\hat{c}) = q_{LF}(c_L)$, then $q(c) \equiv q(\hat{c})$ for $c \in [c_L, \hat{c}]$, and $q(c) = q_{LF}(c)$ for $c\in [0, c_L]$.
Otherwise, $q(c) \equiv q(\hat{c})$ for $c \in [0, \hat{c}]$.
To see why, suppose $c_L > 0$. continuity implies that the policy can at most split into three segments: flat, laissez-faire, and flat. We are to show that the optimal policy can only exhibit two segments: laissez-faire and flat. By condition \eqref{ctrl:lambda_q}, $\lambda_q(0_+) = -\beta \frac{\partial g(q, \Pi, c)}{\partial q} = -\beta (P(0) + qP'(0))\geq 0$, which requires $q(0) \geq q_{LF}(0)$ as long as $\beta > 0$. If $\beta = 0$, then $\lambda_q(0_+) = 0$, while $\lambda_q'(0) = -P(q(0))f(0) - \gamma(0) [P(0) + qP'(0)] < 0$ 
if $q(0) < q_{LF}(0)$, a contradiction to the requirement that $\lambda_q(c) \geq 0$ for all $c$. Therefore, it must be that $q(0) \geq q_{LF}(0)$, i.e., there is no flat segment of $q(c)$ preceding laissez-faire segment. We summarize the constrained optimal demand policy as follows: \[
q(c; \overline{c}) = \begin{cases}
    q_{LF}(c), & c \in [0, \max\{c_L,0\})\\
   P^{-1}(\phi(\hat{c}; \overline{c})), & c \in [\max\{c_L,0\}, \hat{c})\\
    P^{-1}(\phi(c; \overline{c})),& c\in  [\hat{c},\overline{c}]\\
     0,& c \in (\overline{c}, 1]
\end{cases}
\]
The profit of the firm is \[
\Pi(c; \overline{c}) = \int_c^1q(x; \overline{c})\, dx.
\]
Plugging $(q(c; \overline{c}) , \Pi(c; \overline{c}) )$ into the objective function, the expected welfare is 
 \[
 \mathcal{W}(\overline{c}) := \int_0^1 W(q(c; \overline{c}),\Pi(c; \overline{c}), c)f(c)\, dc
\]
The global optimization becomes a single variable problem\[
\max_{\overline{c} \in [0,\overline{c}_{LF}]} \mathcal{W}(\overline{c}).
\]
Since $\mathcal{W}(\overline{c})$ is continuously differentiable in $[0,\overline{c}_{LF}]$, by extreme value theorem, it admits an optimal solution $\overline{c}^*$. Ignoring the constraints on $\overline{c}$, this problem has a first-order necessary condition (equivalent to the transversality condition with a free terminal point): $W'(\overline{c}) = 0$. Despite that there is no closed-form $\overline{c}$ without adding more structure to the distributions,\footnote{
We have full characterization for the linear demand - uniform cost case in Appendix \ref{apx:linear-uniform}.
} standard comparison among the candidates of maxima (i.e., among the endpoints and the points characterized by the first-order necessary condition) yield an optimal solution $\overline{c}^*$. Using the Taxation Principle, we can construct a unit tax policy that implements the optimal direct mechanism.
We summarize the optimal mechanism in Lemma \ref{lem:main}.
\begin{lem} \label{lem:main}
Suppose that condition \eqref{lf_optimal} is violated, that $f$ is nonincreasing and log-concave, and that $P^{-1}$ is strictly log-concave. 
Then there exists some $\overline{k} > 0$, for each $\fcost \in [0, \overline{k})$,
the optimal regulation implements a demand schedule and a pricing strategy, $(q^*,p^*)$,  partitioned in at most four segments with respect to cost cutoffs $0 \leq c_{L}\leq \hat{c}\leq \overline{c} \leq 1$ and a benchmark price $\hat{p}> 0$.\begin{itemize}
    \item[(i)]  For $c\in [0, c_{L})$, the regulation induces laissez-faire demand and prices, i.e., $q^*(c) = q_{LF}(c)$ and $p^*(c) = P(q_{LF}(c))$.
    
    \item[(ii)] For $c\in [c_{L}, \hat{c})$, the regulation induces bunching at the benchmark price, i.e., $p^*(c) \equiv \hat{p}$ and $q^*(c) \equiv P^{-1}(\hat{p})$.
    
    \item[(iii)] For $c\in [\hat{c}, \overline{c}]$, the regulation induces a demand of $q^*(c) = P^{-1} \left(\phi(c; \overline{c})\right)$ and a price of $p^*(c) = c + \frac{\int_c^{\overline{c}} q^*(x)dx+ \fcost}{q^*(c)}$.
     In particular, the \emph{benchmark price} is $\hat{p} : = p^*(\hat{c})$.
    \item[(iv)] For $c\in (\overline{c},1]$, the firm is excluded from the market, i.e., $q^*(c) = 0$. 
\end{itemize}
Finally, the following unit tax policy implements the optimal mechanism: \[
\tau^*(p) = \begin{cases}
    0, & p \leq \hat{p}\\
    (P\circ q^*)[p^{*-1}(p)] - p, & p > \hat{p}
\end{cases}
\]
%where the generalized inverse is defined as $p^{*-1}(p) = \inf\{c \in [0,1]\colon p^*(c) = p\}$.
\end{lem}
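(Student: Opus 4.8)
The plan is to assemble the structural results already established into an explicit description of $(q^*, p^*)$ and then invoke the Taxation Principle to build the implementing tax. I would organize the argument around the two-step optimization: fix the exclusion cutoff $\overline{c} \in (0, \overline{c}_{LF}]$, characterize the truncated optimum using the optimality conditions \eqref{ctrl:lambda_pi}--\eqref{ctrl:fs}, and then reduce the global problem to a scalar optimization over $\overline{c}$. Existence and essential uniqueness come from Lemma \ref{lem:existence}, and continuity of the demand schedule on $[0,\overline{c})$ from Lemma \ref{lem:continuous}.

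First I would pin down the taxation region. By Lemma \ref{lem:sc}, a nonincreasing density yields a single cutoff $\hat{c}$ above which the no-subsidy constraint \eqref{controlFS} is slack, so by \eqref{ctrl:fs} we have $\gamma \equiv 0$ on $[\hat{c}, \overline{c}]$. Since $q$ is strictly decreasing there (established via the log-concavity arguments in Proposition \ref{prop_main}), condition \eqref{ctrl:control} forces $\lambda_q \equiv 0$, hence $\lambda_q' \equiv 0$. Substituting into \eqref{ctrl:lambda_q} with $\lambda_\Pi(c) = (1-\alpha)F(c) + \Gamma(\hat{c}) + \beta$ gives $P(q^*(c)) = \phi(c; \Gamma(\hat{c}))$. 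Re-expressing the cumulative multiplier through the exclusion condition $q(\overline{c})[P(q(\overline{c})) - \overline{c}] = k$ of Lemma \ref{lem:IR_jump} yields $q^*(c) = P^{-1}(\phi(c;\overline{c}))$, and the firm price follows from $\Pi(c) = \int_c^{\overline{c}} q^*(x)\,dx$ via $p^*(c) = c + (\Pi(c)+k)/q^*(c)$. This delivers segment (iii) and defines the benchmark $\hat{p} := p^*(\hat{c})$.

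Next I would characterize the binding region $[0,\hat{c})$. Here IC combined with the binding constraint forces, for each $c$, either $q^*(c) = q_{LF}(c)$ or $q'(c) = 0$ (the observation preceding Lemma \ref{lem:sc}). Because $q^*(\hat{c}) > q_{LF}(\hat{c})$ by Lemma \ref{lem:no-subsidy-behavior}, continuity (Lemma \ref{lem:continuous}) forces a flat segment abutting $\hat{c}$, along which the binding constraint gives $p^*(c) = P(q^*(c)) \equiv \hat{p}$; this is the bunching region (ii). To rule out a flat segment \emph{preceding} the laissez-faire one, I would invoke the costate boundary behavior: $\lambda_q(0_+) \geq 0$ together with \eqref{ctrl:lambda_q} requires $q^*(0) \geq q_{LF}(0)$, so the only admissible ordering is a laissez-faire segment followed by a flat segment, giving (i) with cutoff $c_L$ (possibly $c_L = 0$). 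Exclusion for $c > \overline{c}$ is immediate from Lemma \ref{lem:IR_jump}, establishing (iv).

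Finally, I would optimize the scalar welfare $\mathcal{W}(\overline{c})$ over $[0,\overline{c}_{LF}]$: continuous differentiability plus the extreme value theorem yield an optimal $\overline{c}^*$, with interior candidates satisfying the transversality (first-order) condition. For the tax, the Taxation Principle lets me set $\tau^*(p) = 0$ for $p \le \hat{p}$ and $\tau^*(p) = (P\circ q^*)[p^{*-1}(p)] - p$ for $p > \hat{p}$, which is well-defined since $p^*$ is strictly increasing on $[\hat{c},\overline{c}]$ (shown in Proposition \ref{prop_main}), and by construction reproduces the consumer price $P(q^*(c))$ at each firm price $p^*(c)$. The main obstacle I anticipate is verifying that this schedule implements the mechanism as a \emph{global} best response, i.e. that no firm type strictly prefers a price in a different region. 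This reduces to the progressivity of $\tau^*$ from Proposition \ref{prop_main}: the regulated inverse demand is more price-sensitive just above $\hat{p}$, so upward deviations are unprofitable, while the untaxed delegation region and the envelope/IC conditions already built into Lemma \ref{lem:opt_ctrl_reform} preclude downward and within-region deviations.
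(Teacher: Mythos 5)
Your proposal is correct and follows essentially the same route as the paper: the two-step optimization over the truncation $\overline{c}$, the Pontryagin conditions to derive $P(q^*(c))=\phi(c;\overline{c})$ on the slack region, the flat-or-laissez-faire dichotomy on the binding region with the costate boundary condition ruling out a flat segment before the laissez-faire one, and the Taxation Principle for implementation. Your closing remark on verifying global incentive compatibility of the tax schedule via progressivity is a sensible extra check that the paper leaves implicit, but it does not change the argument.
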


\newpage
\bibliographystyle{ecta}
\bibliography{reference}

\pagebreak

\section{Supplementary Materials}

\subsection{Closed-form solutions to the linear-uniform case}
\label{apx:linear-uniform}

We obtain closed form characterization regulation policy for the cases with linear demand curve and uniform cost distribution $F(c) = c \in [0,1]$. For expositional simplicity, set $\fcost = 0$, so the demand policy $q(c)$ is continuous on $[0,1]$ with $q(\overline{c}) = 0$. Let the inverse demand be $P(q) = A - Bq$ ($A \leq 1, A \leq 2B$)
Following the derivation in Appendix \ref{apx:main}, there are two steps: (1) Given a fixed $\overline{c}\in [0,A]$, derive the optimal mechanism candidate $(q(c; \hat{c}(\overline{c})), p(c; \hat{c}(\overline{c})))$ as a function of $\overline{c}$; (2) optimize over $\overline{c}$.

Starting with a fixed $\overline{c}$, we have \[
P(q(c;\hat{c}(\overline{c}))) = A - (2-\alpha)(\overline{c} - c) \text{ for } c\in [\hat{c}(\overline{c}), \overline{c}]
\]
because $c + (1-\alpha)F(c)/f(c) = (2-\alpha)c$. This implies \[
q(c;\hat{c}(\overline{c}))  = \frac{1}{B} (2-\alpha)(\overline{c} - c) \text{ for } c\in [\hat{c}(\overline{c}), \overline{c}]
\]
As a result, the value function (profit) is \[
\Pi(c;\hat{c}(\overline{c})) = \frac{1}{B} \int_c^{\overline{c}} q(x;\hat{c}(\overline{c}))dx = \frac{1}{2B} (2-\alpha)(\overline{c} - c)^2 \text{ for } c\in [\hat{c}(\overline{c}), \overline{c}]
\]
We choose $\hat{c}$ so that the no-subsidy constraint bind. That is, $\hat{c}(\overline{c})$ solves \[
\Pi(c;\hat{c}(\overline{c})) = q(c;\hat{c}(\overline{c}))P(q(c;\hat{c}(\overline{c}))) - cq(c;\hat{c}(\overline{c}))
\]
with respect to $c$, which leads to \[
\hat{c}(\overline{c}) = \frac{(2(2-\alpha)+1)\overline{c} - 2A}{2(2-\alpha) - 1}
\]
Note that depending on $A$, the choice of $\overline{c}$ needs to be large enough to guarantee $\hat{c}(\overline{c}) > 0$. Plugging $\hat{c}(\overline{c})$ into $q(c; \hat{c}(\overline{c}))$, we get \[
q(\hat{c}(\overline{c})) = \frac{2(2-\alpha)}{B} \frac{A - \overline{c}}{2(2-\alpha)-1}
\] 
The welfare as a function of $\overline{c}$ is calculated by \[
\mathcal{W}(\overline{c}) = \int_0^{1} W(q(c; \hat{c}(\overline{c})), \Pi(c; \hat{c}(\overline{c})))dc
\]
where \[
q(c;  \hat{c}(\overline{c})) = \begin{cases}
    \frac{2(2-\alpha)}{B} \frac{A - \overline{c}}{2(2-\alpha)-1},& c \in [0, \hat{c}(\overline{c})]\\
    \frac{1}{B}(2-\alpha)[\overline{c} - c]^+ & c \in [\hat{c}(\overline{c}), 1]
\end{cases}
\]
and \[
\Pi(c;  \hat{c}(\overline{c})) = \begin{cases}
    \frac{1}{2B}(2-\alpha)(\overline{c} - \hat{c}(\overline{c}))^2 + [\hat{c}(\overline{c}) - c]q(\hat{c}(\overline{c});\hat{c}(\overline{c})),& c\in [0, \hat{c}(\overline{c})]\\
    \frac{1}{2B}(2-\alpha)(\overline{c} - c)^2 \mathbbm{1}_{c \leq \overline{c}}, & c\in [\hat{c}(\overline{c}),1]
\end{cases}
\]
Evaluating the integral of $\mathcal{W}(\overline{c})$, we get a smooth function of $\overline{c}$. Optimizing with respect to $\overline{c}$, we get $\overline{c}^*\in [0,A]$. Substituting $\overline{c}$ with $\overline{c}^*$ in $q(c; \hat{c}(\overline{c}))$ and $\Pi(c; \hat{c}(\overline{c}))$, we get the optimal demand policy. The induced pricing strategy is given by $p(c; \hat{c}(\overline{c})) =c +  \Pi(c; \hat{c}(\overline{c}))/q(c; \hat{c}(\overline{c}))$ for $c < \overline{c}$. 

To be concrete (and to avoid case discussions without getting new insights), consider the case $A = B = 1$. Tedious but straightforward calculation leads to \[
\mathcal{W}(\overline{c};\alpha) = \frac{(1-\overline{c})(2-\alpha)(8\alpha - 10 - \overline{c}(2-\alpha) [-28 + 24\alpha + (23- 12\alpha(3-\alpha))\overline{c}])}{3(3 - 2\alpha)^3}
\]
i.e., a cubic function in $\overline{c}$. The optimal $\overline{c}(\alpha)$ has a complicated closed-form. Instead of reporting the general solution, we look at two extreme cases discussed in the main text, i.e., $\alpha = 0$ or $1$.
When $\alpha =0$, we have \[
\mathcal{W}(\overline{c};0) = \frac{4}{81}(1-\overline{c})^2(23\overline{c}-5) 
\]
which attains maximum at $\overline{c}^* = \frac{11}{23}$, which implies $\hat{c}(\overline{c}^*) = \frac{3}{23}$, the pricing strategy and demand schedule given by \[
p^*(c) = \begin{cases}
   \frac{7}{23},& \text{ if } c\in [0, \frac{3}{23}]\\
    \frac{1}{2}c + \frac{11}{46}, & \text{ if } c\in (\frac{3}{23},  \frac{11}{23}]\\
    1, & \text{ if } c\in (\frac{11}{23},1]
    \end{cases},\;
    q^*(c) = \begin{cases}
    \frac{16}{23},& \text{ if } c\in [0, \frac{3}{23}]\\
    \frac{22}{23} - 2c, & \text{ if } c\in (\frac{3}{23},  \frac{11}{23}]\\
    0, & \text{ if } c\in (\frac{11}{23},1]
    \end{cases}
\]
as well as the consumer price\[
P(q^*(c)) = 1 - q^*(c) = \begin{cases}
   \frac{7}{23},& \text{ if } c\in [0, \frac{3}{23}]\\
    2c + \frac{1}{23}, & \text{ if } c\in (\frac{3}{23},  \frac{11}{23}]\\
    1, & \text{ if } c\in (\frac{11}{23},1]
    \end{cases}
\]

Therefore, a version of unit tax policy that implements the optimal demand and price is
\[
\tau(p) = P(q^*(p^{*-1}(p))) - p = \begin{cases}
    0& \text{ if } p  \leq \frac{7}{23} \\
    3p - \frac{21}{23} & \text{ if } p \in (\frac{7}{23}, \frac{11}{23}] \\
    \infty, & \text{ if } p >  \frac{11}{23}
    \end{cases}
\]

On the other hand, when $\alpha = 1$, \[
\mathcal{W}(\overline{c}; 1) = \frac{1}{3}(1-\overline{c})(\overline{c}^2 + 4\overline{c} - 2)
\]
which attains maximum at $\overline{c}^* = \sqrt{3} - 1$, which implies $\hat{c}(\overline{c}^*) = 3\sqrt{3} - 5$, the pricing strategy and demand schedule given by \[
p^*(c) = \begin{cases}
   2\sqrt{3} - 3,& \text{ if } c\in [0, 3\sqrt{3} - 5]\\
    \frac{1}{2}c + \frac{\sqrt{3}-1}{2}, & \text{ if } c\in (3\sqrt{3} - 5,  \sqrt{3} - 1]\\
    1, & \text{ if } c\in (\sqrt{3} - 1,1]
    \end{cases},\;
    q^*(c) = \begin{cases}
    4 - 2\sqrt{3},& \text{ if } c\in [0, 3\sqrt{3} - 5]\\
    \sqrt{3} - 1 - c, & \text{ if } c\in (3\sqrt{3} - 5,  \sqrt{3} - 1]\\
    0, & \text{ if } c\in (\sqrt{3} - 1,1]
    \end{cases}
\]
as well as the consumer price\[
P(q^*(c)) = 1 - q^*(c) = \begin{cases}
   2\sqrt{3} - 3,& \text{ if } c\in [0, 3\sqrt{3} - 5]\\
    2 - \sqrt{3} + c, & \text{ if } c\in (3\sqrt{3} - 5,  \sqrt{3} - 1]\\
    1, & \text{ if } c\in (\sqrt{3} - 1,1]
    \end{cases}
\]

Therefore, a version of unit tax policy that implements the optimal demand and price is
\[
\tau(p) = P(q^*(p^{*-1}(p))) - p = \begin{cases}
    0& \text{ if } p  \leq 3\sqrt{3} - 5 \\
    2p + 3 - 2\sqrt{3} & \text{ if } p \in (3\sqrt{3} - 5, \sqrt{3} - 1] \\
    \infty, & \text{ if } p > \sqrt{3} - 1
    \end{cases}
\]

\subsection{A numerical example where laissez-faire prices are induced under progressive price cap}\label{apx:normal}
 Suppose that the firm's cost $c\in[0,1]$ follows a  (truncated) normal distribution\footnote{
This is a distribution generated by $N(0.5, 0.01)$ conditional on assuming values in $[0,1]$. For illustration purpose, we use the normal approximation (unconditional normal distribution $F\sim N(0.5, 0.01)$) for this example with small $\sigma$. Indeed, for $N(0.5, 0.01)$, the tail probabilities are ignorable as $1-\Pr(c \geq 1) = \Pr(c\leq 0) < 10^{-6}$. 
} with mean $0.5$ and variance $0.01$. For concreteness, fix the welfare weight $\alpha = 0.1$ and fixed cost $\fcost = 0$.
This example demonstrates a case where most cost types set price near the benchmark, with extremely low-cost types enjoying laissez-faire pricing and extremely high-cost types excluded from the market.
The optimal regulation and induced pricing strategy are visualized in Figure \ref{fig:normal}. 

\begin{figure}[htbp]
    \begin{center}
               \begin{tikzpicture}[scale=5.5]
        \footnotesize
             %  %% %% %% Left Panel %% %% %% %% %% %
            \draw [->, thick,gray] (0,0) -- (0,1.2);
            \draw [->, thick,gray] (0,0) -- (1.2,0);

             \draw[-, very thick, dotted] (0,1) -- (1,0);

             \draw[-, nberblue, thick, dashed] (0, 0.552) -- (1-0.552, 0.552);
             \node at (-0.05, 0.552) {$\hat{p}$};

             \node at (0.16, 0.65) {taxation};
             \node at (0.18, 0.3) {delegation};

             \draw[-, ultra thick, nberblue] (0,0.609) -- (0, 1); 
             \draw[-, ultra thick, nberblue] (0,0.609) [out=-10 , in = 155] to (1-0.552, 0.552);
             \draw[-, ultra thick, nberblue] (1,0) -- (1-0.552, 0.552);
             \node at (-0.1, 0.8) {{\color{nberblue}$P_\tau(q)$}};

             \node at (-0.05, 1) {$1$};
             \node at (1, -0.05) {$1$};
            
            \node at (-0.05, 1.2) {$p$};
           
            \node at (1.22, -0.05) {$q$};
            \node at (0.15, 0.95) {$P(q)$};

            %  %% %% %% Right Panel %% %% %% %% %% %
            \draw [->, thick,gray] (1.5,0) -- (1.5,1.2);
            \draw [->, thick,gray] (1.5,0) -- (2.7,0);

            \draw [-, gray, dashed] (1.5,0) -- (2.5,1);

            \node at (1.45, 1) {$1$};
            \node at (1.44, 0.48) {$0.5$};
            
            \node at (1.45, 1.2) {$p$};
          
             \draw [-, dashed, gray] (1.5,1) -- (2.5,1);

            \fill[uncblue!20] (1.5,0.5) -- (1.5 + 0.104, 0.5515) -- (1.5 + 0.104, 1) -- (1.5, 1); 
             \fill[uncblue!20]  (1.5 + 0.104, 0.5515) -- (1.5 + 0.46, 0.5515) -- (1.5 + 0.46, 1) -- (1.5 + 0.104, 1); 
               \fill[uncblue!20] (1.5 + 0.46, 0.5515) [out=55, in=265] to (1.5 + 0.609, 1) -- (1.5 + 0.46, 1); 
              
            \draw [-, nberblue!60, very thick, dashed] (0.46+1.5,0.5515) [out=55, in=265] to (0.609+1.5,1);

            \node at (2.1, 1.05) {${\color{nberblue}P[q^*(c)]}$};
            
            \node at (1.5 + 0.105, -0.05) {$c_{L}$};
            \node at (1.5 + 0.46, -0.05) {$\hat{c}$};
            \node at (2.72, -0.05) {$c$};
            \node at (1.5+0.609, -0.05) {$\overline{c}$};

            \draw[gray, dashed] (1.5+0.609, 0) -- (1.5+0.609, 0.609);

            \draw [-, ultra thick] (1.5 + 0.104, 0.5515) -- (0.46+1.5, 0.5515);
            \draw [-, ultra thick] (0.46+1.5, 0.5515) [out=10, in=220] to (0.609+1.5, 0.609);
            \draw [-, gray!70, dashed] (1.5 + 0.104, 0.5515) -- (1.5, 0.5515);
            \node at (2.2, 0.6) {$p^*(\cdot)$};
             \draw [-, dashed, gray] (1.5 +0.104, 0) -- (1.5 + 0.104, 0.5515);
            \draw [-, dashed, gray] (1.5 +0.46, 0) -- (1.5 + 0.46, 0.5515);
            \draw[-, ultra thick] (1.5,0.5) -- (1.5 + 0.104, 0.5515);

            \draw [-, thick, purple, dotted] (1.5, 0.5) -- (2.5, 1);
            \node at (2.6, 0.95) {${\color{pennred}P[q_{LF}(\cdot)]}$};
           
            \draw [-, thick, purple!60, dotted] (1.5, 0.5) -- (2.5, 1);
           
             \node at (1.42, 0.552) {$\hat{p}$};
            
            \fill[pennred, opacity = 0.2] (1.5+0.46, 0.5515) [out=55, in=265] to (1.5 + 0.609, 1) --(1.5 + 0.609, 0.609) [out = 220, in = 10] to (1.5 + 0.46, 0.5515); 
            \node at (0.75+1.5, 0.4) {\textbf{tax}}; 
            \draw[-] (2.05, 0.65) -- (2.18, 0.43);
        \end{tikzpicture}
    \end{center}
    \caption{\footnotesize Firm's inverse demand function $P_\tau(q)$ (left) and the induced pricing strategy $p^*(c)$ (right). $P(q) = 1 - q$, $c \sim N(0.5, 0.01)$, $\alpha = 0.1$.}
    \label{fig:normal}
\end{figure}
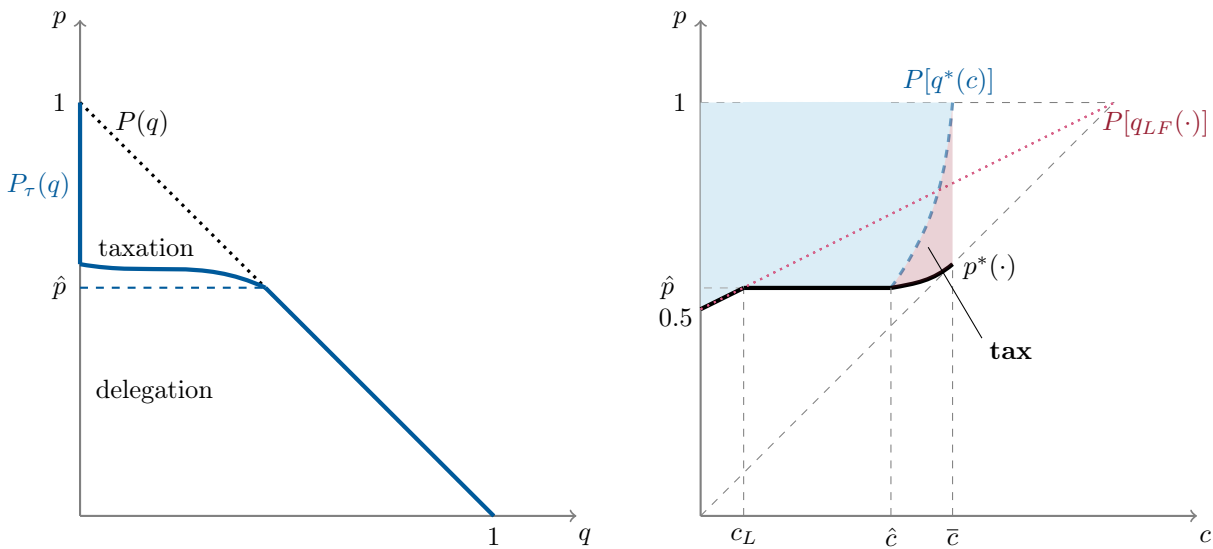

\end{document}